\documentclass[pra,aps]{revtex4}
\usepackage{amsmath,amsfonts,amssymb,caption,hyperref,color,epsfig,graphics,graphicx,latexsym,mathrsfs,revsymb,theorem,url,verbatim,epstopdf,subcaption}
\allowdisplaybreaks[3]

\hypersetup{colorlinks,linkcolor={blue},citecolor={blue},urlcolor={red}}

\usepackage{hyperref}

\makeatletter

\newcommand{\Rmnum}[1]{\expandafter\@slowromancap\romannumeral #1@}
\makeatother
\newtheorem{definition}{Definition}
\newtheorem{proposition}[definition]{Proposition}
\newtheorem{Lemma}[definition]{Lemma}

\newtheorem{Theorem}[definition]{Theorem}
\newtheorem{Corollary}[definition]{Corollary}
\newtheorem{conjecture}[definition]{Conjecture}

\newtheorem{remark}[definition]{Remark}

\newtheorem{example}{Example}
\newtheorem{question}[definition]{Question}

\def\squareforqed{\hbox{\rlap{$\sqcap$}$\sqcup$}}
\def\qed{\ifmmode\squareforqed\else{\unskip\nobreak\hfil
		\penalty50\hskip1em\null\nobreak\hfil\squareforqed
		\parfillskip=0pt\finalhyphendemerits=0\endgraf}\fi}
\def\endenv{\ifmmode\;\else{\unskip\nobreak\hfil
		\penalty50\hskip1em\null\nobreak\hfil\;
		\parfillskip=0pt\finalhyphendemerits=0\endgraf}\fi}
\newenvironment{proof}{\noindent \textbf{{Proof.~} }}{\qed}
\def\Dbar{\leavevmode\lower.6ex\hbox to 0pt
	{\hskip-.23ex\accent"16\hss}D}
\makeatletter
\def\url@leostyle{%
	\@ifundefined{selectfont}{\def\UrlFont{\sf}}{\def\UrlFont{\small\ttfamily}}}
\makeatother
\def\bcj{\begin{conjecture}}
	\def\ecj{\end{conjecture}}
\def\bcr{\begin{corollary}}
	\def\ecr{\end{corollary}}
\def\bd{\begin{definition}}
	\def\ed{\end{definition}}
\def\bea{\begin{eqnarray}}
	\def\eea{\end{eqnarray}}
\def\bem{\begin{enumerate}}
	\def\eem{\end{enumerate}}
\def\bex{\begin{example}}
	\def\eex{\end{example}}
\def\bim{\begin{itemize}}
	\def\eim{\end{itemize}}
\def\bl{\begin{lemma}}
	\def\el{\end{lemma}}
\def\bma{\begin{bmatrix}}
	\def\ema{\end{bmatrix}}
\def\bpf{\begin{proof}}
	\def\epf{\end{proof}}
\def\bpp{\begin{proposition}}
	\def\epp{\end{proposition}}
\def\bqu{\begin{question}}
	\def\equ{\end{question}}
\def\br{\begin{remark}}
	\def\er{\end{remark}}
\def\bt{\begin{theorem}}
	\def\et{\end{theorem}}

\def\btb{\begin{tabular}}
	\def\etb{\end{tabular}}

\newcommand{\nc}{\newcommand}

\def\a{\alpha}

\def\d{\delta}
\def\e{\epsilon}

\def\m{\mu}

\def\r{\rho}
\def\s{\sigma}

\nc{\bbA}{\mathbb{A}} \nc{\bbB}{\mathbb{B}} \nc{\bbC}{\mathbb{C}}
\nc{\bbD}{\mathbb{D}} \nc{\bbE}{\mathbb{E}} \nc{\bbF}{\mathbb{F}}
\nc{\bbG}{\mathbb{G}} \nc{\bbH}{\mathbb{H}} \nc{\bbI}{\mathbb{I}}
\nc{\bbJ}{\mathbb{J}} \nc{\bbK}{\mathbb{K}} \nc{\bbL}{\mathbb{L}}
\nc{\bbM}{\mathbb{M}} \nc{\bbN}{\mathbb{N}} \nc{\bbO}{\mathbb{O}}
\nc{\bbP}{\mathbb{P}} \nc{\bbQ}{\mathbb{Q}} \nc{\bbR}{\mathbb{R}}
\nc{\bbS}{\mathbb{S}} \nc{\bbT}{\mathbb{T}} \nc{\bbU}{\mathbb{U}}
\nc{\bbV}{\mathbb{V}} \nc{\bbW}{\mathbb{W}} \nc{\bbX}{\mathbb{X}}
\nc{\bbZ}{\mathbb{Z}}

\nc{\bA}{{\bf A}} \nc{\bB}{{\bf B}} \nc{\bC}{{\bf C}}
\nc{\bD}{{\bf D}} \nc{\bE}{{\bf E}} \nc{\bF}{{\bf F}}
\nc{\bG}{{\bf G}} \nc{\bH}{{\bf H}} \nc{\bI}{{\bf I}}
\nc{\bJ}{{\bf J}} \nc{\bK}{{\bf K}} \nc{\bL}{{\bf L}}
\nc{\bM}{{\bf M}} \nc{\bN}{{\bf N}} \nc{\bO}{{\bf O}}
\nc{\bP}{{\bf P}} \nc{\bQ}{{\bf Q}} \nc{\bR}{{\bf R}}
\nc{\bS}{{\bf S}} \nc{\bT}{{\bf T}} \nc{\bU}{{\bf U}}
\nc{\bV}{{\bf V}} \nc{\bW}{{\bf W}} \nc{\bX}{{\bf X}}
\nc{\bZ}{{\bf Z}}

\nc{\cA}{{\cal A}} \nc{\cB}{{\cal B}} \nc{\cC}{{\cal C}}
\nc{\cD}{{\cal D}} \nc{\cE}{{\cal E}} \nc{\cF}{{\cal F}}
\nc{\cG}{{\cal G}} \nc{\cH}{{\cal H}} \nc{\cI}{{\cal I}}
\nc{\cJ}{{\cal J}} \nc{\cK}{{\cal K}} \nc{\cL}{{\cal L}}
\nc{\cM}{{\cal M}} \nc{\cN}{{\cal N}} \nc{\cO}{{\cal O}}
\nc{\cP}{{\cal P}} \nc{\cQ}{{\cal Q}} \nc{\cR}{{\cal R}}
\nc{\cS}{{\cal S}} \nc{\cT}{{\cal T}} \nc{\cU}{{\cal U}}
\nc{\cV}{{\cal V}} \nc{\cW}{{\cal W}} \nc{\cX}{{\cal X}}
\nc{\cZ}{{\cal Z}}

\nc{\hA}{{\hat{A}}} \nc{\hB}{{\hat{B}}} \nc{\hC}{{\hat{C}}}
\nc{\hD}{{\hat{D}}} \nc{\hE}{{\hat{E}}} \nc{\hF}{{\hat{F}}}
\nc{\hG}{{\hat{G}}} \nc{\hH}{{\hat{H}}} \nc{\hI}{{\hat{I}}}
\nc{\hJ}{{\hat{J}}} \nc{\hK}{{\hat{K}}} \nc{\hL}{{\hat{L}}}
\nc{\hM}{{\hat{M}}} \nc{\hN}{{\hat{N}}} \nc{\hO}{{\hat{O}}}
\nc{\hP}{{\hat{P}}} \nc{\hR}{{\hat{R}}} \nc{\hS}{{\hat{S}}}
\nc{\hT}{{\hat{T}}} \nc{\hU}{{\hat{U}}} \nc{\hV}{{\hat{V}}}
\nc{\hW}{{\hat{W}}} \nc{\hX}{{\hat{X}}} \nc{\hZ}{{\hat{Z}}}

\nc{\hn}{{\hat{n}}}

\def\max{\mathop{\rm max}}
\def\min{\mathop{\rm min}}

\def\rank{\mathop{\rm rank}}

\def\supp{\mathop{\rm supp}}

\newcommand{\bra}[1]{\langle#1|}
\newcommand{\ket}[1]{|#1\rangle}

\def\Dbar{\leavevmode\lower.6ex\hbox to 0pt
	{\hskip-.23ex\accent"16\hss}D}
\begin{document}
	\title{ Quantum Probabilistic Local Differential Privacy: Structural Properties and Sample Complexity Bounds}
	
	%\author{Xian Shi}\email[]
	%	{shixian01@gmail.com}
	%	\affiliation{College of Information Science and Technology,
		%	Beijing University of Chemical Technology, Beijing 100029, China}
	
	\author{Xian Shi}\email[]
{shixian01@gmail.com}
\affiliation{College of Information Science and Technology,
	Beijing University of Chemical Technology, Beijing 100029, China}

%\author{Yi Shen}\email[]
%{yishen@buaa.edu.cn}
%\affiliation{School of Mathematics and Systems Science, Beihang University, Beijing 100191, China}
%
%\author{Yize Sun}
%\affiliation{School of Mathematics and Systems Science, Beihang University, Beijing 100191, China}

%\author{Lijun Zhao}
%\affiliation{School of Mathematics and Systems Science, Beihang University, Beijing 100191, China}

%\author{Yumin Guo}
%\affiliation{School of Mathematical Sciences, Capital Normal University, Beijing 100048, China}

\date{\today}
\begin{abstract}
Differential privacy provides a rigorous framework for quantifying privacy leakage in data analysis, while its quantum extensions have become increasingly relevant with the development of quantum computing and quantum machine learning. In this work, we introduce and study quantum probabilistic local differential privacy, a relaxation of quantum local differential privacy in which the privacy constraint is allowed to fail on a spectral violation event with low probability.  This quantity can be interpreted as the probability under the quantum superoperation of a quantum privacy-loss violation, and is closely related to the acceptance probability of the quantum Neyman–Pearson test at a small threshold.

We investigate the basic structural properties of this privacy notion and clarify its relationship with existing forms of quantum differential privacy. We show the properties of quantum probabilistic local differential privacy under tensor-product composition and unitary post-processing, while it is in general neither convex nor closed under post-processing by arbitrary quantum channels. We further characterize when depolarizing noise satisfies quantum probabilistic local differential privacy under several representative scenarios. Finally, we connect quantum probabilistic privacy constraints with statistical inference by deriving a lower bound on probabilistically privatized contraction coefficients in terms of the hockey-stick divergence. As an application, we obtain sample complexity bounds of probabilistically privated asymmetric and symmetric quantum hypothesis testing. These results provide a systematic foundation for studying probabilistic privacy guarantees in quantum information processing and their operational consequences for private quantum statistical inference.

\end{abstract}

\pacs{03.65.Ud, 03.67.Mn}
\maketitle

	\section{Introduction}
	The rapid growth of data and the remarkable progress of data-driven artificial intelligence have made privacy protection throughout the entire data lifecycle increasingly important. Differential privacy (DP) provides a mathematically rigorous and quantifiable framework for controlling privacy leakage when queries are performed on sensitive data  \cite{dwork2008differential,dwork2014}. Various extensions of DP have been proposed \cite{mironov2017renyi,cormode2018privacy,meiser2018approximate,wang2020comprehensive,balle2020hypothesis}. Among them, probabilistic differential privacy relaxes the standard privacy requirement by demanding that the prescribed privacy guarantee hold with high probability, while allowing it to fail on a suﬀiciently small set of outcomes.
	
With the rapid development of quantum computing and quantum machine learning, privacy protection in quantum information processing has become an increasingly important research topic. Inspired by classical differential privacy, quantum differential privacy has been introduced and investigated in  \cite{zhou2017differential,aaronson2019gentle,hirche2023quantum,guan2023detecting,angrisani2025quantum} inspired by the classical DP. Operationally, quantum differential privacy requires that the outputs of a quantum superoperator corresponding to neighboring input states be difficult to distinguish. Several variants of quantum differential privacy have also been proposed, including quantum local differential privacy (QLDP) \cite{hirche2023quantum} and quantum pufferfish privacy provide other variants of quantum differential privacy \cite{nuradha2024quantum}. QLDP requires privacy guarantees to hold uniformly over all admissible pairs of input quantum states. Applications of these quantum privacy notions to quantum machine learning have recently attracted considerable attention \cite{du2021quantum,watkins2023quantum,farokhi2023quantum,chen2024robust,rofougaran2024federated,nam2025quantum,yoshida2025mathematical,nuradha2026privacy}.  To the best of our knowledge, however, a systematic study of quantum probabilistic differential privacy has not yet been developed. In this work, we will consider quantum probabilistic local differential privacy (QPrLDP) from the perspective of quantum information. 
	
	Classical probabilistic differential privacy means that the probability of privacy loss exceeding a given threshold is at most $\delta$. 
	Motived by this construction, we characterize a superoperation $\mathcal{A}$ with $(\e,\delta)$-quantum probabilistic differential privacy as follows, $\mathrm{tr}\mathcal{A}(\rho)\{\mathcal{A}(\rho)-e^{\e}\mathcal{A}(\sigma)> 0\}\le \delta.$ This can be seen as the total probability of the corresponding spectral violation event on $\mathcal{A}(\rho)$. It also provides the acceptance probability on $\mathcal{A}(\rho)$ of the quantum Neyman-Pearson test at threshold $e^{\e}$. In quantum algorithm, one of the important problem is on the sample complexity of statistical inference under privacy constraints \cite{cheng2024sample,cheng2025invitation}. 
	Broadly speaking, sample complexity quantifies the minimum number of samples required to accomplish a prescribed task with a given accuracy or distinguishability guarantee. Previously, the authors in \cite{nuradha2025contraction} derived bounds on the sample complexity of quantum hypothesis testing under quantum local differential privacy. In contrast,  less is known about the corresponding sample complexity under QPrLDP.

	Our main contributions are summarized as follows.
	\begin{itemize}
		\item We introduce the quantum probabilistic local differential privacy and clarify its relationship with other existing notions of quantum differential privacy.
		\item  We establish several structural properties of QPrLDP. We study its behavior under tensor-product composition and post-processing by unitary operations. We also show that, in contrast to standard differential privacy, the set of QPrLDP channels is in general not convex and is not closed under post-processing by arbitrary quantum channels.
		\item We obtain the conditions when the depolarizing noise is QPrLDP under three scenarios.
	
		\item We connect QPrLDP with private quantum statistical inference. We derive a lower bound on the probabilistically privatized contraction coefficient in terms of the hockey-stick divergence. Using this result, we obtain sample complexity bounds for symmetric quantum hypothesis testing under QPrLDP.
	\end{itemize} 
	
	This work is organized as follows. In Section \ref{II}, we present the preliminary knowledge needed. In Section \ref{III}, we present the properties of QPrLDP under tensor product operations and post-processing with unitary operations. In Section \ref{IV}, we present the relations between QPrLDP and other variants of quantum local differential privacy. In Section \ref{V}, we pressent the conditions when the depolarizing noise is QPrLDP under three scenarios.
	In Section \ref{VI}, we obtain the sample complexity of $n$-outcome symmetric hypothesis testing with a bounded error under QPrLDP. In Section \ref{VII}, we ends the manuscript with a conclusion.

\section{Preliminary Knowledge}\label{II}
Assume $\mathcal{H}$ is a Hilbert space with finite dimensions, a unitary matrix $U$ on $\mathcal{H}$ is a matrix with $UU^{\dagger}=U^{\dagger}U=I$, here we denote $\mathbf{U}_{\mathcal{H}}$ as the set of all unitary matrices. A matrix $K$ on $\mathcal{H}$ is Hermitian if $K=K^{\dagger}.$ A matrix $M$ on $\mathcal{H}$ is positive semidefinite if 
 \begin{align*}
 	\bra{\phi}M\ket{\phi}\ge 0\hspace{4mm}\ket{\phi}\in \mathcal{H}.
 \end{align*}
 here we denote $\mathbf{Pos}_{\mathcal{H}}$ as the set of all positive semidefinite matrices. If $\rho$ is a positive semidefinite matrix with $\mathrm{tr}\rho=1,$ then $\rho$ is a state, and we denote the set of all states in $\mathcal{H}$ as $\mathcal{D}_{\mathcal{H}}.$  
  Assume $K=\sum_i \lambda_i\ket{i}\bra{i}$ is a Hermitian, then $K_{+}=\sum_{i\in \{l|\lambda_l\ge 0\}}\lambda_i\ket{i}\bra{i}$, and ${\{K\ge 0\}}=\sum_{i \in \{l|\lambda_l\ge 0\}}\ket{i}\bra{i}$.

	A quantum superoperator (channel) can be modeled by a linear map $\mathcal{E}$ from $\mathcal{D}_{\mathcal{H}_1}$ to $\mathcal{D}_{\mathcal{H}_2}$ which satisfies the following propreties: 
	\begin{itemize}
		\item[(i)] trace-preserving: $$\mathrm{tr}\mathcal{E}(\rho)=\mathrm{tr}\rho,\hspace{3mm} \forall\rho\in \mathcal{D}_{\mathcal{H}}.$$
		\item[(ii)] completely positive: assume $\mathcal{H}_0$ is an arbitrary Hilbert space, $id_{\mathcal{H}_0}$ is the identity map, 
		\begin{align*}
		[	id_{\mathcal{H}_0}\otimes\mathcal{E}](\rho)\ge 0,\hspace{3mm}\forall \rho\in \mathcal{D}_{\mathcal{H}_0\otimes\mathcal{H}_1}.
		\end{align*}
	\end{itemize}

	Here we denote the set of all superoperators on $\mathcal{H}$ as $CPTP_{\mathcal{H}}$. When no ambiguity arises, the set may be abbreviated as CPTP.
Besides, a superoperator can be characterized under the Kraus matrix form: there exists a set of matrices $\{E_k\}_k$ on $\mathcal{H}$ with $\sum_k E_k^{\dagger}E_k=\mathbb{I}$ such that 
\begin{align*}
	\mathcal{E}(\rho)=\sum_k E_k\rho E_k^{\dagger}.
\end{align*}
Here $\{E_k\}_k$ is called the Kraus matrices of $\mathcal{E}$. 
	Next we present an important type of quantum noise, the depolarizing channel.
\begin{example}
	Assume $\mathcal{H}$ is a Hilbert space with $dim(\mathcal{H})=d,$ the depolarizing channel is defined as follows,
	\begin{align*}
		\mathcal{D}_p(\rho)=(1-p)\rho+p\frac{I}{d}.
	\end{align*}
	Due to the Kraus matrix norm of $\mathcal{D}_p(\cdot)$, the state is replaced by the completely mixed state with probability $p$, while is remained unchanged with probability $1-p.$
\end{example}

	A quantum measurement can be characterized by a set of positive semi-definite operators $\mathsf{M}=\{M_i\}_{i\in \mathcal{O}}$ with $\sum_i M_i=\mathbb{I}$, $\mathcal{O}$ is the set of measurement outcomes. Here we denote the set of all POVMs as $\mathbf{POVM}.$ If the quantum state before the measurement is $\rho$, then the measurement outcome is $k$ with probability
	\begin{align*}
		p_k=\mathrm{ tr}M_k\rho.
	\end{align*}
A channel $\mathcal{M}$ generated by the POVM $\mathsf{M}=\{M_i\}_{i\in \mathcal{O}}$ is defined as 
\begin{align*}
	\mathcal{M}(\cdot)=\sum_{i\in \mathcal{O}}\mathrm{tr}[M_i(\cdot) ]\ket{i}\bra{i}.
\end{align*} 
	\subsection{Quantum divergences}
	
	In this subsection, we recall the quantum divergences that will be used here. Assume $\rho$ and $\sigma$ are two states, then the trace distance between $\rho$ and $\sigma$ is defined as follows,
	\begin{align*}
		T(\rho||\sigma)=\frac{1}{2}||\rho-\sigma||_1=\mathrm{ tr}(\rho-\sigma)_{+},
	\end{align*}

The quantum hockey-stick divergence, which can be regarded as generalizations of the trace distance, is defined as follows,
\begin{align*}
	E_{\gamma}(\rho||\sigma)=\mathrm{ tr}(\rho-\gamma\sigma)_{+},
\end{align*}	
when $\gamma=1$, the quantum hockey-stick divergence turned into the trace distance.

The quantum relative entropy between two states $\rho$ and $\sigma$ is defined as follows,
\begin{align*}
	D(\rho||\sigma)=\begin{cases}
Tr[\rho(\log\rho-\log\sigma)] \hspace{3mm}  &\supp(\rho)\subseteq \supp{\sigma}\\
+\infty\hspace{3mm}& otherwise
	\end{cases}
\end{align*}

The quantum sandwiched Renyi-$\alpha$ relative entropies is another quantity to address the differences between two states $\rho$ and $\sigma$ \cite{Müller2013}. When $\alpha>1$, it is defined as follows 
\begin{align*}
	\tilde{D}_{\alpha}(\rho||\sigma)=\begin{cases}
		\frac{1}{\alpha}\log Q_{\a}(\rho||\sigma),\hspace{3mm}  &\supp(\rho)\subseteq \supp{\sigma},\\
		+\infty\hspace{3mm}& otherwise,
	\end{cases}
\end{align*}
here $Q_{\a}(\rho||\sigma)=\mathrm{ tr}(\sigma^{\frac{1-\a}{2\a}}\rho\sigma^{\frac{1-\a}{2\a}})^{\a}$, when $\a\rightarrow 1$, the quantum sandwiched Renyi-$\alpha$ relative entropy tends to the quantum relative entropy. The sandwiched Renyi-$\a$ relative entropy satisfies additivity property \cite{Müller2013} and the data-processing property when $\alpha\ge 1$ \cite{beigi2013}. 

Assume there are two quantum states $\rho$ and $\sigma$, Alice is to guess which state is with the use of two-outcome POVM $M=\{(M,I-M)|I\ge M\ge 0\}$, here $M$ and  $I-M$ correponds to $\rho$ and $\sigma$, respectively. There are two types of mistake, 
\begin{itemize}
	\item \emph{Type I Error:} the outcome is 1 while the state is $\rho$,
	\begin{align*}
		\alpha(M)=\mathrm{tr}\rho(I-M).
	\end{align*}
	\item \emph{Type II Error:} the outcome is 0 while the state is $\sigma$,
	\begin{align*}
		\beta(M)=\mathrm{tr}\sigma M.
	\end{align*}
\end{itemize}

Asymmetry  quantum hypothese testing is to minimize the probability of \emph{Type II Error} under a bounded \emph{Type I Error}. Specificially, 
\begin{align*}
	\beta_{\vartheta}(\rho||\sigma)=&\min_M\mathrm{tr}M\sigma\\
	\textit{s. t.}\hspace{3mm}&\mathrm{ tr}(I-M)\rho\le \vartheta,\\
	&0\le M\le I.
\end{align*}

Next we recall the definition of the sample complexity of asymmetric binary hypothesis testing. Assume $\vartheta,\varphi\in [0,1]$, and let $\rho$ and $\sigma$ be two states. The sample complexity $ASC(\r,\s,\vartheta,\varphi)$ is defined as follows,
\begin{align*}
	ASC(\r,\s,\vartheta,\varphi)=\inf\{n\in\mathbb{N}|\beta_{\vartheta}(\r^{\otimes n}||\s^{\otimes n})\le \varphi\}.
\end{align*}

At last, we recall the definiton of symmetric quantum hypothesis testing. Assume there are two states $\rho$ and $\sigma$ with the probability of occurrence $p$ and $q$ with $p+q=1$, respectively, then the optimal error probability of symmetric quantum hypothesis testing is 
\begin{align}
	p_e(\rho,\sigma,p,1-p)=&\min_{I\ge M\ge 0}\{p\mathrm{tr}M\rho+q\mathrm{ tr}(I-M)\sigma\}\\
	=&\frac{1}{2}(1-||p\rho-q\sigma||_1),\label{sypt}
\end{align}
the last equality is due to the Helstrom-Holevo theorem \cite{helstrom1969quantum,holevo1973statistical}.

 The sample complexity of symmetric binary hypothesis testing between two states $\rho$ and $\sigma$ with prior probabilities $p$ and $q$, respectively, can be defined as follows \cite{cheng2025invitation},
\begin{align}
	SC(\alpha,\r,\s,p,q)=\min \{n\in\mathbb{N}|\frac{1}{2}(1-||p\rho^{\otimes n}-q\sigma^{\otimes n}||_1)\le \alpha\}. \label{scsht}
\end{align}
Assume $(\r_i)_{i=1}^m$ is a turple of quantum states with its prior probabilities $(p_i)_{i=1}^m$, correspondingly, then the sample complexity of symmetrical $m$-outcomes hypothesis testing for $((\r_i)_{i=1}^m,(p_i)_{i=1}^m)$ is defined as \cite{cheng2025invitation}
\begin{align*}
	SC(\a,(\rho_i)_{i=1}^m,(p_i)_{i=1}^m)=\inf\{n\in \mathbb{N}|p_e((\rho_i)_{i=1}^m,(p_i)_{i=1}^m)\le\a\},
	\end{align*}
	here \begin{align*}
	p_e((\rho_i)_{i=1}^m,(p_i)_{i=1}^m)=\inf_{\Lambda_1^{(n)},\cdots,\Lambda_m^{(n)}}\sum_{i=1}^m p_i\mathrm{ tr}[(I-\Lambda_i^{(n)})(\mathcal{A}(\rho_i))^{\otimes n}],
	\end{align*}
	where $\Lambda_i,$ $i=1,2,\cdots,m$ takes over all the semidefinite operators with $\sum_{i=1}^m\Lambda_i^{(n)}=I^{\otimes n}$.

Recently, the authors in \cite{cheng2025invitation} obtained bounds of sample complexity of asymetric hypothesis testing and symetric hypothesis testing for a couple of states.
\begin{Lemma}\cite{cheng2025invitation}\label{asyqhtc}
	Assume $\e,\d\in(0,1)$, and $\rho$ and $\s$ are states. Let $\gamma>1$ such that $\tilde{D}_{\gamma}(\rho||\sigma)<\infty$ and $\tilde{D}_{\gamma}(\sigma||\rho)<\infty$. Then
	\begin{align*}
		\max\{\sup_{\a\in(1,\gamma]}\left(\frac{\ln\frac{(1-\e)^{\a^{'}}}{\d}}{\tilde{D}_{\alpha}(\rho||\sigma)}\right),\sup_{\a\in(1,\gamma]}\left(\ln\frac{\frac{(1-\d)^{\a^{'}}}{\e}}{\tilde{D}_{\alpha}(\sigma||\rho)}\right)\}\le ASC(\r,\s,\e,\d)\le \min\{\lceil\inf_{\a\in (0,1)}\left(\frac{\ln\frac{\e^{\a^{'}}}{\delta}}{D_{\a}(\r||\s)}\right)\rceil,\lceil\inf_{\a\in (0,1)}\left(\frac{\ln\frac{\d^{\a^{'}}}{\e}}{D_{\a}(\r||\s)}\right)\rceil\},
	\end{align*}
	where $\a^{'}=\frac{\a}{1-\a}.$
\end{Lemma}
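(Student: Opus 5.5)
The lower bound comes from a one-shot converse combined with additivity of $\tilde{D}_\alpha$, and the upper bound from a one-shot achievability combined with additivity of the Petz quantity $D_\alpha$ for $\alpha\in(0,1)$. Each bound is proved separately, and each appears twice because the roles of $\rho$ and $\sigma$ (and of $\e$ and $\d$) can be exchanged by swapping $M\leftrightarrow I-M$.

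\emph{Converse.} Take any test $0\le M\le I$ on $\mathcal{H}^{\otimes n}$ with $\mathrm{tr}(I-M)\rho^{\otimes n}\le\e$ and $\mathrm{tr}M\sigma^{\otimes n}\le\d$. Apply data processing of $\tilde{D}_\alpha$ for $\alpha>1$ under the measurement channel $\{M,I-M\}$. This gives
\begin{align*}
n\tilde{D}_\alpha(\rho\|\sigma)=\tilde{D}_\alpha(\rho^{\otimes n}\|\sigma^{\otimes n})\ge D_\alpha\bigl((p,1-p)\|(q,1-q)\bigr)\ge \frac{1}{\alpha-1}\ln\bigl(p^\alpha q^{1-\alpha}\bigr),
\end{align*}
with $p\ge 1-\e$ and $q\le \d$. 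Rearranging yields
\begin{align*}
n\ge \frac{\frac{\alpha}{\alpha-1}\ln(1-\e)+\ln\frac1\d}{\tilde{D}_\alpha(\rho\|\sigma)}.
\end{align*}
The remaining work is to match the numerator with the exponent convention $\a'$ in the statement. I expect to use the normalization of $\tilde{D}_\alpha$ as the paper uses it, and I would check the sign and prefactor conventions carefully there. Taking the supremum over $\alpha\in(1,\gamma]$ gives the first term. The second term follows by applying the same argument to $(\sigma,\rho)$ with the test $I-M$, which swaps the roles of $\e$ and $\d$. The hypothesis that both divergences are finite for $\gamma$ guarantees the denominators are finite. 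The maximum of the two bounds is then valid.

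\emph{Achievability.} I would use the Audenaert et al.\ inequality: for $A,B\ge0$ and $s\in(0,1)$, $\mathrm{tr}(A-(A-B)_+)\le \mathrm{tr}A^sB^{1-s}$. Take the Neyman–Pearson test $M=\{\rho^{\otimes n}-\lambda\sigma^{\otimes n}>0\}$ and the weighted error $\mathrm{tr}(I-M)\rho^{\otimes n}+\lambda\,\mathrm{tr}M\sigma^{\otimes n}$. This weighted error is at most $\lambda^{1-\alpha}Q_\alpha(\rho\|\sigma)^n=\lambda^{1-\alpha}e^{-n(1-\alpha)D_\alpha(\rho\|\sigma)}$ by tensor multiplicativity of the Petz quantity. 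Hence each error is individually bounded:
\begin{align*}
\mathrm{tr}(I-M)\rho^{\otimes n}\le \lambda^{1-\alpha}e^{-n(1-\alpha)D_\alpha},\qquad \mathrm{tr}M\sigma^{\otimes n}\le \lambda^{-\alpha}e^{-n(1-\alpha)D_\alpha}.
\end{align*}
Choose $\lambda$ so that the type I bound equals $\e$. Then the type II bound becomes $\e^{\alpha/(\alpha-1)}e^{-n(1-\alpha)D_\alpha/(1-\alpha)\cdot\ldots}$. Solving for the smallest $n$ with type II error at most $\d$ gives $n\ge \ln(\e^{\a'}/\d)/D_\alpha(\rho\|\sigma)$. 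Taking the ceiling and the infimum over $\alpha\in(0,1)$ gives the first upper bound. Fixing instead the type II error at $\d$ gives the symmetric expression $\ln(\d^{\a'}/\e)/D_\a$.

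The main obstacle is the algebra in eliminating $\lambda$ so that the exponent lands exactly on $\a'=\a/(1-\a)$ with the correct sign. The same care is needed in the converse regarding the normalization of $\tilde{D}_\alpha$ (the paper's $\frac1\alpha\log$ versus the standard $\frac{1}{\alpha-1}\log$). I would reconcile both against the conventions of \cite{cheng2025invitation}, from which the statement is quoted.
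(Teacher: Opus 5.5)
The paper gives no proof of this lemma; it is quoted from \cite{cheng2025invitation}. Your two ingredients are the standard argument of that source: data processing and additivity of $\tilde D_\alpha$ ($\alpha>1$) applied to the measured pair $(p,q)$, and Audenaert's inequality for $A=\rho^{\otimes n}$, $B=\lambda\sigma^{\otimes n}$ with multiplicativity of $\mathrm{tr}\rho^\alpha\sigma^{1-\alpha}$.

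What you actually derive is $n\tilde D_\alpha(\rho\|\sigma)\ge\ln\bigl((1-\varepsilon)^{\alpha/(\alpha-1)}/\delta\bigr)$ and $\beta_\varepsilon(\rho^{\otimes n}\|\sigma^{\otimes n})\le\varepsilon^{\alpha/(\alpha-1)}e^{-nD_\alpha(\rho\|\sigma)}$. That is the lemma with $\alpha'=\alpha/(\alpha-1)$, the convention of the source. The ``main obstacle'' you name, landing on the printed $\alpha'=\alpha/(1-\alpha)$, cannot be overcome by any normalization, because with that exponent the statement is false.
\begin{itemize}
\item Lower bound: for $\rho\neq\sigma$ the numerator becomes $\frac{\alpha}{\alpha-1}\ln\frac{1}{1-\varepsilon}+\ln\frac1\delta\to+\infty$ as $\alpha\downarrow1$, while $0<\tilde D_\alpha(\rho\|\sigma)\le\tilde D_\gamma(\rho\|\sigma)<\infty$. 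So the printed lower bound is $+\infty$, although your own achievability shows the sample complexity is finite.
\item Upper bound: the numerator $\ln\frac1\delta-\frac{\alpha}{1-\alpha}\ln\frac1\varepsilon\to-\infty$ as $\alpha\uparrow1$.
\end{itemize}
The paper's normalization $\tilde D_\alpha=\frac1\alpha\log Q_\alpha$ is also a misprint (with it the denominators tend to $0$ as $\alpha\downarrow1$), not something to reconcile with. State explicitly that you prove the lemma with $\alpha'=\alpha/(\alpha-1)$ and the usual $\frac{1}{\alpha-1}\ln$ normalization. Read the second lower-bound term as $\ln\bigl((1-\delta)^{\alpha'}/\varepsilon\bigr)/\tilde D_\alpha(\sigma\|\rho)$, since the placement of $\ln$ there is another typo. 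With that reading your converse, including the swap $M\leftrightarrow I-M$, is complete.

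The genuine error is your derivation of the second upper bound. Pinning the type II error at $\delta$ instead of the type I error gives nothing new. With $Q_\alpha=\mathrm{tr}\rho^\alpha\sigma^{1-\alpha}$, the conditions $\lambda^{1-\alpha}Q_\alpha^n\le\varepsilon$ and $\lambda^{-\alpha}Q_\alpha^n\le\delta$ admit a common $\lambda>0$ if and only if $n\ln Q_\alpha\le\alpha\ln\varepsilon+(1-\alpha)\ln\delta$. Whichever error you fix, this is again $n\ge\ln(\varepsilon^{\alpha'}/\delta)/D_\alpha(\rho\|\sigma)$.

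The source's second term is $\ln(\delta^{\alpha'}/\varepsilon)/D_\alpha(\sigma\|\rho)$. It follows from the first by substituting $\alpha\mapsto1-\alpha$ and using $(1-\alpha)D_\alpha(\rho\|\sigma)=-\ln\mathrm{tr}\rho^\alpha\sigma^{1-\alpha}=\alpha D_{1-\alpha}(\sigma\|\rho)$. It is the same infimum, so the minimum is redundant.

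With $D_\alpha(\rho\|\sigma)$ as printed, the term is actually false. It equals $\inf_\alpha\bigl[\ln\frac1\delta/D_{1-\alpha}(\sigma\|\rho)+\ln\frac1\varepsilon/D_\alpha(\rho\|\sigma)\bigr]$, which for each fixed $\alpha$ is $\ln\frac1\delta/D_{1-\alpha}(\sigma\|\rho)$ plus a $\delta$-independent constant. Your converse, however, forces $n\ge\bigl(\ln\frac1\delta-c_\alpha\bigr)/\tilde D_\alpha(\rho\|\sigma)$ for $\alpha>1$. When $D(\sigma\|\rho)>D(\rho\|\sigma)$ (e.g.\ $\rho=\mathrm{diag}(a,1-a)$, $\sigma=I/2$ with $a$ small), these are incompatible for small $\delta$. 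So that step must be replaced by the reparametrization, with $D_\alpha(\sigma\|\rho)$ in the denominator.

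A minor point remains. To pass from ``$n\ge f(\alpha)$ for some $\alpha$'' to $n=\lceil\inf_\alpha f(\alpha)\rceil$ when the infimum is not attained, use that $\beta_\varepsilon(\rho^{\otimes n}\|\sigma^{\otimes n})\le\varepsilon^{\alpha'}e^{-nD_\alpha(\rho\|\sigma)}$ holds for every $\alpha\in(0,1)$, and pass to the limit along a minimizing sequence.
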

\begin{Lemma}\label{syqhtc}\cite{cheng2025invitation}
	Assume $\alpha\in [0,pq]$, then for non-orthogonal states $\rho$ and $\sigma$,
	\begin{align*}
		\max\{\frac{\ln(\frac{pq}{\alpha})}{-\ln F(\rho,\sigma)},\frac{1-\frac{\a(1-\a)}{pq}}{d^2_B(\rho,\sigma)}\}\le SC_{(\rho,\sigma)}(\alpha,p,q)\le \lceil\frac{2\ln\frac{\sqrt{pq}}{\alpha}}{-\ln F(\rho,\sigma)}\rceil.
	\end{align*}
	Here $d_B(\rho,\sigma)=\sqrt{2(1-\sqrt{F(\rho,\sigma)})}.$
\end{Lemma}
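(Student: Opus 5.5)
The statement is Lemma~\ref{syqhtc}, quoted from \cite{cheng2025invitation}. The plan is to sandwich the optimal error $p_e(\rho^{\otimes n},\sigma^{\otimes n},p,q)$ between two fidelity-based expressions. I then use multiplicativity of fidelity under tensor products, $F(\rho^{\otimes n},\sigma^{\otimes n})=F(\rho,\sigma)^n$, and solve for $n$. Throughout I take $F(\rho,\sigma)=\|\sqrt{\rho}\sqrt{\sigma}\|_1^2$. Non-orthogonality gives $F\in(0,1]$, so $-\ln F$ is well defined. If $F=1$ then $\rho=\sigma$; this degenerate case I treat separately or exclude.

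\emph{Upper bound.} Start from the Helstrom--Holevo expression (\ref{sypt}). Apply Audenaert's inequality $\mathrm{tr}(A+B-|A-B|)/2\le \mathrm{tr}A^{s}B^{1-s}$ with $A=p\rho$, $B=q\sigma$ and $s=1/2$. This gives
\begin{align*}
p_e\le \sqrt{pq}\,\mathrm{tr}\sqrt{\rho}\sqrt{\sigma}\le \sqrt{pq}\,\|\sqrt{\rho}\sqrt{\sigma}\|_1=\sqrt{pq}\,F(\rho,\sigma)^{1/2}.
\end{align*}
For $n$ copies this becomes $p_e^{(n)}\le\sqrt{pq}\,F^{n/2}$. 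This is $\le\alpha$ as soon as $n\ge 2\ln(\sqrt{pq}/\alpha)/(-\ln F)$. Taking the ceiling yields the stated upper bound on $SC$.

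\emph{Lower bounds.} I use the Fuchs--van de Graaf-type bound for weighted states, $p_e\ge \frac12\bigl(1-\sqrt{1-4pqF(\rho,\sigma)}\bigr)$. I would prove it by reducing to pure states via Uhlmann's theorem and monotonicity of $p_e$ under partial trace; for pure states the Helstrom formula gives it with equality.

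Now suppose $n$ copies achieve $p_e^{(n)}\le\alpha$. Since $\alpha\le pq\le 1/4$, we have $1-2\alpha\ge 0$, so squaring gives $(1-2\alpha)^2\le 1-4pqF^n$, i.e.\ $F^n\le \alpha(1-\alpha)/(pq)$.

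\emph{First lower bound.} Since $\alpha(1-\alpha)\le\alpha$, we get $F^n\le\alpha/(pq)$. Taking logarithms gives $n\ge \ln(pq/\alpha)/(-\ln F)$.

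\emph{Second lower bound.} Write $1-F^n=(1-F^{n/2})(1+F^{n/2})\le 2(1-F^{n/2})$. Bernoulli's inequality $1-x^n\le n(1-x)$ with $x=\sqrt F\in[0,1]$ then gives $1-F^n\le 2n(1-\sqrt F)=n\,d_B^2(\rho,\sigma)$. Combining with the previous paragraph, $n\,d_B^2\ge 1-F^n\ge 1-\alpha(1-\alpha)/(pq)$, which is the second lower bound.

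\emph{Main obstacle.} The step needing most care is establishing the weighted Fuchs--van de Graaf inequality $p_e\ge\frac12(1-\sqrt{1-4pqF})$ for mixed states. The pure-state computation and the purification/monotonicity argument must be aligned correctly: choose purifications achieving Uhlmann's fidelity, and note that $p_e$ can only decrease when the distinguisher has access to the purifying system. The remaining steps are elementary algebra once this inequality and Audenaert's inequality are in hand.
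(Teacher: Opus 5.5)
The paper does not prove this lemma; it is quoted from \cite{cheng2025invitation}. Your argument is correct and is essentially the standard proof from that source. Audenaert's inequality at $s=1/2$ gives $p_e^{(n)}\le\sqrt{pq}\,(\mathrm{tr}\sqrt{\rho}\sqrt{\sigma})^{n}\le\sqrt{pq}\,F^{n/2}$, which yields the upper bound. The weighted Fuchs--van de Graaf bound $p_e\ge\frac12\bigl(1-\sqrt{1-4pqF}\bigr)$, obtained from Uhlmann purifications and the pure-state Helstrom formula, forces $F^n\le\alpha(1-\alpha)/(pq)$. Both lower bounds then follow, the second via $1-F^n\le 2(1-F^{n/2})\le n\,d_B^2(\rho,\sigma)$.

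Your explicit convention $F=\|\sqrt{\rho}\sqrt{\sigma}\|_1^2$ is the right one, since it is the one forced by $d_B=\sqrt{2(1-\sqrt{F})}$. With the root fidelity $\|\sqrt{\rho}\sqrt{\sigma}\|_1$ in its place, the first lower bound would be off by roughly a factor of $2$ and would fail for small $\alpha$.
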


\begin{Lemma}\label{symqhtc}\cite{cheng2025invitation}
Assume $\e,\d\in [0,1],$ and $\r_m,\in\mathcal{D}(\mathcal{H}),$ $m=1,2,\cdots,k,$ and  $(\rho_m)_{m=1}^k$ are a tuple of states with the prior probabilities $(p_m)_{m=1}^k$. Then,
\begin{align*}
	\max_{m\ne \tilde{m}}\frac{\ln\left(\frac{p_mp_{\tilde{m}}}{(p_m\e+\e p_{\tilde{m}})}\right)}{-\ln F(\rho_m,\rho_{\hat{m}})}\le SC_{((\rho_i)_{i=1}^m,(p_i)_{i=1}^m)}^{\e}\le \left\lceil\max_{m\ne \tilde{m}}\frac{2\ln\left(\frac{M(M-1)\sqrt{p_m}\sqrt{p_{\tilde{m}}}}{2\e}\right)}{-\ln F(\rho_m,\rho_{\tilde{m}}) }\right\rceil.
\end{align*}
\end{Lemma}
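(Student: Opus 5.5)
The statement is Lemma \ref{symqhtc}, cited from \cite{cheng2025invitation}. It bounds the sample complexity of $k$-ary symmetric hypothesis testing in terms of the pairwise fidelities. I will reduce both bounds to pairwise binary testing and then invoke the binary result, Lemma \ref{syqhtc}.

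\textbf{Lower bound.} Fix any pair $m\neq\tilde m$ and any $k$-outcome POVM $\{\Lambda_i^{(n)}\}$ on $n$ copies. Keep only the terms for $m$ and $\tilde m$ in the error sum:
\begin{align*}
p_e\ \ge\ p_m\mathrm{tr}[(I-\Lambda_m^{(n)})\rho_m^{\otimes n}]+p_{\tilde m}\mathrm{tr}[(I-\Lambda_{\tilde m}^{(n)})\rho_{\tilde m}^{\otimes n}].
\end{align*}
Since $\Lambda_m^{(n)}+\Lambda_{\tilde m}^{(n)}\le I$, setting $M=\Lambda_m^{(n)}$ gives $I-\Lambda_{\tilde m}^{(n)}\ge M$. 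Hence the right-hand side is at least a binary error for the pair, but with unnormalized priors $(p_m,p_{\tilde m})$. After renormalizing by $p_m+p_{\tilde m}$, this is bounded below through the Helstrom formula (\ref{sypt}) and the standard fidelity bound $\frac12(1-\|p\rho-q\sigma\|_1)\ge \frac{pq}{p+q}F(\rho,\sigma)$. Fidelity is multiplicative under tensor powers, so the bound becomes $\frac{p_mp_{\tilde m}}{p_m+p_{\tilde m}}F(\rho_m,\rho_{\tilde m})^n$. Requiring this to be at most $\e$ and solving for $n$ gives
\begin{align*}
n\ge \frac{\ln\frac{p_mp_{\tilde m}}{\e(p_m+p_{\tilde m})}}{-\ln F(\rho_m,\rho_{\tilde m})}.
\end{align*}
Maximizing over pairs yields the stated left-hand side.

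\textbf{Upper bound.} Here I will use the pretty-good measurement $\Lambda_i=S^{-1/2}p_i\rho_i^{\otimes n}S^{-1/2}$, where $S=\sum_j p_j\rho_j^{\otimes n}$. The Barnum–Knill bound, or more conveniently the pairwise union bound of Audenaert–Mosonyi / Harrow, controls its error by a sum of pairwise terms:
\begin{align*}
p_e\le \sum_{m<\tilde m}\sqrt{p_mp_{\tilde m}}\,F(\rho_m^{\otimes n},\rho_{\tilde m}^{\otimes n})^{1/2}\cdot 2.
\end{align*}
The constant needs to be checked against the $2\e$ in the statement. Using $F(\rho^{\otimes n},\sigma^{\otimes n})=F^n$, I then bound each of the $M(M-1)/2$ terms by the worst pair. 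It suffices that
\begin{align*}
\frac{M(M-1)}{2}\sqrt{p_mp_{\tilde m}}\,F^{n/2}\le\e \quad\text{for all pairs},
\end{align*}
which is equivalent to $n\ge \frac{2\ln(M(M-1)\sqrt{p_mp_{\tilde m}}/(2\e))}{-\ln F}$. Taking the ceiling gives the stated right-hand side.

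\textbf{Main obstacle.} The hardest step is the multi-hypothesis upper bound, specifically getting the exact constants in the pairwise decomposition of the pretty-good measurement error. I would first try the Barnum–Knill inequality written in terms of pairwise terms $\sqrt{p_ip_j}\,\mathrm{tr}\sqrt{\rho_i^{\otimes n}}\sqrt{\rho_j^{\otimes n}}$. That quantity is Holevo's quantity rather than $\sqrt F$, so I would then need a fidelity-based pairwise bound, for example via the Gram matrix argument, to convert it.
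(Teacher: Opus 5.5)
The paper gives no proof of this lemma; it is quoted from \cite{cheng2025invitation}. Your two-part plan matches the structure of the standard argument there: reduce the converse to a single pair, and get achievability from a pairwise-fidelity bound on the pretty-good-measurement error plus multiplicativity of $F$.

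Your lower bound is correct. Since $\Lambda_m^{(n)}+\Lambda_{\tilde m}^{(n)}\le I$, the two retained terms dominate a binary error with weights $p_m,p_{\tilde m}$. For the normalized priors $p'=p_m/(p_m+p_{\tilde m})$, $q'=1-p'$, the inequality $\tfrac12(1-\sqrt{1-4p'q'F})\ge p'q'F$ then gives $p_e\ge \frac{p_mp_{\tilde m}}{p_m+p_{\tilde m}}F(\rho_m,\rho_{\tilde m})^n$. Two small fixes: for unnormalized weights the Helstrom expression is $\tfrac12(p+q-\|p\rho-q\sigma\|_1)$, and you should not reuse $M$ for a POVM element when it already counts the hypotheses.

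The gap is in the upper bound, exactly at the constant the statement is about. Your displayed estimate is $p_e\le 2\sum_{m<\tilde m}\sqrt{p_mp_{\tilde m}}\,F(\rho_m,\rho_{\tilde m})^{n/2}=\sum_{m\ne\tilde m}\sqrt{p_mp_{\tilde m}}\,F(\rho_m,\rho_{\tilde m})^{n/2}$. At the stated value of $n$, each summand is at most $2\e/(M(M-1))$, so this estimate only yields $p_e\le 2\e$.

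In your next line (``it suffices that $\frac{M(M-1)}{2}\sqrt{p_mp_{\tilde m}}F^{n/2}\le\e$'') the factor $2$ has been silently dropped. Carried through honestly, your argument proves only the weaker bound with $\ln\frac{M(M-1)\sqrt{p_mp_{\tilde m}}}{\e}$ in place of $\ln\frac{M(M-1)\sqrt{p_mp_{\tilde m}}}{2\e}$. To get the statement, you need a pairwise bound over unordered pairs with coefficient one, $p_e\le\sum_{m<\tilde m}\sqrt{p_mp_{\tilde m}}\sqrt{F(\rho_m^{\otimes n},\rho_{\tilde m}^{\otimes n})}$. You must either cite it in exactly this form or prove it.

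You cannot recover it by halving the exact PGM error $\sum_{i\ne j}\mathrm{tr}[A_iS^{-1/2}A_jS^{-1/2}]$, with $A_i=p_i\rho_i^{\otimes n}$ and $S=\sum_iA_i$, term by term. A single cross term can be strictly positive even when $F(\rho_i,\rho_j)=0$; the states $\ket{0},\ket{1},\ket{+}$ already show this. So any such bound has to be a global statement about the ensemble.

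The step you list as the main obstacle, by contrast, is not one. A pairwise bound in terms of $\mathrm{tr}\sqrt{\rho_i}\sqrt{\rho_j}$ converts immediately, because $\mathrm{tr}\sqrt{\rho_i}\sqrt{\rho_j}\le\|\sqrt{\rho_i}\sqrt{\rho_j}\|_1=\sqrt{F(\rho_i,\rho_j)}$, applied directly to the $n$-fold tensor powers. No Gram-matrix argument is needed.
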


\subsection{Quantum Local Differential Privacy}
\indent Here we first review the definition of quantum local differential privacy, then we present the main concept addressed here, quantum probabilistically local differential privacy, finally we show the definition of the sample complexity probabilistically privatized contraction coefficients in terms of the hockey-stick divergence.
	\begin{definition}\cite{hirche2023quantum}
	Suppose $\mathcal{A}$ is a quantum superoperator on a quantum system $\mathcal{H}$, let $\epsilon$ and $\delta$ be some nonnegative numbers. Then $\mathcal{A}$ is $(\epsilon,\delta)$-local differential privacy [($\epsilon,\delta$)-QLDP] if for any two states $\rho$ and $\sigma$ on $\mathcal{H}$, and for any subset $S\in \mathcal{O}$, we have 
	\begin{align}
		\sum_{k\in S}\mathrm{tr}(M_k\mathcal{E}(\rho))\le e^{\epsilon}\sum_{k\in S}\mathrm{tr}(M_k\mathcal{E}(\sigma))+\delta.\label{f1}
	\end{align}
\end{definition}
\begin{definition}
Assume	$\mathcal{A}$ is a quantum superoperator, and $\epsilon$ and $\delta$ are nonnegative, then $\mathcal{A}$ is $(\epsilon,\delta)$-QPrLDP if
for any couple of quantums states $\rho$ and $\sigma$ such that
\begin{align*}
	\mathrm{tr}\mathcal{A}(\rho)\{\mathcal{A}(\rho)-e^{\e}\mathcal{A}(\sigma)>  0\}\le \delta.
\end{align*}

\end{definition}

\begin{remark}
As $\{M\ge 0\}\ge \{M>0\},$ a method to obtain the upper bound of $\mathrm{tr}\mathcal{A}(\rho)\{\mathcal{A}(\rho)-e^{\e}\mathcal{A}(\s)> 0\}$ is to analyse the quantity $\mathrm{tr}\mathcal{A}(\rho)\{\mathcal{A}(\rho)-e^{\e}\mathcal{A}(\sigma)\ge 0\}$. 
	In Lemma \ref{lds}, we present some properties of $\mathrm{tr}\rho\{\rho-t\sigma\ge 0\}$. This quantity and its variants have been wildely considered to study the quantum hypothesis testing \cite{ogawa2000,nagaoka2007information}, randomness extraction \cite{tomamichel2013hierarchy} and other entanglement tasks \cite{datta2014second}.

\end{remark}

In this manuscript, we denote the set of all $(\epsilon,\delta)$-QPrLDP as
\begin{align}\label{edqprldp}
	\mathcal{Q}^{\e,\delta} =\{\mathcal{N}\in CPTP|\mathrm{tr}\mathcal{N}(\rho)\{\mathcal{N}(\rho)-e^{\e}\mathcal{N}(\sigma)\ge 0\}\le \delta,\forall\rho,\sigma\in\mathcal{D}(\mathcal{H})\}.
\end{align}

Next to show the effects of $\mathcal{Q}^{\e.\d}$ on the sample complexity of asymmetric and symmetric hypothesis testing for a couple of states $(\r,\s)$, we introduce the following concepts.
\begin{definition}
	Assume $\e,\d\in [0,1]$, $\r$ and $\s$ are two states. The sample complexity of probabilistically privatized asymmetic binary quantum hypothesis testing is defined as follows,
	\begin{align*}
		ASC^{\e,\d}(\r,\s,\vartheta,\varphi)=\inf_{\mathcal{A}\in \mathcal{Q}^{\e,\d}}\{n\in\mathbb{N}|\beta_{\vartheta}((\mathcal{A}(\r))^{\otimes n}||(\mathcal{A}(\s))^{\otimes n})\le \varphi\},
	\end{align*}
	where the infimum takes over all the quantum superoperators $\mathcal{A}\in \mathcal{Q}^{\e,\d}.$
\end{definition}
 \begin{definition}
 	Assume $\e,\d\in [0,1],$ and $\r_i\in\mathcal{D}(\mathcal{H}),$ $i=1,2,\cdots,m.$	The sample complexity of probabilistically privatized multiple hypothesis testing for a tuple of states $(\rho_i)_{i=1}^m$ with the prior probabilities $(p_i)_{i=1}^m$ to achieve at most $\alpha$ error probability is defined as
 	\begin{align*}
 		SC^{\e,\d}(\a,(\rho_i)_{i=1}^m,(p_i)_{i=1}^m)=\inf_{\mathcal{A}\in\mathcal{Q}^{\e,\d}}\{n\in \mathbb{N}|p_e((\rho_i)_{i=1}^m,(p_i)_{i=1}^m)\le\a\},
 	\end{align*}
 	here \begin{align*}
 		p_e((\rho_i)_{i=1}^m,(p_i)_{i=1}^m)=\inf_{\Lambda_1^{(n)},\cdots,\Lambda_m^{(n)}}\sum_{i=1}^m p_i\mathrm{ tr}[(I-\Lambda_i^{(n)})(\mathcal{A}(\rho_i))^{\otimes n}],
 	\end{align*}
 	where $\Lambda_i,$ $i=1,2,\cdots,m$ takes over all the semidefinite operators with $\sum_{i=1}^m\Lambda_i^{(n)}=I^{\otimes n}$. 
 \end{definition}
	
The hockey-stick divergence $E_{\gamma}(\cdot||\cdot)$ plays a key role in quantum information processing tasks \cite{hirche2023quantum,nuradha2025measured,Frenkel2023integralformula,hirche2024quantum}, which satisfies the data-processing property,
\begin{align*}
	E_{\gamma}(\rho||\sigma)\ge E_{\gamma}(\mathcal{N}(\r)||\mathcal{N}(\sigma)), 
\end{align*}	
where $\mathcal{N}$ is a channel, both $\rho$ and $\sigma$ are quantum states. From the above inequality, there exists a function $\eta(\cdot)$ on $\mathcal{N}$ such that 
\begin{align*}
1\ge	\eta_{E_{\gamma}}(\mathcal{N})\ge\sup_{\rho,\sigma} \frac{E_{\gamma}(\mathcal{N}(\rho)||\mathcal{N}(\sigma))}{E_{\gamma}(\rho||\sigma)}.
\end{align*}

The probabilistically privatized contraction coefficients in terms of the hockey-stick divergence is defined as follows,
	\begin{align*}
		\eta_{E_{\gamma}}^{\e,\delta}=\sup_{\substack{\mathcal{N}\in \mathcal{Q}^{\e,\delta}}}\sup_{\rho,\sigma}\eta_{E_{\gamma}}(\mathcal{N},\rho,\sigma)=\sup_{\substack{\mathcal{N}\in \mathcal{Q}^{\e,\delta}, \\\rho,\sigma\in \mathcal{D}(\mathcal{H}),\\E_{\gamma}(\rho||\sigma)\ne 0 }}\frac{E_{\gamma}(\mathcal{N}(\rho)||\mathcal{N}(\sigma))}{E_{\gamma}(\rho||\sigma)},
	\end{align*} 
	where $	\mathcal{Q}^{\e,\delta}$ is defined in (\ref{edqprldp}).

		\section{Properties of QPrLDP}\label{III}
	In this section, we consider the properties of $(\e,\delta)$-QPrLDP. First, we present the properties of $(\epsilon,\delta)$-QPrLDP under tensor product and unitary operations, then we present a counterexample to show that the set of $(\epsilon,\delta)$-QPrLDP is not convex, which also indicates that the set of $(\epsilon,\delta)$-QPrLDP is not  closed under a generic noisy operation.

	\begin{Theorem}
		The set $(\epsilon,\delta)$-QPrDP satisfies the following properties,
		\begin{itemize}
			\item[(i)] 	Assume $\mathcal{A}_1$ and $\mathcal{A}_2$ are $(\epsilon_1,\delta_1)$-QPrLDP and $(\epsilon_2,\delta_2)$-QPrLDP, respectively, then $\mathcal{A}_2\otimes\mathcal{A}_1$ is $(\epsilon_1+\epsilon_2,\min(\lambda_{max}(\mathcal{A}_1)\lambda_{max}(\mathcal{A}_2)(\delta_1+\delta_2),1))$-QPrLDP on such product states, here $\lambda_{max}(\mathcal{A})=\sup_{\rho}\frac{\lambda_{max}(\mathcal{A}(\r))}{\lambda_{min}(\mathcal{A}(\r))}.$
			\item[(ii)] Assume $\mathcal{A}$ is $(\epsilon,\delta)$-QPrLDP, and $\mathcal{U}(\cdot)=U\cdot U^{\dagger}$ is a unitary operation, then $\mathcal{U}\circ\mathcal{A}$ is also $(\epsilon,\delta)$-QPrLDP.
		\end{itemize}	
	\end{Theorem}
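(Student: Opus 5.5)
For (ii) we use unitary covariance of the spectral projector. For (i) we decompose the product privacy-loss operator and apply a rank (Weyl-type) argument; the non-commutativity of the two terms there is the real difficulty.

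\emph{Part (ii).} For any Hermitian $X$ and unitary $U$ we have $\{UXU^{\dagger}>0\}=U\{X>0\}U^{\dagger}$, because conjugation by $U$ only rotates the eigenbasis and leaves the eigenvalues unchanged. Apply this with $X=\mathcal{A}(\rho)-e^{\e}\mathcal{A}(\sigma)$. Cyclicity of the trace then gives
$\mathrm{tr}\,U\mathcal{A}(\rho)U^{\dagger}\{U\mathcal{A}(\rho)U^{\dagger}-e^{\e}U\mathcal{A}(\sigma)U^{\dagger}>0\}=\mathrm{tr}\,\mathcal{A}(\rho)\{\mathcal{A}(\rho)-e^{\e}\mathcal{A}(\sigma)>0\}\le\delta$.
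Hence $\mathcal{U}\circ\mathcal{A}$ is $(\e,\delta)$-QPrLDP.

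\emph{Part (i), setup.} Restrict to product inputs $\rho_1\otimes\rho_2$ and $\sigma_1\otimes\sigma_2$, and write $A=\mathcal{A}_1(\rho_1)$, $B=\mathcal{A}_1(\sigma_1)$, $C=\mathcal{A}_2(\rho_2)$, $D=\mathcal{A}_2(\sigma_2)$. Set $P_1=\{A-e^{\e_1}B>0\}$ and $P_2=\{C-e^{\e_2}D>0\}$, so that $\mathrm{tr}AP_1\le\delta_1$ and $\mathrm{tr}CP_2\le\delta_2$. The key algebraic step is the splitting
\begin{align*}
A\otimes C-e^{\e_1+\e_2}B\otimes D=(A-e^{\e_1}B)\otimes C+e^{\e_1}B\otimes(C-e^{\e_2}D)=:X+Y.
\end{align*}
Classically, the violation event of the product is contained in the union of the two individual violation events. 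Quantum mechanically $X$ and $Y$ do not commute, so that inclusion is unavailable. This is the main obstacle, and the dimension/condition-number factor $\lambda_{max}(\mathcal{A}_1)\lambda_{max}(\mathcal{A}_2)$ in the statement is the price paid for avoiding it.

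\emph{Part (i), main argument.}
\begin{enumerate}
\item Bound the state against the projector by its largest eigenvalue: with $\Pi=\{X+Y>0\}$,
\begin{align*}
\mathrm{tr}(A\otimes C)\Pi\le\lambda_{max}(A)\lambda_{max}(C)\,\mathrm{rank}\,\Pi .
\end{align*}
\item Control the rank by Weyl's inequality, i.e.\ the number of positive eigenvalues is subadditive: $\mathrm{rank}\{X+Y>0\}\le\mathrm{rank}\,X_+ + \mathrm{rank}\,Y_+$.
\item Evaluate the two ranks. Since $C\ge0$ and $B\ge0$, we have $\mathrm{rank}X_+=\mathrm{rank}P_1\cdot\mathrm{rank}C$ and $\mathrm{rank}Y_+=\mathrm{rank}B\cdot\mathrm{rank}P_2$.
\item Convert ranks back into traces with the minimal eigenvalue, using $\mathrm{tr}AP_1\ge\lambda_{min}(A)\,\mathrm{rank}P_1$. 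This gives $\mathrm{rank}P_1\le\delta_1/\lambda_{min}(A)$, and similarly $\mathrm{rank}P_2\le\delta_2/\lambda_{min}(C)$.
\end{enumerate}
Combining these steps produces ratios $\lambda_{max}/\lambda_{min}$ of the outputs multiplying $\delta_1$ and $\delta_2$, which are bounded by $\lambda_{max}(\mathcal{A}_1)$ and $\lambda_{max}(\mathcal{A}_2)$. Capping the result at $1$ is trivial, since a trace of a state against a projector never exceeds $1$.

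\emph{Expected difficulty.} The delicate point is the leftover rank factors $\mathrm{rank}\,C$ and $\mathrm{rank}\,B$, together with the mismatch between $\lambda_{max}(C)$ and $\lambda_{min}(A)$ across the two terms. My first attempt would absorb these by bounding $\lambda_{max}(C)\,\mathrm{rank}\,C$ by $\lambda_{max}(C)/\lambda_{min}(C)$, using $\lambda_{min}(C)\,\mathrm{rank}\,C\le\mathrm{tr}\,C=1$ on the support. In the $Y$ term I would do the same with $B$, via $\lambda_{max}(B)/\lambda_{min}(B)$. This should yield a constant of the form $\lambda_{max}(\mathcal{A}_1)\lambda_{max}(\mathcal{A}_2)(\delta_1+\delta_2)$.

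If these estimates do not close exactly, the fallback is a two-step version. First replace $\Pi$ by $P_1\otimes I+(I-P_1)\otimes P_2$-type comparisons on the support. Then use the operator inequality $(I-P_1)(A-e^{\e_1}B)(I-P_1)\le0$ to bound the cross terms by condition numbers. In both routes, the step most likely to need care is handling the non-commuting $X$ and $Y$.
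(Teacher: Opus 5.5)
Part (ii) is exactly the paper's argument. For part (i) you use the paper's ingredients:
\begin{itemize}
\item the same splitting $X+Y$;
\item subadditivity of the number of positive eigenvalues (the content of Lemma~\ref{cnwb});
\item the factorization $\{X>0\}=P_1\otimes\Pi_C$, $\{Y>0\}=\Pi_B\otimes P_2$, with $\Pi_B,\Pi_C$ the support projectors (Lemma~\ref{l2});
\item $\lambda_{\max}/\lambda_{\min}$ conversions.
\end{itemize}
The difference is the order. The paper converts $\mathrm{rank}\{X>0\}$ and $\mathrm{rank}\{Y>0\}$ back into traces against the whole state $A\otimes C$, via $\lambda_{\min}(A\otimes C)=\lambda_{\min}(A)\lambda_{\min}(C)$, \emph{before} factoring. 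It then has $\mathrm{tr}(A\otimes C)(P_1\otimes\Pi_C)=\mathrm{tr}AP_1\le\delta_1$ and $\mathrm{tr}(A\otimes C)(\Pi_B\otimes P_2)=\mathrm{tr}(A\Pi_B)\,\mathrm{tr}(CP_2)\le\delta_2$, so no rank factors are left over.

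Because you factor the ranks first, you do get leftover factors, and your proposed treatment of the $Y$ term does not give the stated constant. Write $\kappa=\lambda_{\max}/\lambda_{\min}$. Your Steps 1--4 give, for the $Y$ term, $\kappa(C)\,\delta_2\cdot\lambda_{\max}(A)\,\mathrm{rank}B$. The prefactor is $\lambda_{\max}(A)$, not $\lambda_{\max}(B)$, so it cannot be absorbed by the condition number of $B$:
\begin{itemize}
\item If $B=I/d_1$, then $\kappa(B)=1$, while $\lambda_{\max}(A)\,\mathrm{rank}B=d_1\lambda_{\max}(A)>1$ for any $A\neq I/d_1$.
\item Using $\mathrm{rank}B\le 1/\lambda_{\min}(B)$ instead gives the cross ratio $\lambda_{\max}(A)/\lambda_{\min}(B)$, which can exceed $\lambda_{\max}(\mathcal{A}_1)$. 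Take the channel $\rho\mapsto\langle 0|\rho|0\rangle A+\langle 1|\rho|1\rangle B$ with $A=\mathrm{diag}(1/2,1/4,1/4)$ and $B=\mathrm{diag}(1/5,2/5,2/5)$. Every output has condition number at most $2$, yet $\lambda_{\max}(A)/\lambda_{\min}(B)=5/2$.
\end{itemize}

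The fix is to use the spectrum of $A$. Since $d_1\lambda_{\min}(A)\le\mathrm{tr}A=1$, you have $\mathrm{rank}B\le d_1\le 1/\lambda_{\min}(A)$. Equivalently, $\lambda_{\min}(A)\,\mathrm{rank}B\le\mathrm{tr}(A\Pi_B)\le 1$, which is what the paper's ordering does implicitly. Then $\lambda_{\max}(A)\,\mathrm{rank}B\le\kappa(A)$, so the $Y$ term is at most $\kappa(A)\kappa(C)\delta_2$. Together with your correct $X$-term bound $\kappa(A)\kappa(C)\delta_1$, this gives exactly $\lambda_{\max}(\mathcal{A}_1)\lambda_{\max}(\mathcal{A}_2)(\delta_1+\delta_2)$, and your fallback route is not needed.

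Note that this step, like your Step 4, requires $\lambda_{\min}$ to be the minimum over the whole output space, as in the statement, not the smallest eigenvalue on the support as your wording suggests. Otherwise Step 4 can fail, since $P_1$ need not lie in $\mathrm{supp}\,A$.
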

	\begin{proof}
		\begin{itemize}
			\item[(i)]
			As $\rho_1,\sigma_1,\rho_2$ and $\sigma_2$ are quantum states, $\mathcal{A}_1$ and $\mathcal{A}_2$ are $(\epsilon_1,\delta_1)$-QPrLDP and $(\epsilon_2,\delta_2)$-QPrLDP, respectively, then 
			\begin{align*}
				\mathrm{tr}\mathcal{A}_1(\rho_1)(\{\mathcal{A}_1(\rho_1-e^{\epsilon_1}\sigma_1)> 0\})\le \delta_1,\\
				\mathrm{tr}\mathcal{A}_2(\rho_2)(\{\mathcal{A}_2(\rho_2-e^{\epsilon_2}\sigma_2)> 0\})\le \delta_2,
			\end{align*}
			hence, 
			
			\begin{align*}
				&\mathrm{tr}(\mathcal{A}_1(\rho_1)\otimes\mathcal{A}_2(\rho_2)\{\mathcal{A}_1(\rho_1)\otimes\mathcal{A}_2(\rho_2)-e^{\epsilon_1+\epsilon_2}\mathcal{A}_1(\sigma_1)\otimes\mathcal{A}_2(\sigma_2)> 0\})\\
				=&\mathrm{tr}(\mathcal{A}_1(\rho_1)\otimes\mathcal{A}_2(\rho_2)\{\mathcal{A}_1(\rho_1)\otimes\mathcal{A}_2(\rho_2)-e^{\epsilon_1}\mathcal{A}_1(\sigma_1)\otimes \mathcal{A}_2(\rho_2)+e^{\epsilon_1}\mathcal{A}_1(\sigma_1)\otimes \mathcal{A}_2(\rho_2)-e^{\epsilon_1+\epsilon_2}\mathcal{A}_1(\sigma_1)\otimes\mathcal{A}_2(\sigma_2)> 0\})\\
				\le&\frac{\lambda_{max}(\mathcal{A}_1(\r_1))\lambda_{max}(\mathcal{A}_2(\rho_2))}{\lambda_{min}(\mathcal{A}_1(\r_1))\lambda_{min}(\mathcal{A}_2(\rho_2))}[\mathrm{tr}\mathcal{A}_1(\r_1)\otimes\mathcal{A}_2(\r_2)\{\mathcal{A}_1(\rho_1)\otimes\mathcal{A}_2(\rho_2)-e^{\epsilon_1}\mathcal{A}_1(\sigma_1)\otimes \mathcal{A}_2(\rho_2)> 0\}\\
				+&\mathrm{tr}\mathcal{A}_1(\r_1)\otimes\mathcal{A}_2(\r_2)\{e^{\epsilon_1}\mathcal{A}_1(\sigma_1)\otimes \mathcal{A}_2(\rho_2)-e^{\epsilon_1+\epsilon_2}\mathcal{A}_1(\sigma_1)\otimes\mathcal{A}_2(\sigma_2)>0\}]\\
				\le&\frac{\lambda_{max}(\mathcal{A}_1(\r_1))\lambda_{max}(\mathcal{A}_2(\rho_2))}{\lambda_{min}(\mathcal{A}_1(\r_1))\lambda_{min}(\mathcal{A}_2(\rho_2))}[\mathrm{tr}\mathcal{A}_1(\r_1)\{\mathcal{A}_1(\rho_1)-e^{\epsilon_1}\mathcal{A}_1(\sigma_1)> 0\}
				+\mathrm{tr}\mathcal{A}_2(\r_2)\{ \mathcal{A}_2(\rho_2)-e^{\epsilon_2}\mathcal{A}_2(\sigma_2)>0\}]\\
				\le&\frac{\lambda_{max}(\mathcal{A}_1(\r_1))\lambda_{max}(\mathcal{A}_2(\rho_2))}{\lambda_{min}(\mathcal{A}_1(\r_1))\lambda_{min}(\mathcal{A}_2(\rho_2))}(\d_1+\d_2),\\
				\le&\lambda_{max}(\mathcal{A}_1)\lambda_{max}(\mathcal{A}_2)(\delta_1+\delta_2),
			\end{align*}
			Here the first inequality is due to Lemma \ref{cnwb}, the second inequality is due to Lemma \ref{l2}.
			
			\item[(ii)] As $\mathcal{A}$ is $(\epsilon,\delta)$-QPrLDP, then for arbitrary states $\rho$ and $\sigma$,
			\begin{align*}
				\mathrm{tr}\mathcal{A}(\rho)\{\mathcal{A}(\rho)-e^{\epsilon}\mathcal{A}(\sigma)\ge 0\}\le \delta,
			\end{align*}
			then let $\{\mathcal{A}(\rho)-e^{\epsilon}\mathcal{A}(\sigma)\ge 0\}=\sum\limits_i\ket{i}\bra{i}$,
			\begin{align*}
				&\mathrm{ tr}\mathcal{U}\circ\mathcal{A}(\rho)\{\mathcal{U}\circ\mathcal{A}(\rho)-e^{\epsilon}\mathcal{U}\circ\mathcal{A}(\sigma)\ge 0\}\\
				=&\mathrm{ tr}U\mathcal{A}(\rho)U^{\dagger}\{U\mathcal{A}(\rho)U^{\dagger}-e^{\epsilon}U\mathcal{A}(\sigma)U^{\dagger}\ge 0\}\\
				=&\mathrm{ tr}U\mathcal{A}(\rho)U^{\dagger}U\sum_i\ket{i}\bra{i}U^{\dagger}\\
				=&\mathrm{ tr}\mathcal{A}(\rho)\sum_i \ket{i}\bra{i}\\
				=&\mathrm{ tr}\mathcal{A}(\rho)\{\mathcal{A}(\rho)-e^{\epsilon}\mathcal{A}(\sigma)\ge 0\}\le \delta,
			\end{align*}
			hence, we finish the proof.
		\end{itemize}
	\end{proof}
	
	At last, we present an example to show that ($\epsilon,\delta$)-QPrLDP is not closed under a generic noisy operation. 
	\begin{example}\label{e2}
		
		Assume $$\rho=\begin{pmatrix}
			0.995&0\\
			0&0.005
		\end{pmatrix},\hspace{7mm}\sigma=\begin{pmatrix}
			0.4&0\\0&0.6
		\end{pmatrix},$$ let $\e=0.01$, $\delta=0.995$, then $$\mathrm{tr}\rho\{\rho-0.01\sigma> 0\}\le 0.995.$$
		
		Let $$\Lambda(\cdot)=p_1(\cdot)+p_2 U(\cdot)U^{\dagger},$$
		where $p_1=p_2=0.5$, $U=\begin{pmatrix}
			0&1\\1&0
		\end{pmatrix},$ $$\mathrm{tr}\Lambda(\rho)\{\Lambda(\rho)-0.01\Lambda(\sigma)\ge 0\}=1>0.995.$$ Hence, $\Lambda(\cdot)$ doesnot satisfy $(\e,\delta)$-QPrLDP property.
	\end{example}
	\begin{remark}
		The example \ref{e2} also tells that the set of $(\epsilon,\delta)$-QPrLDP is not convex. Assume 
		\begin{align*}
			\Lambda_1(\rho)=\rho,\hspace{8mm}\Lambda_2(\rho)=U\rho U^{\dagger}.
		\end{align*} As $\Lambda_1$ and $\Lambda_2$ are unitary channels, $\Lambda_i\in \mathcal{Q}^{0.01,0.995},$ $i=1,2.$ However, Example \ref{e2} tells us that $0.5\Lambda_1+0.5\Lambda_2$ is not in $\mathcal{Q}^{0.05,0.995}$, that is $\mathcal{Q}^{0.005,0.995}$ is not convex. More generally, $\mathcal{Q}^{\e,\delta}$ is not convex when $\delta>0.$
	\end{remark}
	\section{Relations between QLDP and QPrLDP}\label{IV}
	In this section, we will consider the relations between $(\epsilon_1,\delta_1)$-QPrLDP and $(\epsilon_2,\delta_2)$-QLDP when $\epsilon_1,\epsilon_2,\delta_1,\delta_2\ge 0$.
	\begin{Theorem}\label{t7}
		Assume $\mathcal{H}$ is a system with finite dimensions, $\mathcal{A}$ is a quantum superoperator acting on $\mathcal{H}$. If $\mathcal{A}$ is $(\epsilon,\delta)$-QPrLDP, then $\mathcal{A}$ is $(\epsilon,\delta)$-QLDP.
	\end{Theorem}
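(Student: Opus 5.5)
We reduce the claim to the hockey-stick divergence. By the definition of $(\e,\d)$-QLDP, and by the variational characterization of $E_\gamma$, it suffices to show that for all states $\rho,\sigma$
\begin{align*}
E_{e^{\e}}(\mathcal{A}(\rho)\|\mathcal{A}(\sigma))=\mathrm{tr}\big(\mathcal{A}(\rho)-e^{\e}\mathcal{A}(\sigma)\big)_{+}\le \d .
\end{align*}
Indeed, for any POVM $\{M_k\}_{k\in\mathcal{O}}$ and any subset $S\subseteq\mathcal{O}$, the operator $M_S=\sum_{k\in S}M_k$ satisfies $0\le M_S\le I$. For any Hermitian $K$ one has $\mathrm{tr}M_SK\le \mathrm{tr}K_{+}$. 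Applying this to $K=\mathcal{A}(\rho)-e^{\e}\mathcal{A}(\sigma)$ gives
\begin{align*}
\sum_{k\in S}\mathrm{tr}M_k\mathcal{A}(\rho)-e^{\e}\sum_{k\in S}\mathrm{tr}M_k\mathcal{A}(\sigma)\le \mathrm{tr}\big(\mathcal{A}(\rho)-e^{\e}\mathcal{A}(\sigma)\big)_+ ,
\end{align*}
which is exactly inequality (\ref{f1}) once the right-hand side is at most $\d$.

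The key step is to bound the positive part by the QPrLDP quantity. Let $P=\{\mathcal{A}(\rho)-e^{\e}\mathcal{A}(\sigma)>0\}$ be the projector onto the strictly positive eigenspace. Then $\mathrm{tr}(\mathcal{A}(\rho)-e^{\e}\mathcal{A}(\sigma))_+=\mathrm{tr}P(\mathcal{A}(\rho)-e^{\e}\mathcal{A}(\sigma))$. Since $\mathcal{A}(\sigma)\ge 0$, we have $\mathrm{tr}P\mathcal{A}(\sigma)\ge 0$, and therefore
\begin{align*}
\mathrm{tr}\big(\mathcal{A}(\rho)-e^{\e}\mathcal{A}(\sigma)\big)_+\le \mathrm{tr}\mathcal{A}(\rho)P\le \d ,
\end{align*}
where the last inequality is the QPrLDP hypothesis. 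If instead one uses the $\{\cdot\ge 0\}$ version from (\ref{edqprldp}), the same argument works: the zero eigenspace adds nothing to the positive part, and $\{\cdot\ge0\}\ge\{\cdot>0\}$, as noted in the Remark after the definition of QPrLDP.

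I do not expect a real obstacle. The only point needing care is the identity $\max_{0\le M\le I}\mathrm{tr}MK=\mathrm{tr}K_+$ (only its $\le$ direction is needed), which follows by diagonalizing $K$. One should also check that the subset sums $M_S$ indeed lie between $0$ and $I$, which is immediate from $\sum_k M_k=I$ and $M_k\ge0$.
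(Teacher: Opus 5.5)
Your proof is correct and follows the same route as the paper. You write $E_{e^{\epsilon}}(\mathcal{A}(\rho)\|\mathcal{A}(\sigma))=\mathrm{tr}P(\mathcal{A}(\rho)-e^{\epsilon}\mathcal{A}(\sigma))$ and bound it by $\mathrm{tr}P\mathcal{A}(\rho)\le\delta$ by dropping the nonnegative term $e^{\epsilon}\mathrm{tr}P\mathcal{A}(\sigma)$, which is exactly (\ref{qdrqpdr}). The only difference is that you check directly, via $\mathrm{tr}M_SK\le\mathrm{tr}K_+$ for $0\le M_S\le I$, that the hockey-stick bound gives the POVM-subset inequality defining QLDP, whereas the paper cites \cite{hirche2023quantum} for that step.
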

	\begin{proof}
		Based on the definition of $(\e,\d)$-QPrLDP, we have for any $\rho,\sigma\in \mathcal{D}(\mathcal{H})$,
		\begin{align*}
			\mathrm{tr}\mathcal{A}(\rho)\{\mathcal{A}(\rho)-e^{\epsilon}\mathcal{A}(\sigma)> 0\}\le\delta,
		\end{align*}
		then
		\begin{align}
			&E_{e^{\epsilon}}(\mathcal{A}(\rho)||\mathcal{A}(\sigma))\nonumber\\=&\mathrm{tr}(\mathcal{A}(\rho-e^{\epsilon}\sigma))_{+}\nonumber\\
			\le&\mathrm{tr}(\mathcal{A}(\rho-e^{\epsilon}\sigma))\{\mathcal{A}(\rho-e^{\epsilon}\sigma)> 0\}+e^{\epsilon}\mathrm{ tr}\mathcal{A}(\sigma)\{\mathcal{A}(\rho-e^{\epsilon}\sigma)> 0\}\label{qdrqpdr}\\
			=&\mathrm{tr}\mathcal{A}(\rho)\{\mathcal{A}(\rho-e^{\epsilon}\sigma)> 0\}\nonumber\\
			\le&\delta,\nonumber
		\end{align}
		hence, based on \cite{hirche2023quantum}, $\mathcal{A}$ is $(\epsilon,\delta)$-QLDP.
	\end{proof}
		\begin{remark}
		When $\mathcal{A}(\cdot)$ is $(\epsilon,0)$-QPrLDP, then if $\rho$ and $\sigma$ are two states,
		\begin{align*}
			&0= \mathrm{tr}\mathcal{A}(\rho)\{\mathcal{A}(\rho)-e^{\e}\mathcal{A}(\sigma)> 0\}
			=E_{e^{\e}}(\mathcal{A}(\rho)||\mathcal{A}(\sigma))+\mathrm{ tr}e^{\e}\mathcal{A}(\sigma)\{\mathcal{A}(\rho)-e^{\e}\mathcal{A}(\sigma)> 0\}=0\\
			\Longrightarrow& E_{e^{\e}}(\mathcal{A}(\rho)||\mathcal{A}(\sigma))=0.
		\end{align*}
		Hence,	 $\mathcal{A}$ is $(\epsilon,0)$-QLDP.  Besides, when $\mathcal{A}(\cdot)$ is $(\epsilon,0)$-QLDP, based on the relations between QLDP and QPrLDP, $\mathcal{A}(\cdot)$ is also $(\epsilon,0)$-QPrLDP. That is, $(\epsilon,0)$-QPrLDP $=(\epsilon,0)$-QLDP.
	\end{remark}
	
	Generally, if $\mathcal{A}$ is $(\epsilon,\delta)$-QLDP, as the second term in (\ref{qdrqpdr}) is not 0 in general, $\mathcal{A}$ is not $(\epsilon,\delta)$-QPrLDP. Neverthesless, when $(\epsilon_0,\delta_0)$ is not equal to $(\epsilon_1,\delta_1)$, a superoperator with $(\epsilon_0,\delta_0)$-QLDP is $(\epsilon_1,\delta_1)$-QPrLDP. 
	\begin{Theorem}
		Assume $\epsilon\ge \epsilon^{'}\ge 0,\delta\ge 0$, if $\mathcal{A}$ is $(\epsilon,\delta)$-QLDP, then $\mathcal{A}$ is $(\epsilon^{'},\frac{\delta}{(1-e^{\epsilon-\epsilon^{'}})})$-QPrLDP.
	\end{Theorem}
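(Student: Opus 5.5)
The plan is the quantum analogue of the classical argument that approximate DP implies probabilistic DP with a worse privacy parameter. Fix arbitrary states $\rho,\sigma$, write $\omega=\mathcal{A}(\rho)$ and $\tau=\mathcal{A}(\sigma)$, and let $P=\{\omega-e^{\e^{'}}\tau>0\}$ be the spectral projector onto the strictly positive part. The quantity to bound is $\mathrm{tr}\,\omega P$, which is exactly the left-hand side of the QPrLDP definition.

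\emph{Step 1: a lower bound from the definition of $P$.} Since $P$ projects onto the positive eigenspace of $\omega-e^{\e^{'}}\tau$, we have $\mathrm{tr}(\omega-e^{\e^{'}}\tau)P\ge 0$. Hence
\begin{align*}
e^{\e^{'}}\,\mathrm{tr}\,\tau P\le \mathrm{tr}\,\omega P .
\end{align*}

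\emph{Step 2: an upper bound from QLDP.} The pair $\{P,I-P\}$ is a two-outcome POVM. Applying the $(\e,\d)$-QLDP inequality (\ref{f1}) with $S$ the outcome corresponding to $P$ gives
\begin{align*}
\mathrm{tr}\,\omega P\le e^{\e}\,\mathrm{tr}\,\tau P+\d .
\end{align*}
Equivalently, one can use the hockey-stick characterization from \cite{hirche2023quantum}, $E_{e^{\e}}(\omega\|\tau)\le\d$, together with the variational formula $\mathrm{tr}(\omega-e^{\e}\tau)M\le E_{e^{\e}}(\omega\|\tau)$ for $0\le M\le I$.

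\emph{Step 3: combine.} Substituting Step 1 into Step 2 gives
\begin{align*}
\mathrm{tr}\,\omega P\le e^{\e-\e^{'}}\,\mathrm{tr}\,\omega P+\d ,
\end{align*}
that is, $(1-e^{\e-\e^{'}})\,\mathrm{tr}\,\omega P\le\d$. Dividing by $1-e^{\e-\e^{'}}$ yields the claimed bound $\mathrm{tr}\,\omega P\le \d/(1-e^{\e-\e^{'}})$. If the resulting bound exceeds $1$, the QPrLDP claim holds trivially.

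\emph{Main obstacle: the sign of $1-e^{\e-\e^{'}}$.} The division in Step 3 is only legitimate when this factor is positive, which requires $\e^{'}>\e$. Under the stated ordering $\e\ge\e^{'}$, the factor is non-positive and the argument breaks down. In that case the claimed bound is negative (or undefined when $\e=\e^{'}$), and the statement cannot hold as written. I would therefore prove the result under the hypothesis $\e^{'}>\e\ge 0$ and remark that this is the intended reading, i.e.\ the inequality between $\e$ and $\e^{'}$ in the statement appears to be reversed. The only other subtle point is in Step 1: it uses the strict-positivity projector $\{\cdot>0\}$, whereas the set (\ref{edqprldp}) is written with $\{\cdot\ge 0\}$. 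The same argument goes through for $\{\cdot\ge0\}$, because Step 1 only needs $\mathrm{tr}(\omega-e^{\e^{'}}\tau)P\ge 0$, which also holds for the projector onto the non-negative eigenspace.
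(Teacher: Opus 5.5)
Your proof is correct and matches the paper's argument: the paper applies the QLDP inequality to the projector $\{\mathcal{A}(\rho)-e^{\epsilon'}\mathcal{A}(\sigma)>0\}$ (phrased via a general POVM and the index set $T$), bounds $e^{\epsilon}\,\mathrm{tr}\,M\mathcal{A}(\sigma)\le e^{\epsilon-\epsilon'}\,\mathrm{tr}\,M\mathcal{A}(\rho)$, and then divides by $1-e^{\epsilon-\epsilon'}$. Your sign diagnosis is also right: under the stated hypothesis $\epsilon\ge\epsilon'$ that factor is non-positive, so the paper's final division is invalid and the statement as written is false or ill-defined; it holds once the ordering is reversed to $\epsilon'>\epsilon$, exactly as you propose.
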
 
	\begin{proof}
		As $\mathcal{A}$ is $(\epsilon,\delta)$-QLDP, for any $\rho,\sigma\in \mathcal{D}(\mathcal{H})$,
		\begin{align*}
			\mathrm{tr}[\mathcal{A}(\rho)-e^{\epsilon}\mathcal{A}(\sigma)]_{+}\le \delta.
		\end{align*}
		
		Assume $\{M_k\}$ is any POVM, $\rho$ and $\sigma$ are two states, $\mathrm{ tr}M_k(\mathcal{A}(\rho)-e^{\epsilon}\mathcal{A}(\sigma))\le \delta$, then let $T=\{k|\mathrm{ tr}[M_k\mathcal{A}(\rho)- e^{\epsilon^{'}}M_k\mathcal{A}(\sigma)]> 0\}$,
		\begin{align*}
			\sum_{k\in T}\mathrm{ tr}\mathcal{A}(\rho)M_k\le& \sum_{k\in T}e^{\epsilon}\mathrm{ tr}M_k\mathcal{A}(\sigma)+\delta\\
			\le&\sum_{k\in T}e^{\epsilon-\epsilon^{'}}\mathrm{ tr}M_k\mathcal{A}(\rho)+\delta,
		\end{align*}
		that is, $\sum_{k\in T}\mathrm{tr}M_k\mathcal{A}(\rho)\le \frac{\delta}{(1-e^{\epsilon-\epsilon^{'}})}.$ As when $M_k=\{\mathcal{A}(\rho)-e^{\e^{'}}\mathcal{A}(\s)>0\},$ $\mathrm{tr}M_k(\mathcal{A}(\r)-e^{\e^{'}}\mathcal{A}(\s))>0,$  Hence, based on the definition of QPrLDP, we finish the proof.
	\end{proof}
	
	\section{Examples}\label{V}
	In this section, we will present when a depolarizing noise is $(\epsilon,\delta)$-QPrLDP  under a serious of scenarios.
	\subsection{Global Depolarizing Noise}
	A depolarizing channel is defined as follows,
	\begin{align*}
		\mathcal{D}_p(\rho)=(1-p)\rho+p\frac{I}{d},
	\end{align*}
	here $p\in [0,1]$ and $d$ is the dimension of the system. 
	\begin{Lemma}\label{the gd}
		Assume $p\in [0,1]$, and $\epsilon\ge 0$, then 
		\begin{align*}
			\mathrm{ tr}\mathcal{D}_p(\rho)\{\mathcal{D}_p(\rho)-e^{\epsilon} \mathcal{D}_p(\sigma)> 0\}\le f(p),
		\end{align*}
		here $f(p)=(1-p)\mathrm{tr}\rho\{\rho-e^{\epsilon}\sigma\ge 0\}+p,$ that is, $\mathcal{D}_p(\cdot)$ is $(e^{\epsilon},f(p))$-QPrLDP.
	\end{Lemma}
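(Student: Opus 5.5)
The approach is to use the fact that the global depolarizing channel only adds a multiple of the identity. Such a shift does not change the eigenbasis of the difference operator, so the spectral projector after the channel can be compared directly with the one before it. Write $\gamma=e^{\epsilon}\ge 1$. By linearity,
\begin{align*}
\mathcal{D}_p(\rho)-\gamma\mathcal{D}_p(\sigma)=(1-p)(\rho-\gamma\sigma)-p(\gamma-1)\frac{I}{d}.
\end{align*}
Set $K=\rho-\gamma\sigma$ and $c=p(\gamma-1)/d\ge 0$. The operator above is $(1-p)K-cI$, which is diagonal in any eigenbasis $\{\ket{i}\}$ of $K$, with eigenvalues $(1-p)\lambda_i-c$.

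\textbf{Step 1 (the case $p=1$).} Here the left-hand side is at most $\mathrm{tr}\frac{I}{d}=1=f(1)$. So assume $p<1$ from now on.

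\textbf{Step 2 (projector comparison).} Let $P'=\{(1-p)K-cI>0\}$ and $P=\{K\ge 0\}$. Both are sums of $\ket{i}\bra{i}$ over subsets of the same eigenbasis. If $(1-p)\lambda_i-c>0$, then $\lambda_i>c/(1-p)\ge 0$. Hence every eigenvector kept in $P'$ is also kept in $P$, and so $0\le P'\le P$ in operator order. This step uses $\gamma\ge 1$ essentially, since it is what makes $c\ge 0$, and it is the only real content of the proof. I expect no further obstacle.

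\textbf{Step 3 (split the trace).} Write
\begin{align*}
\mathrm{tr}\,\mathcal{D}_p(\rho)P'=(1-p)\,\mathrm{tr}\rho P'+\frac{p}{d}\mathrm{tr}P'.
\end{align*}
Since $\rho\ge 0$ and $P'\le P$, the first term satisfies $\mathrm{tr}\rho P'\le\mathrm{tr}\rho P=\mathrm{tr}\rho\{\rho-e^{\epsilon}\sigma\ge 0\}$. Since $P'$ is a projector on a $d$-dimensional space, $\mathrm{tr}P'\le d$. Combining the two bounds gives $\mathrm{tr}\,\mathcal{D}_p(\rho)\{\mathcal{D}_p(\rho)-e^{\epsilon}\mathcal{D}_p(\sigma)>0\}\le f(p)$.

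\textbf{Step 4 (conclusion).} Taking the supremum over $\rho,\sigma$ then yields the stated QPrLDP guarantee for $\mathcal{D}_p$. The privacy parameter here is $\epsilon$, with the corresponding threshold $e^{\epsilon}$, and the failure probability is $\sup_{\rho,\sigma}f(p)$.
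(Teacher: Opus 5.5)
Your proof is correct and takes the same route as the paper. You write $\mathcal{D}_p(\rho)-e^{\epsilon}\mathcal{D}_p(\sigma)=(1-p)(\rho-e^{\epsilon}\sigma)-p(e^{\epsilon}-1)\frac{I}{d}$, note that the negative identity shift can only shrink the spectral projector so it is dominated by $\{\rho-e^{\epsilon}\sigma\ge 0\}$, and bound the identity part by $\frac{p}{d}\mathrm{tr}P'\le p$. Your version is slightly cleaner than the paper's: you state the inclusion $P'\le P$ in the correct direction (the paper writes it reversed), and you treat $p=1$ separately, where the paper's projector comparison actually fails at $\epsilon=0$ because $\{0\ge 0\}=I$.
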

	\begin{proof}
		Based on the definition of $\mathcal{D}_p(\cdot)$,
		\begin{align*}
			&\mathrm{tr}\mathcal{D}_p(\rho)\{\mathcal{D}_p(\rho)-e^{\e}\mathcal{D}_p(\sigma)>0\}\\\le&\mathrm{tr}\mathcal{D}_p(\rho)\{\mathcal{D}_p(\rho)-e^{\epsilon}\mathcal{D}_p(\sigma)\ge 0\}\\
			=&\mathrm{tr}[(1-p)\rho+p\frac{I}{d}]\{(1-p)(\rho-e^{\epsilon}\sigma)+p(1-e^{\epsilon})\frac{I}{d}\ge 0\}\\
			\le&\mathrm{tr}[(1-p)\rho+\frac{pI}{d}]\{\rho-e^{\epsilon}\sigma\ge 0\}\\
			=&(1-p)\mathrm{tr}\rho\{\rho-e^{\epsilon}\sigma\ge 0\}+\frac{p}{d}\mathrm{tr}\{\rho-e^{\epsilon}\sigma\ge0\}\\
			\le &(1-p)\mathrm{tr}\rho\{\rho-e^{\epsilon}\sigma\ge 0\}+p,
		\end{align*}
		here the first inequality is due to that $\{(1-p)(\rho-e^{\epsilon}\sigma)+p(1-e^{\epsilon})\frac{I}{d}\ge 0\}\ge \{(1-p)(\rho-e^{\epsilon}\sigma)\ge0\}$, the second inequality is due to that $dim\{\rho-e^{\epsilon}\sigma\ge 0\}\le d.$
		
	\end{proof}

	\begin{Corollary}
		Assume $\rho$ and $\sigma$ are two states with $\frac{1}{2}||\rho-\sigma||_1\le \chi$, then $\mathcal{D}_p$ is $(\epsilon,\delta)$-QPrLDP with 
		\begin{align*}
			\delta\ge (1-p)\chi+e^{\epsilon}(1+\chi)(1-p)+p,
		\end{align*}
		if $\mathcal{A}=\circ_{i=1}^n\mathcal{D}_{p_i}$, then $\mathcal{A}$ is $(\epsilon,\delta)$-QPrLDP with
		\begin{align*}
			\delta\ge & \Pi_{i=1}^n(1-p_i)\chi+e^{\e}(1+\chi)\Pi_{i=1}^n (1-p_i)+1-\Pi_{i=1}^n (1-p_i).
		\end{align*}
	\end{Corollary}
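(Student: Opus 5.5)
The plan is to start from Lemma \ref{the gd}, which bounds the violation probability by $f(p)=(1-p)\,\mathrm{tr}\rho\{\rho-e^{\epsilon}\sigma\ge 0\}+p$. What remains is to control the input-dependent term $\mathrm{tr}\rho\{\rho-e^{\epsilon}\sigma\ge 0\}$ through the trace-distance bound $\frac12\|\rho-\sigma\|_1\le\chi$. The target is
\[
\mathrm{tr}\rho\{\rho-e^{\epsilon}\sigma\ge 0\}\le \chi+e^{\epsilon}(1+\chi).
\]
Substituting this into $f(p)$ gives exactly $(1-p)\chi+e^{\epsilon}(1+\chi)(1-p)+p$, so any $\delta$ at least this large works.

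To obtain the target, write $P=\{\rho-e^{\epsilon}\sigma\ge 0\}$ and split
\[
\mathrm{tr}\rho P=\mathrm{tr}(\rho-\sigma)P+\mathrm{tr}\sigma P .
\]
For the first term, use $0\le P\le I$ and the variational form of the trace distance, which gives $\mathrm{tr}(\rho-\sigma)P\le \mathrm{tr}(\rho-\sigma)_+\le\chi$. For the second term, since $e^{\epsilon}\ge 1$, we have $\mathrm{tr}\sigma P\le e^{\epsilon}\,\mathrm{tr}\sigma P\le e^{\epsilon}\,\mathrm{tr}\sigma\le e^{\epsilon}(1+\chi)$. This is loose but matches the stated constant. (Alternatively, bound $\mathrm{tr}\sigma P\le \mathrm{tr}\rho P+\chi$ and close the estimate; the crude bound already suffices.) The bound holds for every pair satisfying the hypothesis, so $\mathcal{D}_p$ is $(\epsilon,\delta)$-QPrLDP on such pairs.

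For the composition $\mathcal{A}=\circ_{i=1}^n\mathcal{D}_{p_i}$, first check directly that depolarizing channels compose into a depolarizing channel:
\[
\mathcal{D}_{p_2}\circ\mathcal{D}_{p_1}(\rho)=(1-p_2)(1-p_1)\rho+\bigl(1-(1-p_1)(1-p_2)\bigr)\tfrac{I}{d}.
\]
By induction, $\mathcal{A}=\mathcal{D}_{q}$ with $1-q=\prod_{i=1}^n(1-p_i)$. Applying the first part with $p$ replaced by $q$ then yields the second bound verbatim.

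The only step requiring care is the first-term estimate $\mathrm{tr}(\rho-\sigma)P\le\chi$, and specifically making sure the projector $P$ is a valid test with $0\le P\le I$ so that the Helstrom variational bound applies. Everything else is direct substitution, and the slack in the constant $e^{\epsilon}(1+\chi)$ means no sharper analysis is needed. I will also note that the bound is meaningful only when it is at most $1$, since otherwise it is trivial.
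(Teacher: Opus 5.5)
Your argument is correct and has the same skeleton as the paper's proof: apply Lemma~\ref{the gd}, bound $\mathrm{tr}\rho P\le\chi+e^{\epsilon}(1+\chi)$ with $P=\{\rho-e^{\epsilon}\sigma\ge 0\}$, then collapse $\circ_i\mathcal{D}_{p_i}$ into $\mathcal{D}_q$ with $1-q=\prod_i(1-p_i)$. The paper only asserts this collapse, and its last display has $\prod_i p_i$ where $\prod_i(1-p_i)$ is meant, so your explicit computation is cleaner.

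The difference is the middle estimate. The paper splits at threshold $e^{\epsilon}$, writing $\mathrm{tr}\rho P=E_{e^{\epsilon}}(\rho\|\sigma)+e^{\epsilon}\mathrm{tr}\sigma P$ as in Lemma~\ref{de}, and uses $E_{e^{\epsilon}}(\rho\|\sigma)\le\chi$. It then bounds $e^{\epsilon}\mathrm{tr}\sigma P$ by $\inf_s e^{\epsilon(2-s)}\mathrm{tr}\rho^s\sigma^{1-s}$ via Audenaert's inequality (Lemma~\ref{na}), and that by $e^{\epsilon}(1+\chi)$ via the operator-monotonicity argument of Lemma~\ref{lnain}. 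You split at threshold $1$ and need only the variational form of the trace distance and $\mathrm{tr}\sigma P\le 1$. Your shortcut loses nothing. Audenaert's inequality with $A=\rho$, $B=e^{\epsilon}\sigma$ gives $\inf_s e^{\epsilon(1-s)}\mathrm{tr}\rho^s\sigma^{1-s}\ge 1-E_{e^{\epsilon}}(\rho\|\sigma)$. Hence the paper's intermediate quantity $E_{e^{\epsilon}}(\rho\|\sigma)+\inf_s e^{\epsilon(2-s)}\mathrm{tr}\rho^s\sigma^{1-s}$ is already at least $1$, and the extra machinery buys nothing for this corollary.

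Two side remarks. First, your closing caveat can be sharpened. Since $e^{\epsilon}(1+\chi)(1-p)\ge 1-p$, the threshold always satisfies $(1-p)\chi+e^{\epsilon}(1+\chi)(1-p)+p\ge 1+(1-p)\chi\ge 1$. So the corollary as stated already follows from $\mathrm{tr}\,\mathcal{D}_p(\rho)\Pi\le 1$ for any projector $\Pi$.

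Second, your parenthetical alternative does not close: substituting $\mathrm{tr}\sigma P\le\mathrm{tr}\rho P+\chi$ into $\mathrm{tr}\rho P\le\chi+\mathrm{tr}\sigma P$ is circular. The inequality that works is $e^{\epsilon}\mathrm{tr}\sigma P\le\mathrm{tr}\rho P$, valid because $\mathrm{tr}(\rho-e^{\epsilon}\sigma)P=E_{e^{\epsilon}}(\rho\|\sigma)\ge 0$. Combined with your split, it gives $\mathrm{tr}\rho P\le\chi/(1-e^{-\epsilon})$ for $\epsilon>0$. This yields the guarantee $(1-p)\chi/(1-e^{-\epsilon})+p$ for $\mathcal{D}_p$, which is non-trivial whenever $\chi<1-e^{-\epsilon}$ and $p<1$. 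The composition case follows with $p$ replaced by $q$.
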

	\begin{proof}
		Based on Lemma \ref{the gd}, we have
		\begin{align*}
			&\mathrm{tr}\mathcal{D}_p\{\mathcal{D}_p(\rho)-e^{\epsilon}\mathcal{D}_p(\sigma)> 0\}\\
			\le& (1-p){D}_{e^{\epsilon}}(\rho||\sigma)+p\\
			\le&(1-p) E_{e^{\epsilon}}(\rho|||\sigma)+(1-p)\inf_{s\in [0,1]}e^{\epsilon(2-s)}\mathrm{tr}\rho^s\sigma^{1-s}+p\\
			\le& (1-p)\chi+e^{\epsilon}(1+\chi)(1-p)+p.
		\end{align*}
		The last inequality is due to $E_{e^{\e}}(\rho||\sigma)\le \frac{1}{2}||\rho-\sigma||_1\le \chi$ \cite{hirche2023quantum} and Lemma \ref{lnain}.
		
		As $\frac{1}{2}||\rho-\sigma||_1\le \chi$, then 
		\begin{align*}
			&\frac{1}{2}||\mathcal{D}_p(\rho)-\mathcal{D}_p(\sigma)||_1\\
			\le&\frac{1}{2}||(1-p)\rho+\frac{pI}{d}-(1-p)\sigma-\frac{pI}{d}||_1 \\
			=&\frac{1-p}{2}||\rho-\sigma||_1,
		\end{align*}
		If $\mathcal{A}=\circ_{i=1}^n\mathcal{D}_{p_i}$,  then $\circ_{i=1}^n \mathcal{D}_{p_i}$ can be seen as $\mathcal{D}_{1-\Pi_{i=1}^n(1-p_i)}$, hence,
		\begin{align*}
			&\mathrm{ tr}\mathcal{A}(\rho)\{\mathcal{A}(\rho)-\epsilon\mathcal{A}(\sigma)> 0\}
			\le \Pi_{i=1}^np_i\chi+e^{\e}(1+\chi)\Pi_{i=1}^n p_i+1-\Pi_{i=1}^np_i.
		\end{align*}
	\end{proof}
	
	In Fig \ref{fig1}, we present the function $\delta=f(\epsilon)$ against $\epsilon$ for the global depolarizing noise channel with $\mathcal{D}_{\epsilon}(\rho||\sigma)=0.1$ when $p=0.05$, $p=0.15$ and $p=0.25.$ 	In Fig \ref{fig2}, we plot the function on $\delta$ after $n$ compositions of the global depolarizing noise channel with $\mathcal{D}_{\epsilon}(\rho||\sigma)=0.1$ when $p=0.05$, $p=0.15$ and $p=0.25.$ 
	\begin{figure}[htbp] % 强制当前位置
		\centering
		\includegraphics[width=0.55\textwidth]{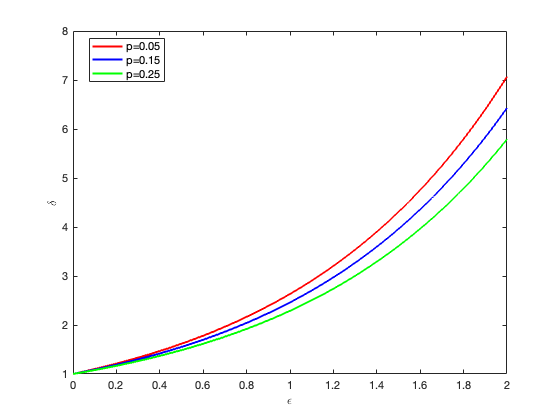} % 替换为你的图片名
		\caption{The QPrDP for the global depolarizing noise. Here $\rho$ and $\sigma$ are states with $E_1(\rho||\sigma)=\chi=0.01.$ The red line, blue line and green line denote the function $\delta=f(\epsilon)$ when $p=0.05$, $p=0.15$ and $p=0.25,$ respectively.}
		\label{fig1}
	\end{figure}
	
	\begin{figure}[htbp] % 强制当前位置
		\centering
		\includegraphics[width=0.55\textwidth]{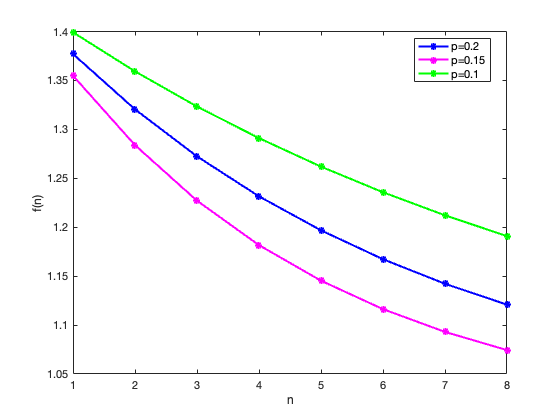} % 替换为你的图片名
		\caption{The QPrDP for the $n$ compositions of the global depolarizaing noise. Here $\epsilon=0.2,$ $\rho$ and $\sigma$ are states with $E_1(\rho||\sigma)=0.1.$ The blue line, red line and green line denote the function $\delta=f(n)$ when $p=0.1$, $p=0.15$ and $p=0.2,$ respectively.}
		\label{fig2}
	\end{figure}
	
	Alternatively, we could also derive a similar result of the bounds on $\epsilon$ in the following corollary.
	\begin{Corollary}
		Assume $\rho$ and $\sigma$ are two states with $\frac{1}{2}||\rho-\sigma||_1\le \chi$, then $\mathcal{D}_p$ is $(\epsilon,\delta)$-QPrLDP with 
		\begin{align*}
			\epsilon\le \log \frac{\delta-p-(1-p)\chi}{(1+\chi)(1-p)},
		\end{align*}
		if $\mathcal{A}=\circ_{i=1}^n\mathcal{D}_{p_i}$, then $\mathcal{A}$ is $(\epsilon,\delta)$-QPrLDP with
		\begin{align*}
			\epsilon\ge \log \frac{\delta-1+\Pi_{i=1}^np_i-\Pi_{i=1}^np_i\chi}{(1+\chi)\Pi_{i=1}^np_i}.
		\end{align*}
	\end{Corollary}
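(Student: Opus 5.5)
The plan is to invert the bounds of the preceding Corollary, taking its composed-channel bound exactly in the form that appears at the end of its proof:
\begin{align*}
\mathrm{tr}\mathcal{A}(\rho)\{\mathcal{A}(\rho)-e^{\e}\mathcal{A}(\sigma)>0\}\le \Pi_{i=1}^np_i\chi+e^{\e}(1+\chi)\Pi_{i=1}^np_i+1-\Pi_{i=1}^np_i .
\end{align*}
For a single channel, the starting point is Lemma \ref{the gd} together with $E_{e^{\e}}(\rho||\sigma)\le\chi$ and Lemma \ref{lnain}. These give
\begin{align*}
\mathrm{tr}\mathcal{D}_p(\rho)\{\mathcal{D}_p(\rho)-e^{\e}\mathcal{D}_p(\sigma)>0\}\le (1-p)\chi+e^{\e}(1+\chi)(1-p)+p .
\end{align*}
In both cases the argument then reads the bound as a condition on $\e$ for a fixed target $\delta$.

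\textbf{First statement.} Requiring the right-hand side to be at most $\delta$ and solving for $e^{\e}$ gives
\begin{align*}
e^{\e}\le\frac{\delta-p-(1-p)\chi}{(1+\chi)(1-p)} .
\end{align*}
Here I divide by $(1+\chi)(1-p)>0$; the degenerate case $p=1$ is trivial. Taking $\log$ of both sides yields the stated range of $\e$. For every such $\e$ the violation probability is at most $\delta$, so $\mathcal{D}_p$ is $(\e,\delta)$-QPrLDP. Implicitly this requires $\delta>p+(1-p)\chi$, so that the logarithm is defined.

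\textbf{Second statement.} Write $P=\Pi_{i=1}^np_i$. The composed-channel bound above is affine in $e^{\e}$ with positive slope $(1+\chi)P$. The QPrLDP conclusion holds whenever this bound is at most $\delta$, which is the condition $e^{\e}(1+\chi)P\le \delta-1+P-P\chi$. I would state this threshold explicitly as
\begin{align*}
\e_0=\log\frac{\delta-1+P-P\chi}{(1+\chi)P}
\end{align*}
and then relate the statement's inequality $\e\ge\e_0$ to it.

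\textbf{Main obstacle.} A direct inversion of the bound gives $\e\le\e_0$, which is a sufficient condition and the opposite direction to what is written. The step I would try first is to use the monotonicity of QPrLDP in $\e$: since $\{\mathcal{A}(\rho)-e^{\e}\mathcal{A}(\sigma)>0\}$ only shrinks as $\e$ grows, an $(\e_0,\delta)$-QPrLDP channel is $(\e,\delta)$-QPrLDP for every $\e\ge\e_0$. So the plan is to show that $\mathcal{A}$ is QPrLDP at the threshold $\e_0$ itself, and then propagate upward to get the stated "$\e\ge$" form.

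To establish privacy at $\e_0$, I would verify that the bound equals exactly $\delta$ there, by substitution. This requires $\e_0\ge0$, or at least that $\e_0$ is well defined, which amounts to $\delta-1+P-P\chi>0$; I would add this as a standing assumption. The remaining care is purely in checking these positivity and domain conditions.
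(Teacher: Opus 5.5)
Your handling of the first claim matches what the paper intends. The paper gives no separate proof; it simply inverts the bound of the preceding Corollary, which produces a sufficient condition of the form $\epsilon\le\log(\cdots)$.

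The gap is in your rescue of the second claim. Write $V(\epsilon)=\mathrm{tr}\mathcal{A}(\rho)\{\mathcal{A}(\rho)-e^{\epsilon}\mathcal{A}(\sigma)>0\}$ and $g(\epsilon)=P\chi+e^{\epsilon}(1+\chi)P+1-P$. To get privacy at $\epsilon_0$ by substitution you need $V(\epsilon_0)\le g(\epsilon_0)$. That inequality comes from Lemma \ref{the gd} and Lemma \ref{lnain}, which both require $e^{\epsilon}\ge1$.

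\begin{itemize}
\item $\epsilon_0\ge0$ is equivalent to $\delta\ge1+2P\chi\ge1$. In that regime every channel is trivially $(\epsilon,\delta)$-QPrLDP, so the argument is valid but empty.
\item For any $\delta<1$ one has $\epsilon_0<0$. There the bound is not available and in general fails.
\end{itemize}

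Your fallback assumption, that $\epsilon_0$ merely be well defined, does not help, and the claim itself is false. Take $n=1$, $p_1=0.9$, $\chi=0.1$, $\delta=0.5$, $\rho=\mathrm{diag}(0.6,0.4)$ and $\sigma=I/2$.
\begin{itemize}
\item Then $\delta-1+P-P\chi=0.31>0$, so $\epsilon_0=\log(0.31/0.99)<0$ is well defined.
\item We get $\mathcal{D}_{0.9}(\rho)=\mathrm{diag}(0.51,0.49)$ and $\mathcal{D}_{0.9}(\sigma)=I/2$.
\item For $\epsilon=0.01\ge\epsilon_0$ the violation projector is $\mathrm{diag}(1,0)$, so the violation probability is $0.51>\delta$.
\end{itemize}
The ``domain conditions'' you defer to the end are exactly where the argument breaks.

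The underlying reason is that $g(\epsilon)\ge1$ for every $\epsilon\ge0$, since $e^{\epsilon}(1+\chi)\ge1$. The same holds for the single-channel bound $(1-p)\chi+e^{\epsilon}(1+\chi)(1-p)+p$. So inverting these bounds can only ever certify a sufficient condition $\epsilon\le\epsilon_0$, and only when $\delta\ge1$.

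The statement as written contains two misprints, and you should flag both rather than try to prove it:
\begin{itemize}
\item The ``$\ge$'' in the second claim should be ``$\le$''.
\item $\Pi_ip_i$ should be $\Pi_i(1-p_i)$, because $\circ_i\mathcal{D}_{p_i}=\mathcal{D}_{1-\Pi_i(1-p_i)}$. You imported the $\Pi_ip_i$ form from the last line of the previous proof without checking it.
\end{itemize}

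A smaller point concerns your monotonicity step. The projectors $\{\mathcal{A}(\rho)-e^{\epsilon}\mathcal{A}(\sigma)>0\}$ are not nested in $\epsilon$ when $\mathcal{A}(\rho)$ and $\mathcal{A}(\sigma)$ do not commute. What does hold is that $V(\epsilon)$ is nonincreasing. This follows from the Neyman--Pearson optimality of these projectors for $\max_{0\le T\le I}\mathrm{tr}\,T(\mathcal{A}(\rho)-t\mathcal{A}(\sigma))$.
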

	Next we present an example when a channel generated from a depolarizing noise is $(\epsilon,\delta)$-QPrLDP.
	
	\begin{Corollary}\label{DpQPrLDP}
		When
		\begin{align*}
			p\in [0,\frac{2E_{e^{\e}}(\r||\s)}{2E_{e^{\e}}(\r||\s)+e^{\e}-1}),\hspace{6mm}	\delta_1=\max(\mathrm{tr}\rho\{\rho-e^{\e}\sigma\ge 0\},\frac{(e^{\e}-1)\mathrm{tr}\rho\{\rho-e^{\e}\sigma\ge 0\}+E_{e^{\e}}(\r||\s)}{2E_{\e}(\r||\s)+e^{\e}-1}),
		\end{align*}or $p\ge\frac{2E_{e^{\e}}(\r||\s)}{2E_{e^{\e}}(\r||\s)+e^{\e}-1}$, $\d>0,$
		then there exists a projection operator $0\le M\le I$ such that $\mathcal{D}_p\circ\mathcal{M}$ is $(\epsilon,\delta)$-QPrLDP, where $\mathcal{M}$ and $\mathcal{D}_p(\cdot)$ are defined as follows,
		\begin{align*}
			\mathcal{M}(\cdot)=&\mathrm{tr} M(\cdot) \ket{0}\bra{0}+\mathrm{tr}(I-M)(\cdot)\ket{1}\bra{1},\\
			\mathcal{D}_p(\cdot)=&(1-p)(\cdot)+\frac{p}{2}I.
		\end{align*}
		
	\end{Corollary}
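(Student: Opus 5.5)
The plan is to take $M$ to be the quantum Neyman--Pearson projector $M=\{\rho-e^{\e}\sigma\ge 0\}$ and compute the violation probability of $\mathcal{D}_p\circ\mathcal{M}$ directly. Throughout, the statement is read pair-specifically, for the fixed pair $(\r,\s)$ appearing in $\delta_1$ and in the threshold on $p$, so the claim to prove is $\mathrm{tr}\,\mathcal{D}_p\mathcal{M}(\rho)\{\mathcal{D}_p\mathcal{M}(\rho)-e^{\e}\mathcal{D}_p\mathcal{M}(\sigma)>0\}\le\delta$.

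\emph{Reduction to a classical problem.} Set $a=\mathrm{tr}\rho M$ and $b=\mathrm{tr}\sigma M$. By definition of $M$, $a-e^{\e}b=\mathrm{tr}(\rho-e^{\e}\sigma)_+=E_{e^{\e}}(\r||\s)=:E$, and $a=\mathrm{tr}\rho\{\rho-e^{\e}\sigma\ge0\}$. Both outputs are diagonal in $\{\ket{0},\ket{1}\}$:
\begin{align*}
\mathcal{D}_p\mathcal{M}(\rho)&=\mathrm{diag}\big((1-p)a+\tfrac p2,\,(1-p)(1-a)+\tfrac p2\big),\\
\mathcal{D}_p\mathcal{M}(\sigma)&=\mathrm{diag}\big((1-p)b+\tfrac p2,\,(1-p)(1-b)+\tfrac p2\big).
\end{align*}
The spectral projector $\{\cdot>0\}$ is therefore a sum of those $\ket{i}\bra{i}$ for which the $i$-th diagonal entry of $\mathcal{D}_p\mathcal{M}(\rho)-e^{\e}\mathcal{D}_p\mathcal{M}(\sigma)$ is positive. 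The violation probability is the sum of the corresponding entries of $\mathcal{D}_p\mathcal{M}(\rho)$.

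\emph{Coordinatewise analysis.} For $i=0$, the entry is $(1-p)E-\frac p2(e^{\e}-1)$. This is positive iff $p<\frac{2E}{2E+e^{\e}-1}$, which is exactly the threshold in the statement. For $i=1$, the entry is $(1-p)\big[(1-e^{\e})-E\big]-\frac p2(e^{\e}-1)$. This is always $\le 0$, since $e^{\e}\ge1$ and $E\ge0$.

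\emph{The two regimes.}
\begin{itemize}
\item If $p\ge\frac{2E}{2E+e^{\e}-1}$, neither coordinate violates. The violation probability is $0$, so any $\delta>0$ works.
\item If $p$ is below the threshold, the violation probability is $x_0(p)=(1-p)a+\frac p2$. This is affine in $p$, so on the interval it is bounded by the larger of its endpoint values. At $p=0$ the value is $a=\mathrm{tr}\rho\{\rho-e^{\e}\sigma\ge0\}$. At $p^*=\frac{2E}{2E+e^{\e}-1}$, using $1-p^*=\frac{e^{\e}-1}{2E+e^{\e}-1}$, the value is $\frac{(e^{\e}-1)a+E}{2E+e^{\e}-1}$.
\end{itemize}
This gives exactly $\delta_1$.

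\emph{Expected obstacle.} The main point needing care is the sign analysis of the $i=1$ coordinate, together with the boundary conventions: strict versus non-strict inequality at $p=p^*$, the degenerate cases $E=0$ or $\e=0$, and reading the $E_{\e}$ in the statement's denominator as $E_{e^{\e}}$. I would handle these by treating the case $e^{\e}=1$ separately; there the threshold is $1$ when $E>0$, and the same affine bound applies.
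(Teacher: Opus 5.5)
Your proof is correct and is essentially the paper's argument: the same binary Neyman--Pearson measurement followed by $\mathcal{D}_p$, the same coordinatewise sign analysis of $\mathcal{D}_p\circ\mathcal{M}(\rho-e^{\e}\sigma)$, and the same two regimes in $p$, all read for the fixed pair $(\rho,\sigma)$ as the paper's proof also does. The differences are small. The paper uses the strict projector $\{\rho-e^{\e}\sigma>0\}$ and simply asserts $(1-p)\mathrm{tr}M\rho+\frac{p}{2}\le\delta$. Your non-strict projector reproduces $\delta_1$ exactly, and your endpoint argument for the affine function supplies the justification for that bound that the paper omits.
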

	\begin{proof}
		Let $M=\{\rho-e^{\e}\sigma> 0\},$  
		\begin{align*}
			&\mathcal{D}_p\circ \mathcal{M}(\rho-e^{\e}\sigma)\\=&\mathcal{D}_p(\mathrm{ tr}(M(\rho-e^{\e}\sigma))\ket{0}\bra{0}+\mathrm{tr}(I-M)(\rho-e^{\e}\sigma)\ket{1}\bra{1})\\
			=&(1-p)[\mathrm{tr}M(\rho-e^{\e}\sigma)\ket{0}\bra{0}+\mathrm{tr}(I-M)(\rho-e^{\e}\sigma)\ket{1}\bra{1}]+\frac{p}{2}(1-e^{\e})I\\
			=&(1-p)E_{e^{\e}}(\rho||\sigma)\ket{0}\bra{0}+(1-p)(1-e^{\e}-E_{e^{\e}}(\rho||\sigma))\ket{1}\bra{1}+\frac{p}{2}(1-e^{\e})I\\
			=&[(1-p)E_{e^{\e}}(\rho||\sigma)+\frac{p}{2}(1-e^{\e})]\ket{0}\bra{0}+[(1-p)(1-e^{\e}-E_{e^{\e}}(\rho||\sigma))+\frac{p}{2}(1-e^{\e})]\ket{1}\bra{1}\\
			=&             [E_{e^{\e}}(\rho||\sigma)+\frac{p}{2}(1-e^{\e}-2E_{e^{\e}}(\rho||\sigma))] \ket{0}\bra{0}    +[p(\frac{1}{2}e^{\e}-\frac{1}{2}+E_{e^{\e}}(\r||\s))+1-e^{\e}-E_{e^{\e}}(\rho||\s)]\ket{1}\bra{1},\\
			&\mathcal{D}_p\circ\mathcal{M}(\rho)\\
			=&\mathcal{D}_p(\mathrm{tr}(M\rho)\ket{0}\bra{0}+\mathrm{tr}(I-M)\rho\ket{1}\bra{1})\\
			=&[(1-p)\mathrm{tr}(M\rho)+\frac{p}{2}]\ket{0}\bra{0}+[(1-p)\mathrm{tr}(I-M)\rho+\frac{p}{2}]\ket{1}\bra{1},
		\end{align*}
		when $ p\in [0,\frac{2E_{\e}(\r||\s)}{2E_{\e}(\r||\s)+e^{\e}-1}),$ $$E_{e^{\e}}(\rho||\sigma)+\frac{p}{2}(1-e^{\e}-2E_{e^{\e}}(\rho||\sigma))> E_{e^{\e}}(\rho||\sigma)+\frac{E_{\e}(\rho||\sigma)}{2E_{\e}(\rho||\sigma)+e^{\e}-1}\times(1-2E_{\e}(\rho||\sigma)-e^{\e})\ge 0,$$
		\begin{align*}
			&\mathrm{tr}\mathcal{D}_p\circ \mathcal{M}(\rho)\{\mathcal{D}_p\circ\mathcal{M}(\rho-e^{\e}\sigma) >0\}\\
			=&(1-p)\mathrm{tr}M\rho+\frac{p}{2}\\
			\le& \delta.
		\end{align*}
		When $p\ge \frac{2E_{\e^{\e}(\rho||\sigma)}}{e^{\e}-1+2E_{e^{\e}}(\r||\s)},$ $(1-p)E_{\e}(\rho||\sigma)+\frac{p}{2}(1-e^{\e})\le 0$ and $(1-p)(1-e^{\e}-E_{e^{\e}}(\rho||\sigma))+\frac{p}{2}(1-e^{\e})\le 0,$ then $\{\mathcal{D}_p\circ\mathcal{M}(\rho-e^{\e}\sigma)> 0\}=0,$
		\begin{align*}
			\mathrm{tr}\mathcal{D}_p\circ \mathcal{M}(\rho)\{\mathcal{D}_p\circ\mathcal{M}(\rho-e^{\e}\sigma) >0\}=0\le \delta.
		\end{align*}
		Based on the definition of ($\epsilon$,$\delta$)-QPrLDP, we finish the proof of this theorem.
	\end{proof}

	\subsection{Local Depolarizing Noise}
	In the last subsection, we addressed how the quantum superoperators are affected by the global depolarizing noise. Nevertheless, it is necessary to consider each local system affected by local noise. Here we will address the local depolarizing noise with the form $\mathcal{D}_p^{\otimes k}(\cdot)$, $k$ is the number of local systems.
	\begin{Theorem}
		Assume $p\in [0,1]$, $\gamma\ge 1$, and $\rho,\sigma\in \mathcal{D}(\mathcal{H}_{A_1A_2\cdots A_k})$ with $\frac{1}{2}||\rho-\sigma||_1\le\eta,$
		\begin{align*}
		\mathrm{tr}\mathcal{D}_p^{\otimes k}(\rho)\{\mathcal{D}_p^{\otimes k}(\rho)-\gamma\mathcal{D}_p^{\otimes k}(\sigma)> 0\}\le f(\eta).
		\end{align*}
		here $f(\eta)=(1-p)^{2k}(1-\eta^2)+\frac{2(1-\eta^2)}{d^n}[1-p^k-(1-p)^k]+\frac{2p^k-p^{2k}-p^k(1-p)^k}{d^k}+(1-p^m-(1-p)^m)^2.$
	\end{Theorem}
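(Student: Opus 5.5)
Since $\mathcal{D}_p(X)=(1-p)X+p\,\mathrm{tr}(X)\frac{I}{d}$ on each factor, the first step is to expand the local noise as a sum over subsets of corrupted sites:
\begin{align*}
\mathcal{D}_p^{\otimes k}(\rho)=\sum_{S\subseteq\{1,\dots,k\}}(1-p)^{k-|S|}p^{|S|}\,\rho_{\bar S}\otimes\frac{I_S}{d^{|S|}},
\end{align*}
where $\rho_{\bar S}$ is the marginal of $\rho$ on the uncorrupted sites. I would group this as $(1-p)^k\rho+p^k\frac{I}{d^k}+R_\rho$, with $R_\rho$ the mixed terms ($\emptyset\ne S\ne$ whole set). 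It is a positive operator of trace $1-p^k-(1-p)^k$. The same decomposition is used for $\sigma$.

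Next, as in Lemma \ref{the gd}, I would pass from the strict projector to $\{\cdot\ge 0\}$. Then the quantity $\mathrm{tr}\,\mathcal{D}_p^{\otimes k}(\rho)\,P$ with $P=\{\mathcal{D}_p^{\otimes k}(\rho)-\gamma\mathcal{D}_p^{\otimes k}(\sigma)\ge0\}$ is bounded via Cauchy--Schwarz/Hölder. The form of $f(\eta)$, with squared coefficients $(1-p)^{2k}$, $(1-p^k-(1-p)^k)^2$ and cross terms $2(\cdot)(\cdot)$, strongly suggests bounding $\bigl(\mathrm{tr}\,\omega P\bigr)$ by something like $\mathrm{tr}\,\omega^2$-type quantities. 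Concretely, expanding $\omega=A+B+C$ (the three groups), one gets products $(1-p)^{2k}$, $2(1-p)^kp^k/d^k$, $p^{2k}/d^k$, cross terms with $R_\rho$, and $\|R_\rho\|^2\le(1-p^k-(1-p)^k)^2$.

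The factor $(1-\eta^2)$ should enter through the Fuchs--van de Graaf inequality. That inequality gives $F(\rho,\sigma)\ge$ something, or dually $\mathrm{tr}\rho\sigma$-type overlaps, in terms of $1-\eta^2$. This is used to control the overlap between the $\rho$ and $\sigma$ contributions that survive in $P$ when $\gamma\ge1$. For the identity pieces I would use $\mathrm{tr}P\le d^k$ and $\mathrm{tr}(I\cdot X)/d^k\le 1/d^k$. That yields the $\frac{2p^k-p^{2k}-p^k(1-p)^k}{d^k}$ term after collecting $p^k(1-(1-p)^k)$ contributions and subtracting the double-counted $p^{2k}$.

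The main obstacle is the step turning the projector $P$ into the overlap-type quantities with $(1-\eta^2)$. The projector is not a function of $\rho$ alone, and a linear bound like $\mathrm{tr}\,\omega P\le1$ loses all structure. I would first try the operator inequality $P\le \bigl(\mathcal{D}_p^{\otimes k}(\rho)\bigr)^{1/2}\bigl(\gamma\mathcal{D}_p^{\otimes k}(\sigma)\bigr)^{-1/2}$-type (Audenaert) estimate, $\mathrm{tr}\,\omega P\le \gamma^{-s}\mathrm{tr}\,\omega^{1+s}\tau^{-s}$ type. If that fails, I would fall back to bounding each grouped term separately, using monotonicity of the projector as in Lemma \ref{the gd}. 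The remaining bookkeeping of the combinatorial coefficients is routine.
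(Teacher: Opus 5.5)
There is a genuine gap, and it sits exactly at the step you flag as the main obstacle. That step cannot be closed, because the inequality as stated is false. Under the hypothesis $\frac12\|\rho-\sigma\|_1\le\eta$, the Fuchs--van de Graaf inequalities $1-\sqrt{F}\le\frac12\|\rho-\sigma\|_1\le\sqrt{1-F}$ give only the \emph{lower} bound $F\ge(1-\eta)^2$ on the overlap. An upper bound such as $\mathrm{tr}\,\rho\sigma\le 1-\eta^2$ would need the reverse hypothesis $\frac12\|\rho-\sigma\|_1\ge\eta$.

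Structurally, write $\omega=\mathcal{D}_p^{\otimes k}(\rho)$, $\tau=\mathcal{D}_p^{\otimes k}(\sigma)$ and $P=\{\omega-\gamma\tau>0\}$. One always has $\mathrm{tr}\,\omega P=E_\gamma(\omega\|\tau)+\gamma\,\mathrm{tr}\,\tau P\ge E_\gamma(\omega\|\tau)$, so the left-hand side grows with the distinguishability of the inputs, whereas $f$ is decreasing in $\eta$. Since the hypothesis is vacuous at $\eta=1$, the claim would have to hold for every pair of states with right-hand side $f(1)$. It does not:
\begin{itemize}
\item Take $p=0$, $\gamma=1$, $\rho=|0\rangle\langle0|^{\otimes k}$, $\sigma=|1\rangle\langle1|^{\otimes k}$ and $\eta=1$. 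The left side equals $1$, while every term of $f(1)$ vanishes, however one reads the undefined $d^n$ and $p^m$ (with $m\ge1$).
\item Take $k=1$, $d=2$ and any $p\in[0,1)$. Then $\omega=\mathrm{diag}(1-\frac p2,\frac p2)$ and $\tau=\mathrm{diag}(\frac p2,1-\frac p2)$, so the left side is $1-\frac p2$ against $f(1)=\frac p2$.
\end{itemize}
The Audenaert route you mention does not help either. First, $P\le\omega^{1/2}(\gamma\tau)^{-1/2}$ is not a valid operator inequality; the right side is not even Hermitian. Second, Audenaert's bound $\mathrm{tr}\,\omega(I-P)+\gamma\,\mathrm{tr}\,\tau P\le\gamma^{1-s}\mathrm{tr}\,\omega^s\tau^{1-s}$ controls only the summand $\gamma\,\mathrm{tr}\,\tau P$, never $E_\gamma(\omega\|\tau)$.

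The paper's own proof breaks at the same place. After splitting off $E_\gamma$ and applying Audenaert's bound, it closes the argument with Lemma \ref{dtf}. That lemma asserts $\mathrm{tr}\,\rho\sigma\le 1-\eta^2$ whenever $\frac12\|\rho-\sigma\|_1\le\eta$, which is the reversed Fuchs--van de Graaf direction. It already fails for $\rho=\sigma$ pure and $\eta=1$, where it would give $1\le 0$.

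So neither your route nor the paper's can be repaired without changing the statement. Any valid bound must dominate the worst case over all pairs with trace distance at most $\eta$, and that worst case is nondecreasing in $\eta$. A sound version would start, for instance, from $E_\gamma(\omega\|\tau)\le\frac12\|\omega-\tau\|_1\le\eta$ (by data processing) and control $\gamma\,\mathrm{tr}\,\tau P$ separately. Your subset expansion of $\mathcal{D}_p^{\otimes k}$ is fine; it even carries the $1/d^{|S|}$ normalization that the paper's expansion omits. The coefficient bookkeeping, however, is fitting a target that is not true.
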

	\begin{proof}
		Note that $\mathcal{D}_p^{\otimes k}(\rho)$ can be written as 
		\begin{align*}
			\mathcal{D}_p^{\otimes k}(\rho)=&(1-p)^k\rho+\frac{p^k}{d^k}I\\
			+&\sum_{m=1}^{k-1}p^m(1-p)^{k-m}\left(\mathrm{tr}_{A_1A_2\cdots A_m}\rho\otimes I_{A_1A_2\cdots A_m}\right)\\
		\end{align*}
		here all the subsystems $A_1A_2\cdots A_m$ takes over all the $m$ subsystems of the whole system, then
		
		\begin{align*}
			&\mathrm{tr}\mathcal{D}_p^{\otimes k}(\rho)\{\mathcal{D}_p^{\otimes k}(\rho)-\mathcal{D}_p^{\otimes k}(\sigma)> 0\}\\
			\le&\mathrm{tr}\mathcal{D}_p^{\otimes k}(\rho)\{\mathcal{D}_p^{\otimes k}(\rho)-\mathcal{D}_p^{\otimes k}(\sigma)\ge 0\}\\
			=&\mathrm{tr}(1-p)^k\rho+\frac{p^k}{d^k}I
			+\sum_{m=1}^{k-1}p^m(1-p)^{k-m}\left(\mathrm{tr}_{A_1A_2\cdots A_m}\rho\otimes I_{A_1\cdots A_m}\right)\{(1-p)^k(\rho-\sigma)\\+&\sum_{m=1}^{k-1}p^m(1-p)^{k-m}\left(\mathrm{tr}_{A_1A_2\cdots A_m}(\rho-\sigma)\otimes I_{A_1A_2\cdots A_m}\right)\ge 0\}\\
			=&D_{\gamma}(\phi_p(\rho)||\phi_p(\sigma))+\frac{p^k }{d^k}\mathrm{tr}\{(1-p)^k(\rho-\sigma)+\sum_{m=1}^{k-1}p^m(1-p)^{k-m}\left(\mathrm{tr}_{A_1A_2\cdots A_m}(\rho-\sigma)\otimes I_{A_1A_2\cdots A_m}\right)\ge 0\}\\
			\le& E_{\gamma}(\phi_p(\rho)||\phi_p(\sigma))+\inf_{s\in [0,1]}\gamma^{2-s}\mathrm{tr}\phi^s_p(\rho)\phi_p^{1-s}(\sigma)\\
			\le&(1-p)^{2k}(1-\eta^2)+\frac{2(1-\eta^2)}{d^n}[1-p^k-(1-p)^k]+\frac{2p^k-p^{2k}-p^k(1-p)^k}{d^k}+(1-p^m-(1-p)^m)^2.
		\end{align*}
		
		In the second equality, ${\phi}_p(\rho)=(1-p)^k\rho+\sum_{m=1}^{k-1}p^m(1-p)^{k-m}\left(\mathrm{tr}_{A_1A_2\cdots A_m}\rho\otimes I_{A_1\cdots A_m}\right)+\frac{p^kI}{d^k}$, and ${\phi}_p(\sigma)=(1-p)^k\sigma+\sum_{m=1}^{k-1}p^m(1-p)^{k-m}\left(\mathrm{tr}_{A_1A_2\cdots A_m}\sigma\otimes I_{A_1\cdots A_m}\right)+\frac{p^kI}{d^k},$ in the last inequality, we apply Lemma \ref{dtf}.
	\end{proof}

	\section{The sample complexity of quantum probabilistically differentially private}\label{VI}

	Assume $\rho$ and $\sigma$ are two states, the operational interpretation of $\mathrm{tr}\rho\{\rho-e^{\e}\sigma>0\}\le \delta$ means that the acceptance probability under $\rho$ of the quantum Neyman-Pearson test at threshold $e^{\e}$ is less than $\delta$, it also means that the probability of the spectral violation event happens on the system $\r$ is less than $\delta$. Hence, if $\rho$ and $\s$ satisfy $\mathrm{tr}\rho\{\rho-e^{\e}\sigma>0\}\le \delta,$ we can denote the property as $(\e,\d)$-quantum spectral probabilistic indistinguishability (QSPrI). Here we first consider the sample complexity of $(\e,\d)$-QSPrI.
	
	\begin{definition}
		Assume $\rho$ and $\sigma$ are two states, the sample complexity $n^{*}(\rho,\sigma)$ of $(\epsilon,\delta)$-QSPrI is defined as follows,
		\begin{align*}
			n^{*}(\rho,\sigma,\epsilon,\delta)=\inf\{n\in\mathbb{N}|\mathrm{tr}\rho^{\otimes n}\{\rho^{\otimes n}-e^{\epsilon}\sigma^{\otimes n}\}> 0\}\le \delta.
		\end{align*}
	\end{definition}

Based on the definition of $(\e,\d)$-QSPrI, it is hard to obtain the exact values of $n^{*}(\r,\s,\e,\d)$ for generic states $\r$ and $\s$. Here we will present the upper bounds of $n^{*}(\r,\s,\e,\d)$ for arbitary couple of states $(\r,\s)$.
	\begin{Theorem}\label{tsc}
		Assume $\epsilon\ge 0$, $\delta\in [0,1]$, $\rho$ and $\sigma$ are two states with $\supp(\sigma)=\supp(\r),$ and $E_{e^{\e}}(\rho||\sigma)<1$, then the sample complexity $n^{*}(\rho,\sigma,\epsilon,\delta)$ satisfies the following bounds,   
		\begin{align*}
		 n^{*}(\rho,\sigma,\epsilon,\delta)\le \min( \min_{s\in (0,1)}\frac{\log(1-\delta)-\epsilon s}{\log\mathrm{tr}\rho^{1-s}\sigma^s}, \inf_{\a>1}\left\lceil\frac{\log\delta-\log[1+\frac{(\a-1)^{\a-1}}{\a^{\a}}]-\e(1-\frac{1}{\a})}{\log Q_{\a}(\rho||\s)}\right\rceil).
		\end{align*}
		Here $Q_{\a}(\r||\s)=\mathrm{ tr}(\rho^{\frac{1-\a}{2\a}}\s\r^{\frac{1-\a}{2\a}})^{\a}$, and $\lambda_{min}(M)$ is the minimal eignvalue of $M$.
	\end{Theorem}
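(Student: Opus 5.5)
The plan is to read the condition defining $n^{*}(\rho,\sigma,\epsilon,\delta)$ as the requirement that $n$ i.i.d.\ copies be spectrally distinguishable at threshold $e^{\epsilon}$. Concretely, the complementary (non-violation) event $\{\rho^{\otimes n}-e^{\epsilon}\sigma^{\otimes n}\le 0\}$ must carry $\rho^{\otimes n}$-probability at most the prescribed tolerance. This is the monotone direction: the violation probability tends to $1$ under $E_{e^{\e}}(\rho||\sigma)<1$ and $\supp(\sigma)=\supp(\rho)$, so an upper bound on $n^{*}$ follows from exhibiting an explicit $n$ beyond which the complement is small.

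The two terms in the minimum come from two one-shot estimates of this complementary probability. Each estimate is made multiplicative in $n$ and then solved for $n$. The upper bound on $n^{*}$ is the smaller of the two resulting thresholds.

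\textbf{First term (Audenaert route).} I will use Audenaert's inequality: for positive $A,B$ and $s\in(0,1)$,
\begin{align*}
\mathrm{tr}A^{1-s}B^{s}\ge \mathrm{tr}A-\mathrm{tr}(A-B)_{+}\ge \mathrm{tr}A\{A-B\le 0\}.
\end{align*}
I apply it with $A=\rho^{\otimes n}$ and $B=e^{\epsilon}\sigma^{\otimes n}$. Since $\mathrm{tr}(\rho^{\otimes n})^{1-s}(\sigma^{\otimes n})^{s}=(\mathrm{tr}\rho^{1-s}\sigma^{s})^{n}$, this gives
\begin{align*}
\mathrm{tr}\rho^{\otimes n}\{\rho^{\otimes n}-e^{\epsilon}\sigma^{\otimes n}\le 0\}\le e^{\epsilon s}\left(\mathrm{tr}\rho^{1-s}\sigma^{s}\right)^{n}.
\end{align*}
The support assumption together with $E_{e^{\e}}(\rho||\sigma)<1$ ensures $\rho\neq\sigma$, hence $\mathrm{tr}\rho^{1-s}\sigma^{s}<1$ and its logarithm is negative. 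Requiring the right-hand side to be at most $1-\delta$ and dividing by this negative logarithm yields $n\ge(\log(1-\delta)-\epsilon s)/\log\mathrm{tr}\rho^{1-s}\sigma^{s}$. I then optimize over $s$.

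\textbf{Second term (sandwiched route).} Here I invoke Lemma \ref{lds}, the paper's bound on $\mathrm{tr}\rho\{\rho-t\sigma\ge 0\}$. I expect it to take the form of a Chebyshev/Markov-type estimate for $\alpha>1$: an operator inequality $\{X\ge 0\}\le X_+^{\alpha-1}\cdots$ followed by optimization of a scalar function $t^{\alpha}$-type. That optimization produces the constant $\frac{(\alpha-1)^{\alpha-1}}{\alpha^{\alpha}}$. The resulting bound reads
\begin{align*}
\bigl[1+\tfrac{(\alpha-1)^{\alpha-1}}{\alpha^{\alpha}}\bigr]\,e^{\epsilon(1-1/\alpha)}\,Q_{\alpha}(\rho||\sigma)^{n}
\end{align*}
for the relevant one-sided probability of the $n$-copy pair. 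The sandwiched quantity $Q_{\alpha}$ as written, with $\rho$ sandwiching $\sigma$, is multiplicative under tensor powers; this is the additivity cited from \cite{Müller2013}. Setting the bound $\le\delta$ and taking logarithms, with $\log Q_{\alpha}<0$ in the appropriate orientation, gives the ceiling expression. I then take the infimum over $\alpha>1$.

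\textbf{Main obstacle.} I expect the second term to be the hard part. I must check that Lemma \ref{lds} indeed applies to tensor powers with threshold $e^{\epsilon}$, independent of $n$. I must also pin down the exact constant and the exponent $\epsilon(1-1/\alpha)$, and verify that the orientation of $Q_{\alpha}$ makes its logarithm negative so the inequality direction is preserved on division. If Lemma \ref{lds} only gives a bound for $\mathrm{tr}\rho\{\rho\ge t\sigma\}$, I would first pass to the complementary event using $\mathrm{tr}\rho^{\otimes n}=1$ before solving for $n$. Finally I will take the minimum of the two thresholds.
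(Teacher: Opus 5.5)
\textbf{First term.} Your Audenaert estimate is exactly the paper's, but you have reversed the definition of $n^{*}$. In the paper, $n^{*}$ is the least $n$ for which $(\rho^{\otimes n},\sigma^{\otimes n})$ is $(\epsilon,\delta)$-QSPrI, i.e.\ $\mathrm{tr}\rho^{\otimes n}\{\rho^{\otimes n}-e^{\epsilon}\sigma^{\otimes n}>0\}\le\delta$. You instead require the non-violation probability to be at most $1-\delta$, i.e.\ a violation probability of \emph{at least} $\delta$. Under the paper's condition it is the large $n$ that fail; your own bound shows the violation probability tends to $1$. So ``exhibit an $n$ beyond which the complement is small'' is the wrong logic. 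The paper uses the same inequality in the opposite direction. If $n$ is admissible, then $\delta\ge 1-e^{\epsilon s}(\mathrm{tr}\rho^{1-s}\sigma^{s})^{n}$, hence $n\le(\log(1-\delta)-\epsilon s)/\log\mathrm{tr}\rho^{1-s}\sigma^{s}$. This holds for every admissible $n$, so it bounds the infimum (when the admissible set is nonempty), with no ceiling. Your version gives only $n\le\lceil\cdot\rceil$, and for a different quantity. Also, the hypotheses do not force $\rho\neq\sigma$. Indeed $\rho=\sigma$ has equal supports and $E_{e^{\epsilon}}(\rho\|\rho)=0<1$; with equal supports, $E_{e^{\epsilon}}<1$ is automatic. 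So $\rho\neq\sigma$ must be assumed separately to get $\log\mathrm{tr}\rho^{1-s}\sigma^{s}<0$.

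\textbf{Second term.} This is a genuine gap. Lemma \ref{lds} contains no estimate of this kind. Its items are the $D_{\gamma}+D_{1/\gamma}$ inequality, a mixture bound, and invariance under tensoring both states with a common $\phi$ (not $n$-fold powers). With $\gamma=e^{\epsilon}$, the paper instead splits
\begin{align*}
\mathrm{tr}\rho^{\otimes n}\{\rho^{\otimes n}-\gamma\sigma^{\otimes n}>0\}\le E_{\gamma}(\rho^{\otimes n}\|\sigma^{\otimes n})+\gamma\,\mathrm{tr}\sigma^{\otimes n}\{\rho^{\otimes n}-\gamma\sigma^{\otimes n}\ge 0\}.
\end{align*}
It bounds the hockey-stick term by Lemma \ref{sandwichedgamma}: data processing of $\tilde{D}_{\alpha}$ to the binary Neyman--Pearson measurement, then maximizing $h(q)=e^{\frac{\alpha-1}{\alpha}\tilde{D}_{\alpha}}q^{\frac{\alpha-1}{\alpha}}-\gamma q$, which produces $(\alpha-1)^{\alpha-1}/\alpha^{\alpha}$, plus additivity. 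The remainder is bounded by Lemma \ref{na}. That gives an upper bound on the violation probability at tolerance $\delta$, consistent with the paper's definition. You set your bound $\le\delta$, while your first term used tolerance $1-\delta$ on the complement, so your two halves are proved for different sets.

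Worse, the sign check you flagged cannot succeed. For $\alpha>1$, both $\mathrm{tr}(\sigma^{\frac{1-\alpha}{2\alpha}}\rho\sigma^{\frac{1-\alpha}{2\alpha}})^{\alpha}$ and $\mathrm{tr}(\rho^{\frac{1-\alpha}{2\alpha}}\sigma\rho^{\frac{1-\alpha}{2\alpha}})^{\alpha}$ have the form $e^{(\alpha-1)\tilde{D}_{\alpha}}\ge 1$. Hence $[1+\frac{(\alpha-1)^{\alpha-1}}{\alpha^{\alpha}}]e^{\epsilon(1-1/\alpha)}Q_{\alpha}^{n}>1$ for every $n$, and it can never certify anything $\le\delta$. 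Correspondingly, the ceiling expression has a negative numerator over a nonnegative denominator, and for $\rho\neq\sigma$ it tends to $-\infty$ as $\alpha\to 1^{+}$. More generally, no bound on the violation probability that decays in $n$ can exist, since that probability tends to $1$. The paper's derivation also tacitly needs $\log Q_{\alpha}<0$, so this half cannot be rescued by following it. Only the first entry of the minimum is actually supported by an argument.
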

	\begin{proof}
		Let $\gamma=e^{\e},$ then
			\begin{align*}
			\mathrm{tr}\r\{\rho-\gamma\sigma>0\}=&1-\mathrm{tr}\rho\{\rho-\gamma\sigma\le0\}\\
			\ge&1-\gamma^s\mathrm{ tr}\r^{1-s}\s^s,
		\end{align*}
		here the second inequality is due to the following,
		\begin{align*}
			\mathrm{tr}\rho^{1-s}(\gamma\sigma)^s\ge& \mathrm{ tr}\rho\{\rho-\gamma\sigma\le 0\}+\mathrm{tr}\gamma\sigma\{\rho-\gamma\sigma>0\}\\
			\ge&\mathrm{tr}\rho\{\rho-\gamma\sigma\le 0\},
		\end{align*}
	here the first inequality is based on \cite{audenaert2007}, hence,
	\begin{align*}
		\delta\ge& \mathrm{tr}\rho^{\otimes n}\{\rho^{\otimes n}-\gamma\sigma^{\otimes n}> 0\}\\		\ge&1-\gamma^s\mathrm{tr}(\rho^{\otimes n})^{1-s}(\sigma^{\otimes n})^s\\
		\ge&1-\gamma^s\mathrm{tr}^n\rho^{1-s}\sigma^s,\\
		\Longrightarrow& n\le \frac{\log(1-\delta)-\epsilon s}{\log\mathrm{tr}\rho^{1-s}\sigma^s}.
	\end{align*}
As the inequality is valid whenever $s\in (0,1)$, $n\le \min_{s\in (0,1)}\frac{\log(1-\delta)-\epsilon s}{\log\mathrm{tr}\rho^{1-s}\sigma^s}.$
		
	Then we present the other method to obtain the upper bound of $n^{*}(\rho,\sigma)$. Let $n=\inf_{\a>1}\left\lceil\frac{\log\delta-\log[1+\frac{(\a-1)^{\a-1}}{\a^{\a}}]-\e(1-\frac{1}{\a})}{\log Q_{\a}(\rho||\s)}\right\rceil,$ based on Lemma \ref{na}, 
		\begin{align*}
			&\mathrm{tr}\rho^{\otimes n}\{\rho^{\otimes n}-\gamma\sigma^{\otimes n}>0\}\\
			\le&E_{\gamma}(\rho^{\otimes n}||\sigma^{\otimes n})+\gamma\mathrm{ tr}\sigma^{\otimes n}\{\rho^{\otimes n}-\gamma\sigma^{\otimes n}\ge 0\}\\
			\le &	E_{\gamma}(\rho^{\otimes n}||\sigma^{\otimes n})+\inf_{s\in [0,1]}\gamma^{1-s}\mathrm{ tr}^n{\rho^s}{\sigma^{1-s}}\\
			\le& 	 \frac{(\a-1)^{\a-1}}{\a^{\a}}\gamma^{1-\a}\mathrm{tr}^n(\rho^{\frac{1-\a}{2\a}}\sigma\rho^{\frac{1-\a}{2\a}})^{\a}+\inf_{s\in [0,1]}\gamma^{1-s}\mathrm{ tr}^n{\rho^s}{\sigma^{1-s}}\\
			\le& \inf_{\a>1}[1+\frac{(\a-1)^{\a-1}}{\a^{\a}}]\gamma^{1-\frac{1}{\a}} Q^n_{\alpha}(\rho||\s)\le\delta,
		\end{align*}
		the second inequality is due to Lemma \ref{sandwichedgamma}. The last inequality is due to the monotonity of quantum sandwiched renyi relative entropy \cite{Müller2013} and $Q_{\a}(\rho||\s)\ge \mathrm{tr}\rho^s\sigma^{1-s}$ when $s\in (0,1).$
		Hence, we finish the proof.
	\end{proof}

		Next we show the probabilistic private contraction of the hockey-stick divergence under $(\epsilon,\delta)$-QPrLDP and provides the lower bounds, which will be applied to obtain the sample complexity for $n$-outcome symmetric hypothesis testing under $(\e,\d)$-QPrLDP.
	\begin{Theorem}\label{contractionco}
		Assume $\epsilon\ge 0$, $\gamma\in (e^{-\e},e^{\e})$, let $\rho$ and $\sigma$ be states such that $E_{\gamma}(\rho||\sigma)\ne 0,$ then
		\begin{align*}
			\eta_{E_{\gamma}}^{\e,\delta}=\sup_{\mathcal{N}\in \mathcal{Q}^{\epsilon,\delta}}\frac{E_{\gamma}(\mathcal{N}(\rho)||\mathcal{N}(\sigma))}{E_{\gamma}(\rho||\sigma)}\le f(\epsilon,\delta),
		\end{align*}
		where $f(\epsilon,\delta,\gamma)=(\gamma+1)\delta-\gamma.$ When $\gamma=1$,
		\begin{align*}
			\eta_{T}^{\epsilon,\delta}=\sup_{\mathcal{N}\in \mathcal{Q}^{\e,\delta}}\frac{T(\mathcal{N}(\rho)||\mathcal{N}(\sigma))}{T(\rho||\sigma)}\le 2\delta-1.
		\end{align*}
	\end{Theorem}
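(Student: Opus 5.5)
The plan is to reduce the contraction bound to a uniform upper bound on $E_{\gamma}(\mathcal{N}(\rho)\|\mathcal{N}(\sigma))$ over $\mathcal{N}\in\mathcal{Q}^{\e,\delta}$. The tool is the proof of Theorem \ref{t7}, which shows that every $\mathcal{N}\in\mathcal{Q}^{\e,\delta}$ satisfies $E_{e^{\e}}(\mathcal{N}(\rho)\|\mathcal{N}(\sigma))\le \mathrm{tr}\,\mathcal{N}(\rho)\{\mathcal{N}(\rho)-e^{\e}\mathcal{N}(\sigma)>0\}\le\delta$.

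First, I would use the variational form $E_{\gamma}(A\|B)=\max_{0\le P\le I}\mathrm{tr}P(A-\gamma B)$ and split, for any such $P$ and $\gamma<e^{\e}$,
\begin{align*}
\mathrm{tr}P(A-\gamma B)=\mathrm{tr}P(A-e^{\e}B)+(e^{\e}-\gamma)\mathrm{tr}PB .
\end{align*}
With $A=\mathcal{N}(\rho)$ and $B=\mathcal{N}(\sigma)$, this gives $E_{\gamma}(\mathcal{N}(\rho)\|\mathcal{N}(\sigma))\le \delta+(e^{\e}-\gamma)\,\mathrm{tr}\,\mathcal{N}(\sigma)P_{\gamma}$, where $P_{\gamma}=\{\mathcal{N}(\rho)-\gamma\mathcal{N}(\sigma)>0\}$. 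Second, to control the remaining term I would also use the QPrLDP constraint with the roles of $\rho$ and $\sigma$ exchanged; this is where the lower endpoint $\gamma>e^{-\e}$ should enter. Third, I would divide by $E_{\gamma}(\rho\|\sigma)$ and take the supremum. The $\gamma=1$ case would then follow by substitution, since $E_1=T$.

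The step I expect to be the main obstacle is reaching the specific form $f(\e,\delta,\gamma)=(\gamma+1)\delta-\gamma$. Before attempting it, I would test the statement on the replacer channel $\mathcal{N}(\cdot)=\tau$ with $\tau$ fixed. This channel lies in $\mathcal{Q}^{\e,0}$, hence in every $\mathcal{Q}^{\e,\delta}$. Its ratio is $0$, whereas $(\gamma+1)\delta-\gamma<0$ whenever $\delta<\gamma/(\gamma+1)$. This suggests that the inequality as written cannot hold for small $\delta$.

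If that check confirms the problem, I would instead aim for a bound of the form $\sup_{\mathcal{N}}E_{\gamma}(\mathcal{N}(\rho)\|\mathcal{N}(\sigma))\le \min\{E_{\gamma}(\rho\|\sigma),\,\delta+(e^{\e}-\gamma)\}$, obtained from the splitting above together with data processing. Dividing by $E_{\gamma}(\rho\|\sigma)$ would then give a genuine upper bound on $\eta^{\e,\delta}_{E_{\gamma}}$. The refinement using the swapped constraint would sharpen the $(e^{\e}-\gamma)$ term. Any correct closed form must at least be nonnegative and must reduce to the known QLDP contraction coefficient when $\delta=0$.
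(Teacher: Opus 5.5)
Your diagnosis is correct: the statement is false as written, so a counterexample rather than a proof is the right outcome. The replacer check works. Take $\epsilon>0$; for $\epsilon=0$ the range $(e^{-\epsilon},e^{\epsilon})$ is empty. For $\mathcal{N}(\cdot)=\tau$ the projector $\{(1-e^{\epsilon})\tau\ge 0\}$ is the projector onto $\ker\tau$. Hence $\mathcal{N}\in\mathcal{Q}^{\epsilon,0}\subseteq\mathcal{Q}^{\epsilon,\delta}$, even with the paper's $\ge 0$ convention. Since every ratio is nonnegative, $\eta^{\epsilon,\delta}_{E_\gamma}\ge 0>(\gamma+1)\delta-\gamma$ whenever $\delta<\gamma/(\gamma+1)$.

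One slip: the replacer's ratio is $0$ only for $\gamma\ge 1$. For $\gamma\in(e^{-\epsilon},1)$ it equals $(1-\gamma)/E_\gamma(\rho\|\sigma)>0$. Moreover, $\rho=\sigma$ is admissible there, since $E_\gamma(\rho\|\rho)=1-\gamma\neq 0$, and then every channel has ratio exactly $1$, so the claim fails for all $\delta<1$. Nonnegativity is all your argument needs.

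The paper's proof breaks at its last step. It restricts to output pairs with $\mathrm{tr}\rho\{\rho-e^{\epsilon}\sigma>0\}\le\delta$ and $\mathrm{tr}\sigma\{\sigma-e^{\epsilon}\rho>0\}\le\delta$. It then writes $E_\gamma(\rho\|\sigma)=\mathrm{tr}\rho\{\rho-\gamma\sigma>0\}-\gamma\bigl(1-\mathrm{tr}\sigma\{\sigma-\gamma^{-1}\rho\ge 0\}\bigr)$ and bounds both traces by $\delta$ ``by monotonicity in the threshold''. But $t\mapsto\mathrm{tr}\rho\{\rho-t\sigma>0\}$ is nonincreasing, and both $\gamma$ and $\gamma^{-1}$ lie below $e^{\epsilon}$. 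So each trace is at least its value at threshold $e^{\epsilon}$, and the privacy constraint bounds neither. The remark after the theorem, which ``derives'' $\delta\ge\gamma/(1+\gamma)$, is a symptom of this.

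Clipping the bound at $0$ does not repair it. On a qubit, $\mathcal{D}_p$ with $p>2/(e^{\epsilon}+1)$ satisfies $\mathcal{D}_p(\omega)-e^{\epsilon}\mathcal{D}_p(\omega')\le[(1-p)-\tfrac{p}{2}(e^{\epsilon}-1)]I<0$ for all states, hence lies in $\mathcal{Q}^{\epsilon,0}$. Yet on orthogonal pure states its trace-distance ratio $1-p$ approaches $\frac{e^{\epsilon}-1}{e^{\epsilon}+1}$. This exceeds $2\delta-1$ for every $\delta<\frac{e^{\epsilon}}{e^{\epsilon}+1}$.

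Your fallback $\eta^{\epsilon,\delta}_{E_\gamma}\le\min\{1,(\delta+e^{\epsilon}-\gamma)/E_\gamma(\rho\|\sigma)\}$ is correct, by Theorem~\ref{t7} plus data processing. However, it is vacuous unless $\gamma$ is close to $e^{\epsilon}$ and $\delta$ is small. The refinement you anticipate works if the swapped constraint is applied to the complementary projector. Set $P=\{\mathcal{N}(\rho)-\gamma\mathcal{N}(\sigma)>0\}$, $a=\mathrm{tr}P\mathcal{N}(\rho)$ and $b=\mathrm{tr}P\mathcal{N}(\sigma)$. Theorem~\ref{t7} gives $E_{e^{\epsilon}}\le\delta$ for both orderings of the outputs. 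Testing it on $P$ and on $I-P$ gives $a-e^{\epsilon}b\le\delta$ and $(1-b)-e^{\epsilon}(1-a)\le\delta$. Add these with weights $\frac{\gamma e^{\epsilon}-1}{e^{2\epsilon}-1}$ and $\frac{e^{\epsilon}-\gamma}{e^{2\epsilon}-1}$. These are nonnegative precisely for $e^{-\epsilon}\le\gamma\le e^{\epsilon}$, which is where the lower endpoint enters. The result is
\[
E_\gamma(\mathcal{N}(\rho)\|\mathcal{N}(\sigma))=a-\gamma b\le\frac{e^{\epsilon}-\gamma+(\gamma+1)\delta}{e^{\epsilon}+1}
\]
for every $\mathcal{N}\in\mathcal{Q}^{\epsilon,\delta}$. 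Dividing by $E_\gamma(\rho\|\sigma)$ gives a valid version of the theorem. Its numerator contains the paper's $(\gamma+1)\delta-\gamma$. At $\gamma=1$, $\delta=0$ it equals $\frac{e^{\epsilon}-1}{e^{\epsilon}+1}$, and the depolarizing example shows this value is tight.
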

	\begin{proof}
		Let $\Pi_1=\{\rho-\gamma\sigma\ge 0\},$ $\Pi_2=\{\sigma-e^{\e}\rho> 0\}$ and $\Pi_3=\{\rho-e^{\e}\sigma> 0\}.$ Based on \cite{hirche2023quantum}, the hockey-stick divergence contraction coefficient can be expressed as
		\begin{align}
			\eta_{E_{\gamma}}^{\epsilon,\delta}=&\sup_{\mathcal{A}\in \mathcal{Q}^{\epsilon,\delta}}\sup_{\ket{\phi}\perp\ket{\psi}}E_{\gamma}(\mathcal{A}(\phi)||\mathcal{A}(\psi))\nonumber\\
			\le&\sup_{\substack{\mathrm{ tr}\rho\Pi_3\le \delta,\\
					\mathrm{tr}\sigma\Pi_2\le \delta}}E_{\gamma}(\rho||\sigma)\label{del1}
		\end{align}
		As $\Pi_2=\{\sigma-e^{\e}\rho>0\}=\{e^{-\e}\sigma-\rho>0\}$, then
		\begin{align*}
			E_{\gamma}(\rho||\sigma)=&\mathrm{tr}(\rho-\gamma\sigma)\{\rho-\gamma\sigma> 0\}\\
			=&\mathrm{tr}(\rho\{\rho-\gamma\sigma>0\})-\gamma(1-\mathrm{tr}\sigma\{\sigma-\frac{1}{\gamma}\rho\ge 0\})\\
			\le&(\gamma+1)\delta -\gamma.
		\end{align*}
		Here the inequality is due to that $\mathrm{tr}\rho\{\rho-\gamma\sigma>0\}$ is monotone on $\gamma$ \cite{nagaoka2007information}. Hence, we finish the proof.
	\end{proof}
	\begin{remark}
		Based on the proof of Theorem \ref{contractionco}, $(\gamma+1)\delta-\gamma\ge E_{\gamma}(\rho||\sigma)\ge 0$, $\delta\ge \frac{\gamma}{1+\gamma}.$ As $\gamma\in (-e^{\e},e^{\e})$, we have $\delta\ge \frac{e^{\e}}{1+e^{\e}}.$
	\end{remark}
	
	\begin{Corollary}
		Assume $\epsilon,\delta>0$, and $\gamma\in (e^{-\e},e^{\e})$, let $\rho$ and $\sigma$ be states such that $E_{\gamma}(\rho||\sigma)\ne 0$, then 
		\begin{align*}
			\eta_{E_{\gamma}}^{\epsilon,\delta}\le \min((\gamma+1)\delta-\gamma,\frac{\delta}{\gamma}+\delta-\frac{1}{\gamma}).
		\end{align*}
	\end{Corollary}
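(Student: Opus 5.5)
The corollary combines the bound of Theorem \ref{contractionco} with the same bound applied with the roles of $\rho$ and $\sigma$ swapped. The first entry of the minimum, $(\gamma+1)\delta-\gamma$, is exactly Theorem \ref{contractionco}, so only the second entry, $\frac{\delta}{\gamma}+\delta-\frac{1}{\gamma}$, needs proof.

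To get it, I will use the identity relating the hockey-stick divergence to its reverse,
\begin{align*}
E_{\gamma}(\rho||\sigma)=\mathrm{tr}(\rho-\gamma\sigma)_{+}=\mathrm{tr}(\rho-\gamma\sigma)+\mathrm{tr}(\gamma\sigma-\rho)_{+}=1-\gamma+\gamma E_{1/\gamma}(\sigma||\rho).
\end{align*}
Since $\gamma\in(e^{-\e},e^{\e})$, we also have $1/\gamma\in(e^{-\e},e^{\e})$. Moreover, the definition of $\mathcal{Q}^{\e,\d}$ quantifies over all ordered pairs of states, so it is symmetric under swapping $\rho$ and $\sigma$.

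The plan is to repeat the argument of Theorem \ref{contractionco} on the reduced problem (\ref{del1}), that is, to bound the supremum of $E_{\gamma}(\rho||\sigma)$ over output pairs satisfying $\mathrm{tr}\rho\Pi_3\le\delta$ and $\mathrm{tr}\sigma\Pi_2\le\delta$. These constraints are symmetric in $(\rho,\sigma)$. I apply the argument to the pair $(\sigma,\rho)$ with parameter $1/\gamma$:
\begin{align*}
E_{1/\gamma}(\sigma||\rho)=\mathrm{tr}\sigma\{\sigma-\tfrac{1}{\gamma}\rho>0\}-\tfrac{1}{\gamma}\bigl(1-\mathrm{tr}\rho\{\rho-\gamma\sigma\ge 0\}\bigr).
\end{align*}
Here the monotonicity in the threshold of $\mathrm{tr}\rho\{\rho-t\sigma>0\}$ \cite{nagaoka2007information} bounds each projector term by the corresponding privacy term at $e^{\e}$. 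This gives $E_{1/\gamma}(\sigma||\rho)\le(\frac{1}{\gamma}+1)\delta-\frac{1}{\gamma}$, i.e. the same computation as in Theorem \ref{contractionco} with $\gamma$ replaced by $1/\gamma$.

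Two routes then lead to the second entry. The first substitutes this into the identity. The second instead reads $\frac{\delta}{\gamma}+\delta-\frac1\gamma$ directly as $f(\e,\delta,1/\gamma)$ bounding $\eta^{\e,\delta}_{E_{1/\gamma}}$, and transfers it to $E_\gamma$. I intend to use the second route, since the identity introduces the affine factor $\gamma$ and the offset $1-\gamma$, which do not obviously produce the stated expression.

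Concretely, I will write $E_\gamma(\rho||\sigma)=\mathrm{tr}\rho\{\rho-\gamma\sigma>0\}-\gamma\,\mathrm{tr}\sigma\{\rho-\gamma\sigma>0\}$. I then divide through so that the roles of $\rho$ and $\sigma$ become symmetric. Finally I bound $\mathrm{tr}\sigma\{\rho-\gamma\sigma>0\}\ge 1-\mathrm{tr}\sigma\Pi_2$-type terms, using $\mathrm{tr}\sigma\Pi_2\le\delta$ and $\mathrm{tr}\rho\Pi_3\le\delta$. This mirrors the step of Theorem \ref{contractionco} with the roles of the two constraints exchanged.

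Taking the minimum of the two resulting bounds, each of which holds for every admissible $\mathcal{N}$ and every pair of states, gives the corollary.

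\textbf{Main obstacle.} The hardest step is the normalization that makes the coefficients come out exactly as $\frac{\delta}{\gamma}+\delta-\frac1\gamma$ rather than a $\gamma$-rescaled version. If the direct rewrite fails, I would fall back on the first route via the identity: it yields a valid, possibly slightly different, bound. I would then check, using $\gamma>e^{-\e}$, whether that bound is dominated by the stated expression.
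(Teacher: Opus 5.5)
\textbf{There is a genuine gap: neither of your routes produces the second entry $\frac{\delta}{\gamma}+\delta-\frac{1}{\gamma}$.}

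Route 1 uses your (correct) identity $E_{\gamma}(\rho\|\sigma)=1-\gamma+\gamma E_{1/\gamma}(\sigma\|\rho)$. It gives $1-\gamma+\gamma[(\frac{1}{\gamma}+1)\delta-\frac{1}{\gamma}]=(\gamma+1)\delta-\gamma$, which is just the first entry.

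In Route 2, carry out ``divide through'' with the bounds of Theorem \ref{contractionco}, namely $\mathrm{tr}\rho P\le\delta$ and $\mathrm{tr}\sigma P\ge 1-\delta$ with $P=\{\rho-\gamma\sigma>0\}$. This gives $\frac{1}{\gamma}E_{\gamma}(\rho\|\sigma)\le\frac{\delta}{\gamma}+\delta-1$, again the first entry after multiplying back. Transferring a bound on $E_{1/\gamma}(\sigma\|\rho)$ to $E_{\gamma}(\rho\|\sigma)$ would require $E_{\gamma}(\rho\|\sigma)\le E_{1/\gamma}(\sigma\|\rho)$. But your identity gives $E_{\gamma}(\rho\|\sigma)-E_{1/\gamma}(\sigma\|\rho)=(1-\gamma)(1-E_{1/\gamma}(\sigma\|\rho))\ge 0$ for $\gamma<1$, the wrong direction.

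Since $[(\gamma+1)\delta-\gamma]-[(\frac{1}{\gamma}+1)\delta-\frac{1}{\gamma}]=(\frac{1}{\gamma}-\gamma)(1-\delta)$, the second entry is the smaller one only when $\gamma<1$. So the corollary adds to Theorem \ref{contractionco} exactly in the regime where both routes fail, and your fallback is never dominated there.

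The paper gets the second entry from the offset-free relation $E_{\gamma}(\rho\|\sigma)=\gamma E_{1/\gamma}(\sigma\|\rho)$. That relation makes the contraction ratio invariant under $(\gamma,\rho,\sigma)\mapsto(1/\gamma,\sigma,\rho)$, and the paper then applies Theorem \ref{contractionco} at $1/\gamma$, stated as an equality. You were right to distrust this relation: it is false for $\gamma\neq 1$.

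\textbf{The gap cannot be closed, because the stated bound is false.}

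For $\gamma<1$, every pair has $E_{\gamma}(\rho\|\sigma)\ge\mathrm{tr}(\rho-\gamma\sigma)=1-\gamma>0$. So $\rho=\sigma$ is admissible in the definition of $\eta_{E_{\gamma}}^{\epsilon,\delta}$, and it gives ratio $(1-\gamma)/(1-\gamma)=1$ for every channel. Also $\mathcal{Q}^{\epsilon,\delta}$ is nonempty: a replacement channel $\mathcal{N}(X)=\mathrm{tr}(X)\,\omega$ has $\mathrm{tr}\,\omega\{(1-e^{\epsilon})\omega\ge 0\}=0$. Hence $\eta_{E_{\gamma}}^{\epsilon,\delta}\ge 1>\delta-\frac{1-\delta}{\gamma}$ whenever $\delta<1$. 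The reduction (\ref{del1}) to orthogonal pure inputs also fails for $\gamma<1$: for the replacement channel it returns $1-\gamma$, not $1$.

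For $\gamma\ge 1$, the same channel gives $\eta_{E_{\gamma}}^{\epsilon,\delta}\ge 0>(\gamma+1)\delta-\gamma$ whenever $\delta<\frac{\gamma}{1+\gamma}$. So Theorem \ref{contractionco} itself fails there.

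The monotonicity step you import from that theorem also runs the wrong way. The map $t\mapsto\mathrm{tr}\rho\{\rho-t\sigma>0\}$ is non-increasing. So at thresholds $\gamma,1/\gamma<e^{\epsilon}$ the projector terms dominate the constrained quantities $\mathrm{tr}\rho\Pi_3$ and $\mathrm{tr}\sigma\Pi_2$, and the privacy constraints give them no upper bound. Neither route can be repaired; the corollary as stated does not hold.
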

	\begin{proof}
		When $\gamma\in (e^{-\e},e^{\e})$, and $\mathcal{A}\in \mathcal{Q}^{\e,\delta},$
		\begin{align*}
			E_{\gamma}(\mathcal{A}(\rho)||\mathcal{A}(\sigma))=&\gamma E_{\frac{1}{\gamma}}(\mathcal{A}(\sigma)||\mathcal{A}(\rho))\\
			=&\gamma f(\epsilon,\delta)E_{\frac{1}{\gamma}}(\sigma||\rho)\\
			=&\gamma((\frac{1}{\gamma}+1)\delta-\frac{1}{\gamma})E_{\frac{1}{\gamma}}(\sigma||\rho)\\
			=&((\frac{1}{\gamma}+1)\delta-\frac{1}{\gamma})E_{{\gamma}}(\rho||\sigma).\\
			\Longrightarrow&\eta_{E_{\gamma}}^{\e,\delta}\le \frac{\delta}{\gamma}+\delta-\frac{1}{\gamma}.
		\end{align*}
		The first equality is due that $E_{\gamma}(\rho||\sigma)=\gamma E_{\frac{1}{\gamma}}(\sigma||\rho)$ \cite{hirche2023quantum}. At last, combing Theorem \ref{contractionco}, we have
		\begin{align*}
			\eta_{E_{\gamma}}^{\epsilon,\delta}\le \min((\gamma+1)\delta-\gamma,\frac{\delta}{\gamma}+\delta-1).
		\end{align*}
	\end{proof}
	
	At last, we consider and present the bounds of the sample complexity of symmetric and asymmetric hypothesis testing under $(\e,\d)$-QPrLDP for arbitrary couple of quantum states.
	\begin{Theorem}\label{bsfb}
		Assume $\epsilon>0,$ $p\in(0,1)$, $q=1-p$, and $\rho$ and $\sigma$ are states, then\begin{align*}
		\lceil\frac{2(e^{\e}-1+2E_{e^{\e}}(\r||\s))^2\ln\frac{\sqrt{pq}}{\alpha}}{(e^{\e}-1)^2E^2_{e^{\e}}(\rho||\sigma)}\rceil\ge	 SC_{(\rho,\sigma)}^{\mathcal{Q}^{\e,\d}}(\a,p,q)\ge f(\r,\s,\a,p,q).
		\end{align*}
		Here $f(\r,\s,\a,p,q)=\max\left(\frac{1-\frac{pq}{\a}}{-\ln(1-(2\delta-1)T(\rho,\sigma))},\frac{1-\frac{\a(1-\a)}{pq}}{2(1-\sqrt{1-(2\delta-1)T(\rho,\sigma)}},SC_{(\rho,\sigma)}(\alpha,p,q)\right).$
	\end{Theorem}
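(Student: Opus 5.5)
The plan is to prove the two inequalities separately, in both cases by reducing to Lemma \ref{syqhtc} applied to the pair of privatized states $(\mathcal{A}(\rho),\mathcal{A}(\sigma))$. The infimum over $\mathcal{Q}^{\e,\d}$ in $SC^{\mathcal{Q}^{\e,\d}}_{(\rho,\sigma)}$ means that for the upper bound it suffices to exhibit one admissible channel. For the lower bound we must control every admissible channel uniformly.

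\emph{Upper bound.} Take the channel of Corollary \ref{DpQPrLDP}: $\mathcal{A}=\mathcal{D}_p\circ\mathcal{M}$ with $M=\{\rho-e^{\e}\sigma>0\}$, and with the depolarizing parameter (call it $p_0$, to avoid clashing with the prior $p$) chosen at the threshold $p_0=\frac{2E_{e^{\e}}(\r||\s)}{2E_{e^{\e}}(\r||\s)+e^{\e}-1}$. By that corollary this channel lies in $\mathcal{Q}^{\e,\d}$. Its outputs are commuting diagonal qubit states
\[
(1-p_0)(a,1-a)+\tfrac{p_0}{2}(1,1),\qquad (1-p_0)(b,1-b)+\tfrac{p_0}{2}(1,1),
\]
where $a=\mathrm{tr}M\rho$ and $b=\mathrm{tr}M\sigma$, so that $a-e^{\e}b=E_{e^{\e}}(\r||\s)$.

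Next apply the upper bound of Lemma \ref{syqhtc}, $\lceil 2\ln(\sqrt{pq}/\alpha)/(-\ln F)\rceil$. It then suffices to show
\[
-\ln F(\mathcal{A}(\rho),\mathcal{A}(\sigma))\ge \frac{(1-p_0)^2E^2_{e^{\e}}(\r||\s)}{(\cdot)^2}.
\]
I would use $-\ln F\ge 1-F$, together with $1-F\ge$ a squared trace distance in the Fuchs--van de Graaf form, i.e. $1-\sqrt F\ge T$ for the root fidelity, giving $-\ln F\ge T^2$. I would then lower bound the trace distance of the outputs by
\[
(1-p_0)(a-b)\ge (1-p_0)E_{e^{\e}}(\r||\s).
\]
Finally, $1-p_0=\frac{e^{\e}-1}{e^{\e}-1+2E_{e^{\e}}(\r||\s)}$, which produces exactly the factor $(e^{\e}-1)^2/(e^{\e}-1+2E)^2$ in the stated expression.

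\emph{Lower bound.} This splits into three terms inside the max. The term $SC_{(\rho,\sigma)}(\alpha,p,q)$ follows from data processing: for any channel, $\|p\mathcal{A}(\rho)^{\otimes n}-q\mathcal{A}(\sigma)^{\otimes n}\|_1\le\|p\rho^{\otimes n}-q\sigma^{\otimes n}\|_1$, so privatized testing never needs fewer copies. For the other two terms, apply the lower bound of Lemma \ref{syqhtc} to $(\mathcal{A}(\rho),\mathcal{A}(\sigma))$. We then need an upper bound on $-\ln F(\mathcal{A}(\rho),\mathcal{A}(\sigma))$ and on $d_B^2$, uniform over $\mathcal{A}\in\mathcal{Q}^{\e,\d}$.

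Here I would use Fuchs--van de Graaf in the other direction, $F\ge(1-T)^2$ or $\sqrt F\ge 1-T$, and then Theorem \ref{contractionco} with $\gamma=1$:
\[
T(\mathcal{A}(\rho),\mathcal{A}(\sigma))\le(2\delta-1)T(\rho,\sigma).
\]
Substituting gives $d_B^2=2(1-\sqrt F)\le 2(1-\sqrt{1-(2\delta-1)T})$ up to the convention used for $F$, and $-\ln F\le-\ln(1-(2\delta-1)T)$. The resulting expressions are monotone in the right direction, so they yield the stated lower bounds after taking the infimum over $\mathcal{A}$.

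\emph{Main obstacle.} The main difficulty is matching the fidelity conventions, squared versus root fidelity, so that the Fuchs--van de Graaf inequalities give exactly the stated forms. I would first fix $F$ as the squared fidelity, as in Lemma \ref{syqhtc}, and adjust constants, accepting that the stated numerators may need the inequalities in slightly weakened form. A second delicate point is checking that at the threshold $p_0$ the corollary's channel really is admissible for the given $\delta>0$.
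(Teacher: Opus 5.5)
Your plan reproduces the paper's own proof essentially step for step. However, the two steps that carry the weight fail, in the paper's argument as well as yours, so this does not prove the statement.

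\textbf{Upper bound.} Corollary~\ref{DpQPrLDP} does not place $\mathcal{A}=\mathcal{D}_{p_0}\circ\mathcal{M}$ in $\mathcal{Q}^{\epsilon,\delta}$. Its proof evaluates the violation probability only on the pair $(\rho,\sigma)$ used to build $M=\{\rho-e^{\epsilon}\sigma>0\}$, whereas membership in $\mathcal{Q}^{\epsilon,\delta}$ is a condition on \emph{all} input pairs.
\begin{itemize}
\item Write $E=E_{e^{\epsilon}}(\rho\|\sigma)$. Since $E>0$ and $\epsilon>0$, we have $0\neq M\neq I$. Take $\rho'$ supported in $M$ and $\sigma'$ supported in $I-M$.
\item Then $\mathcal{A}(\rho')=\mathrm{diag}(1-\tfrac{p_0}{2},\tfrac{p_0}{2})$ and $\mathcal{A}(\sigma')=\mathrm{diag}(\tfrac{p_0}{2},1-\tfrac{p_0}{2})$.
\item The first diagonal entry of $\mathcal{A}(\rho')-e^{\epsilon}\mathcal{A}(\sigma')$ equals $\frac{(1-E)(e^{\epsilon}-1)}{e^{\epsilon}-1+2E}$, which is $>0$ whenever $E<1$.
\item So the violation probability is $1-\tfrac{p_0}{2}=\frac{e^{\epsilon}-1+E}{e^{\epsilon}-1+2E}>\tfrac12$.
\end{itemize}
Under the $\geq$ convention of (\ref{edqprldp}), even the defining pair is charged $(1-p_0)\mathrm{tr}(M\rho)+\tfrac{p_0}{2}$, because at $p_0$ that entry is exactly $0$.

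So the ``delicate point'' you flagged is a real obstruction. This construction certifies the stated upper bound only for $\delta\geq\frac{e^{\epsilon}-1+E}{e^{\epsilon}-1+2E}$. A channel that is private on every pair, such as randomized response with depolarizing parameter at or just above $2/(1+e^{\epsilon})$, yields only the weaker factor $(e^{\epsilon}+1)^2$ in place of $(e^{\epsilon}-1+2E)^2$.

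The remaining computations in this part are correct: $-\ln F\geq 1-F\geq T^2$, $T(\mathcal{A}(\rho),\mathcal{A}(\sigma))\geq(1-p_0)E$, and $1-p_0=\frac{e^{\epsilon}-1}{e^{\epsilon}-1+2E}$. One correction: the Fuchs--van de Graaf inequality used there is $T\leq\sqrt{1-F}$, not $1-\sqrt{F}\geq T$.

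\textbf{Lower bound.} The data-processing term $SC_{(\rho,\sigma)}(\alpha,p,q)$ is fine. The other two terms rest on Theorem~\ref{contractionco} at $\gamma=1$, i.e.\ $T(\mathcal{A}(\rho),\mathcal{A}(\sigma))\leq(2\delta-1)T(\rho,\sigma)$ on $\mathcal{Q}^{\epsilon,\delta}$. That claim is false.
\begin{itemize}
\item Fix $\delta<1$ and a projector $0\neq\Pi\neq I$. Define $\mathcal{N}(\omega)=\mathrm{tr}(\Pi\omega)\,\mathrm{diag}(\delta,1-\delta,0)+\mathrm{tr}((I-\Pi)\omega)\,\mathrm{diag}(0,1-\delta,\delta)$.
\item For inputs with $x=\mathrm{tr}\Pi\omega$ and $y=\mathrm{tr}\Pi\omega'$, the middle entry of $\mathcal{N}(\omega)-e^{\epsilon}\mathcal{N}(\omega')$ is $(1-\delta)(1-e^{\epsilon})<0$.
\item The two outer entries cannot both be nonnegative, since that would force $1\geq e^{\epsilon}$. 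Each carries mass at most $\delta$, so $\mathcal{N}\in\mathcal{Q}^{\epsilon,\delta}$ for every $\epsilon>0$.
\item Yet $\mathcal{N}$ maps a state in the range of $\Pi$ and one in the range of $I-\Pi$ (trace distance $1$) to outputs at trace distance $\delta>2\delta-1$.
\end{itemize}

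The stated second term of $f$ is then actually false. Take $\delta=0.51$, $p=q=\tfrac12$, $\alpha=\tfrac18$, $\rho=\ket{0}\bra{0}$, $\sigma=0.01\ket{0}\bra{0}+0.99\ket{1}\bra{1}$, and $\Pi=\ket{0}\bra{0}$. Two copies already give error $\tfrac12(0.4951)^2<\tfrac18$, while the claimed lower bound is about $28$.

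Separately, suppose you had a valid contraction factor $\kappa$. With the squared-fidelity convention of Lemma~\ref{syqhtc}, $\sqrt{F}\geq 1-T$ yields only $d_B^2\leq 2\kappa T$ and $-\ln F\leq-2\ln(1-\kappa T)$. The stated forms $2(1-\sqrt{1-\kappa T})$ and $-\ln(1-\kappa T)$ would need $F\geq 1-T$. That fails, e.g.\ for $P=(\tfrac12,\tfrac12,0)$ and $Q=(0,\tfrac12,\tfrac12)$, where $F=\tfrac14<\tfrac12=1-T$.
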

	\begin{proof}
		As the trace distance is monotone under a completely positive and trace-preserving, assume $SC_{{(\rho,\sigma)}}(\alpha,p,q)=z,$ then 
		\begin{align*}
			1-2z\le	||\otimes_{i=1}^n\mathcal{A}(p\rho^{\otimes n}-q\sigma^{\otimes n})||\le ||p\rho^{\otimes n}-q\sigma^{\otimes n}||,
		\end{align*}
		hence, $SC_{(\rho,\sigma)}(\alpha,p,q)$ is a lower bound of $SC_{(\rho,\sigma)}^{\epsilon,\delta}(\alpha,p,q)$. 
		
		Next due to $-\ln x\ge 1-x$ for $x>0,$
		\begin{align*}
			-\frac{1}{\ln \sqrt{F(\rho,\sigma)}}\le \frac{2}{d_B^2(\rho,\sigma)}\le \frac{2}{T^2(\r,\sigma)},
		\end{align*}
		where the last inequality is due to the Fuchs-van-de-Graaf inequalities, then we have an upper bound of $SC_{(\rho,\sigma)}(\alpha,p,q)$
		\begin{align*}
			SC_{(\rho,\sigma)}^{\mathcal{A}}\le \lceil\frac{2\ln\frac{\sqrt{pq}}{\alpha}}{T^2(\mathcal{A}(\rho),\mathcal{A}(\sigma))}\rceil.
		\end{align*}
		
		Let $\mathcal{A}=\mathcal{D}_p\circ\mathcal{M}$, and $\mathcal{M}(\cdot)$ is defined as follows,
		\begin{align*}
			\mathcal{M}(\cdot)=\mathrm{tr}(\Pi_{+}\cdot)\ket{+}\bra{+}+\mathrm{tr}(\Pi_{-}\cdot)\ket{-}\bra{-},
		\end{align*}
		here $\Pi_{+}$ denotes the projection operator $\Pi_{+}=\{\rho-e^{\e}\sigma> 0\}$, $\Pi_{-}=I-\Pi_{+}$, besides, $\ket{+}$ and $\ket{-}$ are pure states with $\ket{+}\perp\ket{-}$. Based on Corollary \ref{DpQPrLDP}, when $p=\frac{2E_{e^{\e}}(\r||\s)}{2E_{e^{\e}}(\r||\s)+e^{\e}-1}$, $\mathcal{A}$ is $(\epsilon,\delta)$-QPrLDP.
	
		\begin{align}
			&T^2(\mathcal{A}(\rho),\mathcal{A}(\sigma))\nonumber\\=&(1-p)^2\mathrm{tr}^2\Pi_{+}(\rho-\sigma)\nonumber\\
			\ge& (1-p)^2E^2_{e^{\e}}(\rho||\sigma),\label{f1}
		\end{align}
		For the upper bound,
		\begin{align*}
			SC_{(\rho,\sigma)}^{\mathcal{A}}\le& \inf_{\mathcal{A}\in \mathcal{Q}^{\e,\d}} \lceil\frac{2\ln\frac{\sqrt{pq}}{\alpha}}{T^2(\mathcal{A}(\rho),\mathcal{A}(\sigma))}\rceil\\
			\le&\lceil\frac{2\ln\frac{\sqrt{pq}}{\alpha}}{(1-p)^2E^2_{e^{\e}}(\rho||\sigma)}\rceil.\\
			=&\lceil\frac{2(e^{\e}-1+2E_{e^{\e}}(\r||\s))^2\ln\frac{\sqrt{pq}}{\alpha}}{(e^{\e}-1)^2E^2_{e^{\e}}(\rho||\sigma)}\rceil.
		\end{align*}The last inequality is due to (\ref{f1}).

		Hence, combining Lemma \ref{syqhtc}, we have
		\begin{align*}
			SC_{(\rho,\sigma)}^{\mathcal{A}}\ge& \frac{\ln(\frac{pq}{\alpha})}{-\ln F(\mathcal{A}(\rho),\mathcal{A}({\sigma}))},\\
			\ge&     \frac{\ln(\frac{pq}{\alpha})}{-\ln (1-T(\mathcal{A}(\rho),\mathcal{A}(\sigma)))}\\
			\ge&\frac{1-\frac{pq}{\a}}{-\ln(1-(2\delta-1)T(\rho,\sigma))},
		\end{align*}
here the first inequality is due to the Fuchs-van de Graaf inequalities \cite{fuchs1999}, the second inequality is due to Theorem \ref{contractionco}
		\begin{align*}
			SC_{(\rho,\sigma)}^{\mathcal{A}}\ge &	\frac{1-\frac{\a(1-\a)}{pq}}{d^2_B(\mathcal{A}(\rho),\mathcal{A}(\sigma))},\\
=&\frac{1-\frac{\a(1-\a)}{pq}}{2(1-\sqrt{F}(\mathcal{A}(\rho),\mathcal{A}(\sigma))}\\
\ge&\frac{1-\frac{\a(1-\a)}{pq}}{2(1-\sqrt{1-T(\mathcal{A}(\rho),\mathcal{A}(\sigma))}}\\
\ge&\frac{1-\frac{\a(1-\a)}{pq}}{2(1-\sqrt{1-(2\delta-1)T(\rho,\sigma)}},
		\end{align*}
		here the second inequality is due to the Fuchs-van de Graaf inequalities \cite{fuchs1999}, and the last inequality is due to Theorem \ref{contractionco}.
		
		that is, $$SC_{(\rho,\sigma)}^{\mathcal{A}}(\a,p,q)\ge \max\left(\frac{1-\frac{pq}{\a}}{-\ln(1-(2\delta-1)T(\rho,\sigma))},\frac{1-\frac{\a(1-\a)}{pq}}{2(1-\sqrt{1-(2\delta-1)T(\rho,\sigma)}},SC_{(\rho,\sigma)}(\alpha,p,q)\right).$$

		Hence, we finish the proof.
	\end{proof}
	
	With a similar method, we can generalize the above scenario to multiple quantum hypothesis testing.
	\begin{Corollary}
Assume $\e,\d\in [0,1]$, and $\{\rho_i\}_{i=1}^m$ is a tuple of quantum states with prior probabilities $p_i$, respectively, then its sample complexity of private multiple hypothesis testing  satisfies 
\begin{align*}
\max_{m\ne \tilde{m}}\frac{\ln\left(\frac{p_mp_{\tilde{m}}}{\e(p_m+p_{\tilde{m}})}\right)}{-\ln (1-(2\d-1)T(\rho_m||\rho_{\tilde{m}}))}	\le SC^{\e}_{((\rho_i)_{i=1}^m,(p_i)_{i=1}^m)}\le  \left\lceil\max_{m\ne \tilde{m}}\frac{2\ln\left(\frac{M(M-1)\sqrt{p_m}\sqrt{p_{\tilde{m}}}}{2\e}\right)}{-\ln (1-\frac{(e^{\e}-1)^2E^2_{e^{\e}}(\r||\s)}{(e^{\e}-1+2E_{e^{\e}}(\rho||\sigma))^2})^{\frac{1}{2}}}\right\rceil.
\end{align*}
	\end{Corollary}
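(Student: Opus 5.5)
The plan is to repeat the proof of Theorem \ref{bsfb} pairwise, using Lemma \ref{symqhtc} in place of Lemma \ref{syqhtc}. Fix $\mathcal{A}\in\mathcal{Q}^{\e,\d}$. The unprivatized sample complexity of the tuple $(\mathcal{A}(\rho_i))_{i=1}^m$ is then controlled by Lemma \ref{symqhtc} applied to the states $\mathcal{A}(\rho_i)$. Both sides of that lemma depend on the states only through the pairwise fidelities $F(\mathcal{A}(\rho_m),\mathcal{A}(\rho_{\tilde m}))$. So the work reduces to bounding these fidelities from both sides, uniformly over the admissible channels (lower bound) and for one explicit admissible channel (upper bound).

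\textbf{Lower bound.} For every pair $m\ne\tilde m$ I would use the Fuchs--van de Graaf inequality in the form $F(\mathcal{A}(\rho_m),\mathcal{A}(\rho_{\tilde m}))\ge 1-T(\mathcal{A}(\rho_m),\mathcal{A}(\rho_{\tilde m}))$, as in the proof of Theorem \ref{bsfb}. Theorem \ref{contractionco} with $\gamma=1$ then gives $T(\mathcal{A}(\rho_m),\mathcal{A}(\rho_{\tilde m}))\le(2\d-1)T(\rho_m,\rho_{\tilde m})$. Together these yield
\begin{align*}
-\ln F(\mathcal{A}(\rho_m),\mathcal{A}(\rho_{\tilde m}))\le -\ln\bigl(1-(2\d-1)T(\rho_m,\rho_{\tilde m})\bigr).
\end{align*}
Inserting this into the left-hand side of Lemma \ref{symqhtc} gives a lower bound that does not depend on $\mathcal{A}$. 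Taking the infimum over $\mathcal{Q}^{\e,\d}$ gives the claimed lower bound. This requires the numerator $\ln\bigl(p_mp_{\tilde m}/(\e(p_m+p_{\tilde m}))\bigr)$ to be nonnegative; that is implicit in the statement, and otherwise the bound is trivial.

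\textbf{Upper bound.} The infimum over $\mathcal{A}$ is bounded by evaluating at one specific channel. As in Theorem \ref{bsfb}, I take $\mathcal{A}=\mathcal{D}_p\circ\mathcal{M}$. Here $\mathcal{M}$ is the measure-and-prepare channel built from $\Pi_+=\{\rho-e^{\e}\sigma>0\}$, and $p=\frac{2E_{e^{\e}}(\rho||\sigma)}{2E_{e^{\e}}(\rho||\sigma)+e^{\e}-1}$. This channel is admissible by Corollary \ref{DpQPrLDP}. Its contracted trace distance satisfies $T^2\ge(1-p)^2E^2_{e^{\e}}(\rho||\sigma)=\frac{(e^{\e}-1)^2E^2_{e^{\e}}}{(e^{\e}-1+2E_{e^{\e}})^2}$. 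The right-hand side of Lemma \ref{symqhtc} needs an upper bound on $F$, and the Fuchs--van de Graaf inequality gives $F\le 1-T^2$, so
\begin{align*}
-\ln F\ge -\ln\Bigl(1-\tfrac{(e^{\e}-1)^2E^2_{e^{\e}}(\rho||\sigma)}{(e^{\e}-1+2E_{e^{\e}}(\rho||\sigma))^2}\Bigr).
\end{align*}
Depending on whether $F$ denotes the squared or unsquared fidelity, this becomes the stated denominator with the exponent $\tfrac12$. Substituting into the upper bound of Lemma \ref{symqhtc} and taking the max over pairs gives the claim.

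\textbf{Main obstacle.} The upper-bound step is the hard part. The channel from Corollary \ref{DpQPrLDP} is tailored to a single pair $(\rho,\sigma)$, but the corollary needs a bound for every pair $(\rho_m,\rho_{\tilde m})$ simultaneously, with one common channel $\mathcal{A}$. I would first read $(\rho,\sigma)$ in the statement as the pair attaining the maximum. I would then check that for this choice, $T(\mathcal{A}(\rho_m),\mathcal{A}(\rho_{\tilde m}))$ is still bounded below for the other pairs, which is not automatic. If that fails, I would keep the construction but treat the bound as holding under the paper's convention that the denominator is evaluated on the relevant worst pair. I would also verify that the projector construction stays QPrLDP uniformly over all inputs, not just on $\rho$ and $\sigma$; this point was glossed over in Corollary \ref{DpQPrLDP}.
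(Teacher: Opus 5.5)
Your outline is the paper's own proof step for step: Fuchs--van de Graaf plus Theorem \ref{contractionco} at $\gamma=1$ for the lower bound, and the channel $\mathcal{D}_p\circ\mathcal{M}$ of Corollary \ref{DpQPrLDP} for the upper bound. However, it does not close, and the two obstacles you flag are genuine failures rather than loose ends.

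\emph{Upper bound, admissibility.} Corollary \ref{DpQPrLDP} only examines the single pair $(\rho,\sigma)$ that defines $\Pi_+$, whereas membership in $\mathcal{Q}^{\e,\d}$ is a condition on all input pairs. Write $E=E_{e^{\e}}(\rho\|\sigma)$ with $0<E<1$, $\e>0$, and $p=\frac{2E}{2E+e^{\e}-1}$. Inputs $\tau_\pm$ supported on $\Pi_\pm$ are mapped to $\mathrm{diag}(1-\frac p2,\frac p2)$ and $\mathrm{diag}(\frac p2,1-\frac p2)$. The entry $1-\frac p2-e^{\e}\frac p2$ is positive exactly when $E<1$. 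So the violation probability is $1-\frac p2=\frac{E+e^{\e}-1}{2E+e^{\e}-1}>\frac12$, and the channel is not admissible for smaller $\d$. Making this family admissible for all inputs and small $\d$ requires $\frac{2-p}{p}\le e^{\e}$, i.e.\ $1-p\le\frac{e^{\e}-1}{e^{\e}+1}$. That only yields the weaker denominator with $(e^{\e}+1)^2$ in place of $(e^{\e}-1+2E)^2$.

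\emph{Upper bound, one channel for all pairs.} A single channel has to separate every pair at once, and a two-outcome channel built from one pair can merge another. For $\rho_i=\ket{i}\bra{i}$, $i=0,1,2$, the pair $(\rho_0,\rho_1)$ gives $\Pi_+=\ket{0}\bra{0}$. Hence $\mathcal{A}(\rho_1)=\mathcal{A}(\rho_2)$, so $F=1$ and the error never drops below $\min(p_1,p_2)$. The check you propose therefore fails, and appealing to a convention for the undefined pair $(\rho,\sigma)$ is not an argument. You would need a channel whose outputs distinguish all pairs, and the resulting denominator can no longer be tied to one pair.

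\emph{Lower bound.} The inequality $T(\mathcal{A}(\rho),\mathcal{A}(\sigma))\le(2\d-1)T(\rho,\sigma)$ that you import from Theorem \ref{contractionco} is false whenever $\d<\frac{e^{\e}}{1+e^{\e}}$. Its proof bounds $\mathrm{tr}\rho\{\rho-\gamma\sigma>0\}$ by $\d$ for $\gamma<e^{\e}$. But this quantity is non-increasing in $\gamma$, so the constraint at threshold $e^{\e}$ gives no control at $\gamma=1$.

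Concretely, take the qubit randomized response $\mathcal{R}(\tau)=\sum_{x=0,1}\bra{x}\tau\ket{x}\,[q\ket{x}\bra{x}+(1-q)\ket{1-x}\bra{1-x}]$ with $\max(\d,\frac12)<q<\frac{e^{\e}}{1+e^{\e}}$. Its outputs are diagonal with entries in $[1-q,q]$, so $\mathcal{R}(\tau)-e^{\e}\mathcal{R}(\tau')$ is negative definite for every pair and $\mathcal{R}\in\mathcal{Q}^{\e,0}\subseteq\mathcal{Q}^{\e,\d}$. Yet $T(\mathcal{R}(\ket{0}\bra{0}),\mathcal{R}(\ket{1}\bra{1}))=2q-1>2\d-1$.

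In fact, for $\rho_1=\ket{0}\bra{0}$ and $\rho_2=\ket{1}\bra{1}$ (with positive numerator), the claimed lower bound has denominator $-\ln(2-2\d)\to 0$ as $\d\downarrow\frac12$. Meanwhile $\mathcal{R}$ gives $F=4q(1-q)<1$ and hence a finite sample complexity independent of $\d$. So near $\d=\frac12$ the stated lower bound is false, and no argument can establish it there.

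\emph{Fidelity convention.} Independently, the step $F\ge 1-T$ is the Fuchs--van de Graaf bound for the root fidelity. The $F$ in Lemma \ref{symqhtc} is the squared fidelity (cf.\ $d_B=\sqrt{2(1-\sqrt{F})}$ in Lemma \ref{syqhtc}), for which only $F\ge(1-T)^2$ holds. For example, the distributions $(a,1-a,0)$ and $(a,0,1-a)$ give $F=a^2<a=1-T$. This costs a factor $2$ in the denominator.
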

	\begin{proof}
Based on Lemma \ref{symqhtc} and the proof method of Theorem \ref{bsfb}, we have
the lower bound of $SC_{((\mathcal{A}(\rho_i))_{i=1}^m,(p_i)_{i=1}^m)}^{\e}$,

\begin{align*}
	 SC_{((\mathcal{A}(\rho_i))_{i=1}^m,(p_i)_{i=1}^m)}^{\e}\ge& \inf_{\mathcal{A}\in \mathcal{Q}^{\e,\d}}\max_{m\ne \tilde{m}}\frac{\ln\left(\frac{p_mp_{\tilde{m}}}{\e(p_m+p_{\tilde{m}})}\right)}{-\ln F(\mathcal{A}(\rho_m),\mathcal{A}(\rho_{\hat{m}}))}\\
	 \ge&\inf_{\mathcal{A}\in \mathcal{Q}^{\e,\d}}\max_{m\ne \tilde{m}}\frac{\ln\left(\frac{p_mp_{\tilde{m}}}{\e(p_m+p_{\tilde{m}})}\right)}{-\ln (1-T(\mathcal{A}(\rho_m)||\mathcal{A}(\rho_{\tilde{m}})))}\\
	 \ge&\max_{m\ne \tilde{m}}\frac{\ln\left(\frac{p_mp_{\tilde{m}}}{\e(p_m+p_{\tilde{m}})}\right)}{-\ln (1-(2\d-1)T(\rho_m||\rho_{\tilde{m}}))}
\end{align*}
 
 Next we show the upper bound of $SC_{((\mathcal{A}(\rho_i))_{i=1}^m,(p_i)_{i=1}^m)}^{\e},$ Let $\mathcal{A}=\mathcal{D}_p\circ\mathcal{M}$, and $\mathcal{M}(\cdot)$ is defined as follows,
 \begin{align*}
 	\mathcal{M}(\cdot)=\mathrm{tr}(\Pi_{+}\cdot)\ket{+}\bra{+}+\mathrm{tr}(\Pi_{-}\cdot)\ket{-}\bra{-},
 \end{align*}
 here $\Pi_{+}$ denotes the projection operator $\Pi_{+}=\{\rho-e^{\e}\sigma> 0\}$, $\Pi_{-}=I-\Pi_{+}$, besides, $\ket{+}$ and $\ket{-}$ are pure states with $\ket{+}\perp\ket{-}$.  With a similar analysis, we have
 \begin{align*}
 	SC_{((\mathcal{A}(\rho_i))_{i=1}^m,(p_i)_{i=1}^m)}^{\e}\le& \inf_{\mathcal{A}\in \mathcal{Q}^{\e,\d}}\left\lceil\max_{m\ne \tilde{m}}\frac{2\ln\left(\frac{M(M-1)\sqrt{p_m}\sqrt{p_{\tilde{m}}}}{2\e}\right)}{-\ln F(\rho_m,\rho_{\tilde{m}}) }\right\rceil\\
 	\le &\left\lceil\max_{m\ne \tilde{m}}\frac{2\ln\left(\frac{M(M-1)\sqrt{p_m}\sqrt{p_{\tilde{m}}}}{2\e}\right)}{-\ln \sqrt{1-T^2(\rho||\sigma)} }\right\rceil\\
 	\le& \left\lceil\max_{m\ne \tilde{m}}\frac{2\ln\left(\frac{M(M-1)\sqrt{p_m}\sqrt{p_{\tilde{m}}}}{2\e}\right)}{-\ln (1-\frac{(e^{\e}-1)^2E^2_{e^{\e}}(\r||\s)}{(e^{\e}-1+2E_{e^{\e}}(\rho||\sigma))^2})^{\frac{1}{2}}}\right\rceil
 \end{align*}
 	\end{proof}
 	At last, we present the bounds of sample complexity of asymmetric hypothesis testing for arbitrary couple of quantum states.
 	\begin{Theorem}
 		Assume $\r$ and $\s$ are two states, then 
 		\begin{align*}
	ASC^{\e,\d}(\r,\s,\varphi,\vartheta)\ge	\max\{\frac{\log (2\d-1)T(\rho||\sigma)-(\a-1)\log(\a-1)+\a \log \a}{\a-1},ASC(\r,\s,\varphi,\vartheta)\},\\
		ASC^{\e,\d}(\r,\s,\varphi,\vartheta)\le \min\{\lceil\left(\frac{\ln\frac{\varphi^{}}{\vartheta}}{-\log(1-\frac{e^{\e}-1}{2E_{e^{\e}}(\r||\s)+e^{\e}-1}E_{e^{\e}}(\r||\s))}\right)\rceil,\lceil\left(\frac{\frac{\vartheta}{\varphi}}{-\log(1-\frac{e^{\e}-1}{2E_{e^{\e}}(\r||\s)+e^{\e}-1}E_{e^{\e}}(\s||\r))}\right)\rceil\}.
 		\end{align*}
 	\end{Theorem}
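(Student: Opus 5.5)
The bound has two parts. For the lower bound we combine data processing with the privatized contraction coefficient of Theorem \ref{contractionco}. For the upper bound we exhibit one explicit channel in $\mathcal{Q}^{\e,\d}$, namely the measure-and-depolarize construction of Corollary \ref{DpQPrLDP}. We then feed the resulting output states into standard exponential bounds on the type II error, in the same way as in the proof of Theorem \ref{bsfb}.

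\emph{Lower bound.} The term $ASC(\r,\s,\varphi,\vartheta)$ follows from data processing. For any channel $\mathcal{A}$, the test $M$ applied to $\mathcal{A}(\r)^{\otimes n}$ versus $\mathcal{A}(\s)^{\otimes n}$ pulls back to the test $(\mathcal{A}^{\dagger})^{\otimes n}(M)$ on $\r^{\otimes n}$ versus $\s^{\otimes n}$, with identical error pair. Hence $\beta_{\vartheta}(\mathcal{A}(\r)^{\otimes n}\|\mathcal{A}(\s)^{\otimes n})\ge \beta_{\vartheta}(\r^{\otimes n}\|\s^{\otimes n})$, and taking the infimum over $\mathcal{A}$ gives the claim. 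For the other term we would use a converse relating $\beta_\vartheta$ to a sandwiched R\'enyi quantity of order $\a>1$, in the spirit of Lemma \ref{asyqhtc}. The factor $\frac{\a^{\a}}{(\a-1)^{\a-1}}$ is the constant appearing in Lemma \ref{sandwichedgamma} relating $E_\gamma$ to $Q_\a$. We would chain
\[
E_\gamma(\mathcal{A}(\r)^{\otimes n}\|\mathcal{A}(\s)^{\otimes n})\le \tfrac{(\a-1)^{\a-1}}{\a^{\a}}\gamma^{1-\a}Q_\a^n(\mathcal{A}(\r)\|\mathcal{A}(\s))
\]
with the contraction $T(\mathcal{A}(\r),\mathcal{A}(\s))\le(2\d-1)T(\r,\s)$ from Theorem \ref{contractionco}. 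Taking logarithms and solving for $n$ should give the displayed expression.

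\emph{Upper bound.} We take $\mathcal{A}=\mathcal{D}_{p}\circ\mathcal{M}$ with $\mathcal{M}$ the two-outcome measurement onto $\Pi_+=\{\r-e^{\e}\s>0\}$. We choose $p=\frac{2E_{e^{\e}}(\r\|\s)}{2E_{e^{\e}}(\r\|\s)+e^{\e}-1}$, so that $\mathcal{A}\in\mathcal{Q}^{\e,\d}$ by Corollary \ref{DpQPrLDP}. The outputs are commuting binary distributions whose first components differ by $(1-p)\,\mathrm{tr}\Pi_+(\r-\s)\ge (1-p)E_{e^{\e}}(\r\|\s)$, and $1-p=\frac{e^{\e}-1}{2E_{e^{\e}}(\r\|\s)+e^{\e}-1}$. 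For these classical outputs we bound $\beta_\vartheta$ by a Chernoff/fidelity-type quantity, using $-\log F\ge -\log(1-T)$ so that the rate is $-\log(1-(1-p)E_{e^{\e}}(\r\|\s))$. This yields $n\le \lceil \ln(\varphi/\vartheta)/(\text{rate})\rceil$. The second term in the minimum comes from repeating the construction with the roles of $\r$ and $\s$ swapped, which uses $\{\s-e^{\e}\r>0\}$ and $E_{e^{\e}}(\s\|\r)$.

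\emph{Main obstacle.} The delicate step is the achievability bound. Lemma \ref{asyqhtc} controls $ASC$ through Petz R\'enyi divergences, not trace distance. One must therefore check that, for the binary output distributions, $D_\a$ with $\a\in(0,1)$ (for instance $\a=\tfrac12$) is bounded below by $-\log(1-T)$ up to the constants, and that the resulting numerator matches $\ln\frac{\varphi}{\vartheta}$. We would first try $\a=\tfrac12$, where $D_{1/2}=-\log F$, and apply the Fuchs--van de Graaf inequality. The constants may need adjusting, since the stated form suppresses the $\a'$ exponent.
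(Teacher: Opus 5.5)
The data-processing argument for the $ASC$ term is correct and matches the paper. The other lower-bound term cannot come out of the chain you describe, and there are further gaps in the upper bound.

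\textbf{Lower bound.} The term $\frac{\log(2\d-1)T(\r\|\s)-(\a-1)\log(\a-1)+\a\log\a}{\a-1}$ contains neither $n$ nor $\vartheta,\varphi$. So no ``take logarithms and solve for $n$'' step can produce it.

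The ingredients also do not combine into a converse. Lemma \ref{sandwichedgamma} bounds $E_\gamma$ from above in terms of $Q_\a^n$, i.e.\ it bounds the R\'enyi divergence of the privatized pair from below. Pairing that with an upper bound on $T(\mathcal{A}(\r),\mathcal{A}(\s))$ from Theorem \ref{contractionco} gives nothing.

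A lower bound on $n$ needs an \emph{upper} bound on $\tilde{D}_\a(\mathcal{A}(\r)\|\mathcal{A}(\s))$ that holds uniformly over $\mathcal{Q}^{\e,\d}$. No trace-distance contraction supplies one: for $\d>0$ this divergence can be infinite for channels in $\mathcal{Q}^{\e,\d}$.

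The paper's proof has the same non sequitur. It lower-bounds $\tilde{D}_\a(\mathcal{A}(\r)\|\mathcal{A}(\s))$, even using the contraction in the reverse direction, and then reads this as a lower bound on $ASC^{\e,\d}$.

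The term is in fact false. For $\d=1$ the identity channel lies in $\mathcal{Q}^{\e,1}$, so orthogonal pure $\r,\s$ give $ASC^{\e,1}\le 1$. The term then equals $\frac{\a\log\a}{\a-1}-\log(\a-1)$, which exceeds $1$ at $\a=2$ (natural logarithm) and diverges as $\a\downarrow 1$.

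\textbf{Upper bound.} You use the paper's channel, but the step meant to give the rate, $-\log F\ge-\log(1-T)$, is false. Fuchs--van de Graaf gives only $F\le 1-T^2$ for the squared fidelity, and $1-T^2\ge 1-T$.

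Your own outputs show this. For orthogonal $\r,\s$ we have $E_{e^{\e}}(\r\|\s)=1$ and $1-p=\tanh(\e/2)$, and one finds $F=1-T^2$ exactly. Hence $D_{1/2}=-\log(1-T^2)<-\log(1-T)$.

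Through the trace distance, this route yields at best the rate $-\log(1-(1-p)^2E^2_{e^{\e}}(\r\|\s))$. That is what the paper's proof actually reaches, with a factor $2$, and it is not the displayed expression.

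Adjusting constants will not fix this, because the displayed bound is itself wrong:
\begin{itemize}
\item With logarithms in both numerators, $\ln\frac{\varphi}{\vartheta}$ and $\ln\frac{\vartheta}{\varphi}$ have opposite signs, so the minimum would always be $\le 0$.
\item As printed, the second term has numerator $\frac{\vartheta}{\varphi}$ with no logarithm. It therefore equals $1$ whenever $0<\frac{\vartheta}{\varphi}$ is below its denominator.
\item That value is too small at $\d=0$. Every $\mathcal{A}\in\mathcal{Q}^{\e,0}$ satisfies $\mathcal{A}(\r')\le e^{\e}\mathcal{A}(\s')$ for all inputs, so $T(\mathcal{A}(\r),\mathcal{A}(\s))\le\tanh(\e/2)$. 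One copy then cannot meet both error constraints once $\vartheta+\varphi<1-\tanh(\e/2)$.
\end{itemize}

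Finally, Corollary \ref{DpQPrLDP}, on which your membership claim rests, only checks the single pair $(\r,\s)$ that defines $\Pi_+$. Suppose $E_{e^{\e}}(\r\|\s)<1$ and take inputs $\r',\s'$ with $\mathrm{tr}\Pi_+\r'=1$ and $\mathrm{tr}\Pi_+\s'=0$. They give violation probability $1-\frac{p}{2}$, so $\mathcal{D}_p\circ\mathcal{M}\notin\mathcal{Q}^{\e,\d}$ for $\d<1-\frac{p}{2}$.
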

 	\begin{proof}
 		For the upper bound, we would apply the method in Theorem \ref{bsfb} and Lemma \ref{asyqhtc}. 	Let 
 		\begin{align*}
 			\mathcal{A}(\r)=&\mathcal{D}_p\circ\mathcal{M}(\r)\\
 			=&[(1-p)\mathrm{tr}(\Pi_{+}\rho)+\frac{p}{2}]\ket{+}\bra{+}+[(1-p)\mathrm{tr}(\Pi_{-}\rho)+\frac{p}{2}]\ket{-}\bra{-}\\
 \textit{here}\hspace{8mm}&			\mathcal{M}(\cdot)=\mathrm{tr}(\Pi_{+}\cdot)\ket{+}\bra{+}+\mathrm{tr}(\Pi_{-}\cdot)\ket{-}\bra{-},
 		\end{align*}
 		here $\Pi_{+}$ denotes the projection operator $\Pi_{+}=\{\rho-e^{\e}\sigma> 0\}$, $\Pi_{-}=I-\Pi_{+}$, besides, $\ket{+}$ and $\ket{-}$ are pure states with $\ket{+}\perp\ket{-}$. Based on Corollary \ref{DpQPrLDP}, when $p=\frac{2E_{e^{\e}}(\r||\s)}{2E_{e^{\e}}(\r||\s)+e^{\e}-1}$, $\mathcal{A}$ is $(\epsilon,\delta)$-QPrLDP. When $\a=\frac{1}{2}$,
 		\begin{align*}
 			D_{\a}(\mathcal{A}(\rho)||\mathcal{A}(\sigma))=&\frac{1}{\alpha-1}\log(x_1^{1-\a}x_2^{\a}+y_1^{1-\a}y_2^{\a}
 		)\\
 		\ge&-\frac{1}{2}\log(1-T^2(\mathcal{A}(\r)||\mathcal{A}(\s)))\\
 		\ge&-\frac{1}{2}\log(1-(1-p)^2E^2_{e^{\e}}(\r||\s)),
 		\end{align*}
 		here $x_1=	(1-p)\mathrm{tr}\Pi_{+}\rho+\frac{p}{2},x_2=(1-p)\mathrm{tr}^{}\Pi_{+}\s+\frac{p}{2},y_1=(1-p)\mathrm{tr}^{}\Pi_{-}\rho+\frac{p}{2},y_2=(1-p)\mathrm{tr}^{}\Pi_{-}\s+\frac{p}{2}$. In the second inequality, we apply the Fuchs-van de Graaf inequality \cite{fuchs1999} and $D_{\a}(\mathcal{A}(\r)||\mathcal{A}(\s))$ is nondecreasing when $\a\ge \frac{1}{2}$. Based on Lemma \ref{asyqhtc}, we have
 		\begin{align*}
 			ASC^{\e,\d}(\r,\s,\varphi,\vartheta)\le \min\{\lceil\left(\frac{2\ln\frac{\varphi^{}}{\vartheta}}{\log(1-(1-p)^2E^2_{e^{\e}}(\r||\s))}\right)\rceil,\lceil\left(\frac{2\ln\frac{\vartheta}{\varphi}}{-\log(1-(1-p)^2E^2_{e^{\e}}(\r||\s))}\right)\rceil\}.
 		\end{align*}
 		For the lower bound, with a similar method of Theorem \ref{bsfb}, we have
 		\begin{align*}
 				ASC^{\e,\d}(\r,\s,\varphi,\vartheta)\ge 	ASC(\r,\s,\varphi,\vartheta).
 		\end{align*}
 		Next as when $\alpha\ge 1,$
 		\begin{align*}
	\tilde{D}_{\alpha}(\mathcal{A}(\rho)||\mathcal{A}(\sigma))\ge& \frac{\log T(\mathcal{A}(\rho)||\mathcal{A}(\sigma))-(\a-1)\log(\a-1)+\a \log \a}{\a-1}\\
	\ge&\frac{\log (2\d-1)T(\rho||\sigma)-(\a-1)\log(\a-1)+\a \log \a}{\a-1}.
\end{align*}
 		The first inequality is due to that $\tilde{D}_{\alpha}$ is monotone in terms of $\alpha$ and $s$
 		Hence, we have
 		\begin{align*}
 			ASC^{\e,\d}(\r,\s,\e,\d)\ge	\frac{\log (2\d-1)T(\rho||\sigma)-(\a-1)\log(\a-1)+\a \log \a}{\a-1}.
 		\end{align*}
 	\end{proof}
 \section{Conclusion and Discussion}\label{VII}
 In this manuscript, we introduced and systematically investigated quantum probabilistic local differential privacy. First, we introduced the definition of quantum probabilistic local differential privacy. Then we derived several structural properties of the class of quantum superoperators with quantum probabilistic local differential privacy. We next characterized sufficient conditions when global and local depolarizing channels satisfy quantum probabilistic local differential privacy.
 We further investigated the quantum probabilistic local differential privacy. from information-theoretic perspectives. We subsequently established bounds on the sample complexity of symmetric quantum hypothesis testing under quantum probabilistic local differential privacy. In addition, we derived a lower bound on the probabilistic privacy contraction coefficient in terms of the hockey-stick divergence. These results provide a theoretical foundation for understanding the privacy–distinguishability tradeoff in quantum information processing and may facilitate the development of privacy-preserving quantum superoperators and quantum machine-learning protocols.
 
 The other characterization of quantum probabilistic local differential privacy can be defined as follows, for any couple of quantum states $\rho$ and $\sigma$,  and $\{M_k\}_{k\in \mathcal{O}}$ is an arbitrary POVM with finite outcomes,
 \begin{align}
 	\hat{D}_{\e,\delta}(\mathcal{A}(\r)||\mathcal{A}(\s))=\sum_{k\in T}&	\mathrm{ tr}[M_k\mathcal{A}(\rho)]\le \delta,\hspace{3mm}.\label{f2}
 \end{align}
 Here $T=\{k|\mathrm{tr}M_k\mathcal{A}(\rho)\ge e^{\epsilon}\mathrm{tr}M_k\mathcal{A}(\sigma)\}$. Next $\hat{D}_{\epsilon,\delta}(\mathcal{A}(\rho)||\mathcal{A}(\delta))$ can be characterized as follows,
 \begin{align*}
 		\hat{D}_{\e,\delta}(\mathcal{A}(\rho)||\mathcal{A}(\sigma))=\max_{0\le M\le I}\{\mathrm{tr}M\mathcal{A}(\rho)|\mathrm{ tr}M(\mathcal{A}(\rho)-e^{\e}\mathcal{A}(\sigma))\ge 0\}
 \end{align*}
  its dual formulation can be expressed as 
 \begin{align*}
 	\hat{D}_{\e,\delta}(\mathcal{A}(\rho)||\mathcal{A}(\sigma))=\inf_{\lambda\ge 0}[(1+\lambda)\mathcal{A}(\rho)-\lambda e^{\e}\mathcal{A}(\sigma)]_{+}.
 \end{align*}

 An interesting problem is to study other properties of the quantum probabilistic differential privacy, such as bounds of the sample complexity and the probabilistically private contraction coefficients under some a distance. 
	\section{APPENDIX }
	\begin{Lemma}\label{cnwb}
		Assume $\rho$ is a state, $B$ and $C$ are two Hermitian matrices, then
		\begin{align*}
\mathrm{tr}\rho\{B+C\ge 0\}\le			\mathrm{tr}\rho\{B\ge 0\}+\mathrm{tr}\rho\{C\ge 0\}
		\end{align*}
	\end{Lemma}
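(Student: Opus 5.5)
The approach is to first try to transplant the classical argument and see exactly where it breaks. If $B$ and $C$ commute, they share an eigenbasis $\{\ket{i}\}$ with $B=\sum_i b_i\ket{i}\bra{i}$ and $C=\sum_i c_i\ket{i}\bra{i}$. Then $\{B+C\ge 0\}$ projects onto the indices with $b_i+c_i\ge 0$, and every such index has $b_i\ge 0$ or $c_i\ge 0$. Hence $\{B+C\ge0\}\le\{B\ge0\}+\{C\ge0\}$ as operators, and taking $\mathrm{tr}\rho(\cdot)$ with $\rho\ge0$ gives the claim. I would write out this commuting case in full, since it covers all the classical (diagonal) uses of the lemma.

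The main obstacle is the non-commuting case. Here the operator inequality $\{B+C\ge0\}\le\{B\ge0\}+\{C\ge0\}$ is no longer available, and I expect the statement as written to fail. So my plan at this step is to test it on $2\times 2$ rank-one data rather than force a proof.

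Take unit vectors $\ket{a}=(1,0)^T$ and $\ket{b}=(\cos\theta,\sin\theta)^T$ with $\theta$ small, and write $c=\cos\theta$. For $0<t<1$ set
\begin{align*}
B=\ket{a}\bra{a}-t\ket{a^\perp}\bra{a^\perp},\qquad C=\ket{b}\bra{b}-t\ket{b^\perp}\bra{b^\perp},
\end{align*}
so that $\{B\ge0\}=\ket{a}\bra{a}$ and $\{C\ge0\}=\ket{b}\bra{b}$. Then $B+C=(1+t)(\ket{a}\bra{a}+\ket{b}\bra{b})-2tI$, which has eigenvalues $(1+t)(1\pm c)-2t$. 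Both eigenvalues are positive once $t<(1-c)/(1+c)$, and in that regime $\{B+C\ge0\}=I$.

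Now choose $\rho=\ket{\psi}\bra{\psi}$ with $\ket{\psi}$ at angle $\pi/2+\theta/2$, i.e. orthogonal to the bisector of $\ket{a}$ and $\ket{b}$. The left side equals $1$. The right side equals $|\langle\psi|a\rangle|^2+|\langle\psi|b\rangle|^2=2\sin^2(\theta/2)$, which is $<1$ for small $\theta$. I would verify these numbers carefully, since this is the decisive step.

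If the check goes through, the lemma must be restricted to commuting $B,C$, or to $\rho$ commuting with the relevant projections. Its use in part (i) of the tensor-product theorem would then need to be revisited, for instance by applying it only after pinching in a common eigenbasis.
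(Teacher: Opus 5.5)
Your counterexample is correct, so the lemma as literally stated is false, and you were right not to force a proof. I checked the numbers. $B+C=(1+t)(\ket{a}\bra{a}+\ket{b}\bra{b})-2tI$ has eigenvalues $(1+t)(1\pm\cos\theta)-2t$. Both are positive when $t<(1-\cos\theta)/(1+\cos\theta)$, so the left side equals $1$. On the other side, $|\langle\psi|a\rangle|^2=|\langle\psi|b\rangle|^2=\sin^2(\theta/2)$, so the right side is $1-\cos\theta<1$.

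The failure does not come from $\rho$ being pure. For $\rho=(1-\eta)\ket{\psi}\bra{\psi}+\eta I/2$ the left side is still $1$, while the right side is $(1-\eta)(1-\cos\theta)+\eta<1$. Your commuting-case argument is sound. It is the same argument the paper gives in its appendix lemma on commuting operators.

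The paper's own proof is consistent with your counterexample, because it proves something weaker than the displayed statement. Its first step uses $\mathrm{rank}((B+C)_+)$; this must be replaced by $\mathrm{rank}\{B+C\ge 0\}$ when $B+C$ has a kernel. With that correction, the chain reads
\begin{align*}
\mathrm{tr}\rho\{B+C\ge 0\}&\le\lambda_{\max}(\rho)\,\mathrm{rank}\{B+C\ge 0\}\le\lambda_{\max}(\rho)\bigl(\mathrm{rank}\{B\ge 0\}+\mathrm{rank}\{C\ge 0\}\bigr)\\
&\le\frac{\lambda_{\max}(\rho)}{\lambda_{\min}(\rho)}\,\mathrm{tr}\rho\bigl(\{B\ge 0\}+\{C\ge 0\}\bigr).
\end{align*}
The middle step is Weyl's inequality $\lambda_{k+l+1}(B+C)\le\lambda_{k+1}(B)+\lambda_{l+1}(C)<0$, where $k=\mathrm{rank}\{B\ge0\}$, $l=\mathrm{rank}\{C\ge0\}$, and eigenvalues are in decreasing order. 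In other words, the number of nonnegative eigenvalues is subadditive.

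This is a valid bound for full-rank $\rho$, but it carries the condition number $\lambda_{\max}(\rho)/\lambda_{\min}(\rho)\ge 1$. Your example shows this factor cannot be dropped, and the bound is vacuous for singular $\rho$. The lemma should be restated with that factor.

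This weakened form is exactly what part (i) of the tensor-product theorem uses. The prefactor there, $\frac{\lambda_{\max}(\mathcal{A}_1(\rho_1))\lambda_{\max}(\mathcal{A}_2(\rho_2))}{\lambda_{\min}(\mathcal{A}_1(\rho_1))\lambda_{\min}(\mathcal{A}_2(\rho_2))}$, is the condition number of $\mathcal{A}_1(\rho_1)\otimes\mathcal{A}_2(\rho_2)$. The remaining steps of that chain factorize correctly. So that theorem needs no pinching repair; it simply holds only with the factor it already states.
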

	\begin{proof}
		\begin{align*}
		\mathrm{tr}\rho\{B+C\ge 0\}\le &\lambda_{max}(\r)\rank((B+C)_{+})\\
			\le&\lambda_{max}(\r)(rank(B_{+})+rank(C_{+}))\\
			\le& \frac{\lambda_{max}(\r)}{\lambda_{min}(\r)}\mathrm{tr}\r[\{B\ge 0\}+\{C\ge 0\}],
		\end{align*}
		here the first inequality is due to that $\rho\le \lambda_{max}(\r)I$, the second inequality is due to that $\rank((B+C)_{+})\le rank(B_{+})+rank(C_{+}),$ the last inequality is due to that $\rho\ge \lambda_{min}(\rho)I.$
	\end{proof}
	\begin{Lemma}\label{dtf}
	Assume $\rho,\sigma\in \mathcal{D}(\mathcal{H})$, $\frac{1}{2}||\rho-\sigma||_1\le\eta$, then 
		\begin{align*}
			\mathrm{ tr}\rho\sigma\le 1-\eta^2.
		\end{align*}
	\end{Lemma}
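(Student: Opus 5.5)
The plan is to chain two standard facts: compare $\mathrm{tr}\rho\sigma$ with the fidelity $F(\rho,\sigma)=\|\sqrt{\rho}\sqrt{\sigma}\|_1^2$, then bound the fidelity by the trace distance through the Fuchs--van de Graaf inequality \cite{fuchs1999}. Before doing so, I note that the hypothesis has to be read with the inequality reversed, i.e.\ $\frac{1}{2}\|\rho-\sigma\|_1\ge\eta$ (equivalently, with $\eta=T(\rho\|\sigma)$). As literally written the statement fails: for $\rho=\sigma$ a pure state and $\eta=1$ we have $\frac12\|\rho-\sigma\|_1=0\le 1$, yet $\mathrm{tr}\rho\sigma=1>0=1-\eta^2$. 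An upper bound on the distance can never force an upper bound on the overlap that gets smaller as $\eta$ grows. I will therefore prove the statement under $T(\rho\|\sigma)\ge\eta$, which is the form used in the local depolarizing estimate above.

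\emph{Step 1 (overlap versus fidelity).} Write $\mathrm{tr}\rho\sigma=\mathrm{tr}\sqrt{\sigma}\sqrt{\rho}\sqrt{\rho}\sqrt{\sigma}=\|\sqrt{\rho}\sqrt{\sigma}\|_2^2$. Since the Schatten $2$-norm is dominated by the trace norm, this gives
\begin{align*}
\mathrm{tr}\rho\sigma\le\|\sqrt{\rho}\sqrt{\sigma}\|_1^2=F(\rho,\sigma).
\end{align*}

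\emph{Step 2 (Fuchs--van de Graaf).} The upper Fuchs--van de Graaf inequality states $T(\rho\|\sigma)\le\sqrt{1-F(\rho,\sigma)}$, so $F(\rho,\sigma)\le 1-T^2(\rho\|\sigma)$.

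\emph{Step 3 (conclude).} Under $T(\rho\|\sigma)\ge\eta\ge0$ we have $T^2\ge\eta^2$, and combining Steps 1 and 2 yields
\begin{align*}
\mathrm{tr}\rho\sigma\le F(\rho,\sigma)\le 1-T^2(\rho\|\sigma)\le 1-\eta^2.
\end{align*}

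The only delicate point is not computational. It is the direction of the hypothesis in Step 3, since the monotonicity $1-T^2\le1-\eta^2$ requires $T\ge\eta$. Everything else is standard norm comparison together with a known inequality.
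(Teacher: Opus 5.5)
Your argument is the same chain the paper uses, $\mathrm{tr}\rho\sigma\le F(\rho,\sigma)\le 1-T^2(\rho\|\sigma)$ via Fuchs--van de Graaf. The paper writes $F^2$ for the squared root fidelity, which is your $\|\sqrt{\rho}\sqrt{\sigma}\|_1^2$.

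Your objection to the hypothesis is correct. The paper's final step $1-T^2\le 1-\eta^2$ silently requires $T(\rho\|\sigma)\ge\eta$. So the lemma as stated is false, and your example with $\rho=\sigma$ pure and $\eta=1$ settles it. What the argument actually proves is your corrected version.

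One small correction to your side remark: the local depolarizing theorem also assumes $\frac{1}{2}\|\rho-\sigma\|_1\le\eta$. It therefore inherits the same sign error rather than using your corrected form.
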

	\begin{proof}
		\begin{align*}
			\mathrm{tr}\rho\sigma\le F^2(\rho,\sigma)\le 1-\eta^2,
		\end{align*}
	here	the second inequality is due to the Fuchs-van de Graaf inequality \cite{fuchs1999}.
	\end{proof}
	\begin{Lemma}\label{lnain}
		Assume $\rho$ and $\sigma$  are two states with $\frac{1}{2}||\rho-\sigma||_1\le \chi$, $\gamma\ge 1,$ then
		\begin{align*}
				\inf_{s\in [0,1]}\gamma^{2-s}\mathrm{tr}\rho^s\sigma^{1-s}\le\gamma(1+\chi)
		\end{align*}
	\end{Lemma}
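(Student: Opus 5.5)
Since the infimum is at most its value at any admissible point, I will evaluate $\gamma^{2-s}\mathrm{tr}\rho^s\sigma^{1-s}$ at a convenient $s\in[0,1]$ and bound the result using the trace-distance hypothesis.

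The first choice is the endpoint $s=1$. There the quantity is $\gamma^{1}\mathrm{tr}\rho\,\sigma^{0}$. Here $\sigma^0$ is the projector onto $\supp(\sigma)$, or $I$ under the convention used elsewhere in the paper, so the quantity is at most $\gamma\,\mathrm{tr}\rho=\gamma$. Since $\chi\ge 0$, this already gives $\gamma\le\gamma(1+\chi)$, and the claim follows.

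This makes the trace-distance hypothesis look superfluous. That is consistent with the statement, which is loose: it adds $\gamma\chi$ on top of $\gamma$.

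If one prefers to stay in the interior of $(0,1)$ to avoid the $\sigma^0$ convention, I would instead take $s\in(0,1)$ and use Hölder's inequality. This gives $\mathrm{tr}\rho^s\sigma^{1-s}\le\|\rho^s\|_{1/s}\|\sigma^{1-s}\|_{1/(1-s)}=(\mathrm{tr}\rho)^s(\mathrm{tr}\sigma)^{1-s}=1$. Hence $\gamma^{2-s}\mathrm{tr}\rho^s\sigma^{1-s}\le\gamma^{2-s}$, and letting $s\to1^-$ the infimum is at most $\gamma$. A variant that actually uses $\chi$ would write $\rho=\sigma+(\rho-\sigma)$ at $s=1$ and bound $\mathrm{tr}(\rho-\sigma)\Pi_\sigma\le\frac12\|\rho-\sigma\|_1\le\chi$. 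That gives $\gamma(\mathrm{tr}\sigma+\chi)=\gamma(1+\chi)$, which matches the stated form exactly and is probably the intended route.

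The only delicate point is the endpoint convention for $\sigma^{0}$ when $\sigma$ is not full rank. I would handle it either by the limiting argument above or by noting that the projector satisfies $\Pi_\sigma\le I$. Either way the bound $\le\gamma\le\gamma(1+\chi)$ holds for all $\gamma\ge1$.
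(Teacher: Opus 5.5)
Your proof is correct, and it takes a simpler route than the paper's.

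\emph{The paper's route.} It introduces the positive part $\Delta$ of the difference of the two states and uses $\sigma\le\rho+\Delta$. It then applies operator monotonicity of $x\mapsto x^{t}$, $t\in[0,1]$, twice, namely $\sigma^{1-s}\le(\rho+\Delta)^{1-s}$ and $\rho^{s}\le(\rho+\Delta)^{s}$. This gives the uniform bound $\mathrm{tr}\rho^s\sigma^{1-s}\le\mathrm{tr}(\rho+\Delta)\le 1+\chi$ for every $s\in[0,1]$. Only at the end does it use $\gamma\ge1$ to minimize $\gamma^{2-s}$ at $s=1$.

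\emph{Your route.} You note directly that $\mathrm{tr}\rho^s\sigma^{1-s}\le 1$, by matrix H\"older for $s\in(0,1)$, or by $\sigma^{0}\le I$ at $s=1$. This is strictly stronger than the paper's $1+\chi$ and gives $\inf_{s}\gamma^{2-s}\mathrm{tr}\rho^s\sigma^{1-s}\le\gamma$. It shows that the trace-distance hypothesis plays no role. If you evaluate directly at $s=1$, the assumption $\gamma\ge1$ is not needed either; only $\gamma>0$ is used.

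\emph{A flaw your route avoids.} As written, the paper takes $\Delta=(\rho-\sigma)_+$. But $\sigma\le\rho+(\rho-\sigma)_+$ fails in general: for $\rho=|0\rangle\langle0|$ and $\sigma=|1\rangle\langle1|$, $\rho+(\rho-\sigma)_+=2|0\rangle\langle0|$ does not dominate $\sigma$. The domination needs $\Delta=(\sigma-\rho)_+$. That choice has the same trace, so the paper's bound survives once this is fixed.

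\emph{Comparison.} The paper's approach buys nothing beyond reproducing the stated $(1+\chi)$ factor. Your $\chi$-using variant at $s=1$ is also valid, since $\mathrm{tr}(\rho-\sigma)\Pi_\sigma\le\mathrm{tr}(\rho-\sigma)_+\le\chi$. It is a third argument, however, not the one the paper uses.
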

	\begin{proof}
			Assume $\chi\triangle=(\rho-\sigma)_{+}$ is the positive part of $\rho-\sigma,$ 
		\begin{align*}
			\inf_{s\in [0,1]}\gamma^{2-s}\mathrm{tr}\rho^s\sigma^{1-s}\le&\inf_{s\in [0,1]}\gamma^{2-s}\mathrm{tr}\rho^s(\rho+\chi\triangle)^{1-s}\\
			\le&\inf_{s\in [0,1]}\gamma^{2-s}\mathrm{tr}(\rho+\chi\triangle)\\
			\le& \gamma(1+\chi),
		\end{align*}
		the first inequality is due to that $\frac{1}{2}||\rho-\sigma||_1\le \chi,$ the second inequality is due to that the function $x^s(s\in (0,1))$ is operator monotone, the last inequality is due to that $\mathrm{ tr}\chi\triangle\le\chi$.
	\end{proof}
	\begin{Lemma}\label{na}
		Assume $A$ and $B$ are two positive operators, then for all $s\in [0,1]$, then 
		\begin{align*}
			\mathrm{tr}B\{A-B\ge 0\}\le\inf_{s\in [0,1]}\mathrm{tr}A^sB^{1-s}.
		\end{align*}
	where the infimum takes over all the values in $[0,1].$
	\end{Lemma}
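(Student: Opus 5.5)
The plan is to derive the lemma from Audenaert's inequality \cite{audenaert2007}, the same result already used in the proof of Theorem \ref{tsc}. That inequality states that for positive operators $A,B$ and every $s\in[0,1]$,
\begin{align*}
\frac{1}{2}\mathrm{tr}\left(A+B-|A-B|\right)\le \mathrm{tr}A^sB^{1-s}.
\end{align*}
The whole task is therefore to identify the left-hand side with a quantity that dominates $\mathrm{tr}B\{A-B\ge 0\}$. Once this is done for each fixed $s$, taking the infimum over $s\in[0,1]$ gives the claim.

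First, I would write the spectral decomposition of the Hermitian operator $A-B$. Let $P_+=\{A-B\ge 0\}$ and $P_-=I-P_+$, so $P_-$ is the projector onto the strictly negative eigenspace. Both projectors commute with $A-B$, and
\begin{align*}
(A-B)_+=P_+(A-B)P_+,\qquad (A-B)_-=-P_-(A-B)P_-,\qquad |A-B|=(A-B)_++(A-B)_-.
\end{align*}
This gives
\begin{align*}
\mathrm{tr}|A-B|=\mathrm{tr}P_+(A-B)-\mathrm{tr}P_-(A-B).
\end{align*}

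Second, I would substitute this into the Audenaert left-hand side. Using $\mathrm{tr}A=\mathrm{tr}P_+A+\mathrm{tr}P_-A$, and the same decomposition for $B$, a short computation yields
\begin{align*}
\frac{1}{2}\mathrm{tr}\left(A+B-|A-B|\right)=\mathrm{tr}BP_++\mathrm{tr}AP_-.
\end{align*}
Since $A\ge 0$, the term $\mathrm{tr}AP_-$ is nonnegative. Hence
\begin{align*}
\mathrm{tr}B\{A-B\ge 0\}\le \mathrm{tr}BP_++\mathrm{tr}AP_-\le \mathrm{tr}A^sB^{1-s}
\end{align*}
for every $s\in[0,1]$. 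Taking the infimum over $s$ finishes the argument.

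The only delicate point is the bookkeeping of the zero eigenspace of $A-B$, which is included in $P_+$ here. On that subspace $A-B$ contributes nothing to either $(A-B)_+$ or $(A-B)_-$, so placing it in $P_+$ rather than $P_-$ does not change the identity. The remaining risk is simply quoting Audenaert's inequality in its correct form for unnormalized positive operators, which it covers.
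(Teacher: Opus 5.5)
Your proposal is correct and follows the same route as the paper: apply Audenaert's inequality $\mathrm{tr}A^sB^{1-s}\ge\frac{1}{2}\mathrm{tr}(A+B-|A-B|)$, decompose along $P=\{A-B\ge 0\}$, drop a nonnegative term, and take the infimum over $s$. Your identity $\frac{1}{2}\mathrm{tr}(A+B-|A-B|)=\mathrm{tr}BP+\mathrm{tr}A(I-P)$ is in fact the correct form of the paper's intermediate step. The paper effectively replaces $|A-B|$ by $(A-B)P$, omitting the $-(A-B)(I-P)$ part, and so writes $\frac{1}{2}\mathrm{tr}[(A+B)(I-P)+2BP]$ instead; the final bound $\mathrm{tr}BP$ is unaffected.
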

	\begin{proof}
		Let $P=\{A-B\ge 0\}$, then 
			\begin{align*}
			\mathrm{tr}A^sB^{1-s}\ge& \mathrm{tr}\frac{A+B-|A-B|}{2}\\
			=&\mathrm{ tr}\frac{AP+A(I-P)+B-(A-B)P}{2}\\
			=&\mathrm{ tr}\frac{(A+B)(I-P)+2BP}{2}\ge \mathrm{tr}BP.
		\end{align*}
		The first inequality is based on \cite{audenaert2007}. As the first inequality is valid for all $s\in [0,1]$, we finish the proof.
	\end{proof}
	\begin{Lemma}\label{de}
		Assume $\rho$ and $\sigma$ are two states, and $\gamma\ge 1$, then
		\begin{align*}
		D_{\gamma}(\rho||\sigma)\le E_{\gamma}(\rho||\sigma)+\inf_{s\in [0,1]}\gamma^{2-s}\mathrm{ tr}\rho^s\sigma^{1-s}.
		\end{align*}
		If $\frac{1}{2}||\rho-\sigma||_1\le \epsilon,$ then
		\begin{align*}
			D_{\gamma}\le\epsilon+\gamma(1+\epsilon)
		\end{align*}
	\end{Lemma}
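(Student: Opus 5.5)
Throughout, I read $D_{\gamma}(\rho||\sigma)$ as the quantity $\mathrm{tr}\rho\{\rho-\gamma\sigma\ge 0\}$, as it is used in the proof of Lemma \ref{the gd}. The plan is to split this quantity into a hockey-stick part and a remainder, and to bound the remainder with Lemma \ref{na}.

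\emph{Splitting.} Set $P=\{\rho-\gamma\sigma\ge 0\}$. Then
\begin{align*}
\mathrm{tr}\rho P=\mathrm{tr}(\rho-\gamma\sigma)P+\gamma\,\mathrm{tr}\sigma P .
\end{align*}
On the range of $P$ the operator $\rho-\gamma\sigma$ is nonnegative, and on the kernel eigenspace it contributes nothing. Hence $\mathrm{tr}(\rho-\gamma\sigma)P=\mathrm{tr}(\rho-\gamma\sigma)_{+}=E_{\gamma}(\rho||\sigma)$ exactly.

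\emph{Bounding the remainder.} I apply Lemma \ref{na} with $A=\rho$ and $B=\gamma\sigma$. This gives
\begin{align*}
\gamma\,\mathrm{tr}\sigma\{\rho-\gamma\sigma\ge 0\}\le \inf_{s\in[0,1]}\mathrm{tr}\rho^s(\gamma\sigma)^{1-s}=\inf_{s\in[0,1]}\gamma^{1-s}\mathrm{tr}\rho^s\sigma^{1-s}.
\end{align*}
Since $\gamma\ge 1$, we have $\gamma^{1-s}\le\gamma^{2-s}$ for every $s$. So the infimum is at most $\inf_{s}\gamma^{2-s}\mathrm{tr}\rho^s\sigma^{1-s}$, and adding the two pieces yields the first inequality.

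\emph{Second claim.} Assume $\frac{1}{2}||\rho-\sigma||_1\le\epsilon$. First, $E_{\gamma}(\rho||\sigma)$ is nonincreasing in $\gamma$, because $(\rho-\gamma\sigma)_+$ decreases in trace as $\gamma$ grows: by the variational formula $\mathrm{tr}(X)_+=\max_{0\le M\le I}\mathrm{tr}MX$, each term $\mathrm{tr}M(\rho-\gamma\sigma)$ is decreasing in $\gamma$. Therefore $E_{\gamma}\le E_1=\frac{1}{2}||\rho-\sigma||_1\le\epsilon$. Second, Lemma \ref{lnain} with $\chi=\epsilon$ bounds $\inf_{s}\gamma^{2-s}\mathrm{tr}\rho^s\sigma^{1-s}$ by $\gamma(1+\epsilon)$. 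Combining the two gives $D_{\gamma}\le\epsilon+\gamma(1+\epsilon)$.

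There is no real obstacle. The only points needing care are that the split reproduces $E_{\gamma}$ exactly even though $P$ includes the zero eigenspace, and that $\gamma\ge1$ is used to pass from the factor $\gamma^{1-s}$ to the stated (weaker) factor $\gamma^{2-s}$.
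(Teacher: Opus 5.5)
Your proof is correct and follows the paper's argument. You use the same exact split $\mathrm{tr}\rho P=E_\gamma(\rho\|\sigma)+\gamma\,\mathrm{tr}\sigma P$, bound the remainder with Lemma \ref{na}, and for the second claim combine Lemma \ref{lnain} with $E_\gamma(\rho\|\sigma)\le\frac12\|\rho-\sigma\|_1$. Your version is slightly more careful than the paper's, which cites Lemma \ref{na} directly for the factor $\gamma^{2-s}$, whereas you note that Lemma \ref{na} with $B=\gamma\sigma$ gives the sharper factor $\gamma^{1-s}$ and that $\gamma\ge 1$ is what lets you pass to the stated bound.
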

	\begin{proof}
		\begin{align*}
			D_{\gamma}(\rho||\sigma)=&\mathrm{ tr}\rho\{\rho-\gamma\sigma\ge 0\}\\
			=&\mathrm{ tr}(\rho-\gamma\sigma)\{\rho-\gamma\sigma\ge 0\}+\gamma\mathrm{tr}\sigma\{\rho-\gamma\sigma\ge 0\}\\
			\le &E_{\gamma}(\rho||\sigma)+\inf_{s\in [0,1]}\gamma^{2-s}\mathrm{ tr}\rho^s\sigma^{1-s}.
		\end{align*}
		The last inequality is due to Lemma \ref{na}.

		If $\frac{1}{2}||\rho-\sigma||_1\le \epsilon$, 
		\begin{align*}
				D_{\gamma}\le& E_{\gamma}(\rho||\sigma)+\gamma(1+\epsilon)\\
			\le&\epsilon+\gamma(1+\epsilon),
		\end{align*}
		the first inequality is due to Lemma \ref{lnain}, the lastinequality is due to that $E_{\gamma}(\rho||\sigma)\le\frac{1}{2}||\rho-\sigma||_1\le \epsilon$.
	\end{proof}
	
	\begin{Lemma}\label{l2}
		Assume both $M_1$ and $M_2$ are Hermitian, then 
	 $\supp(M_1)_{+}\otimes\supp(M_2)_{+}=\supp(({M_1})_{+}\otimes ({M_2})_{+}),$ that is, $\{M_1\otimes M_1\ge 0\}=\{M_1\ge 0\}\otimes\{M_2\ge 0\}.$
	\end{Lemma}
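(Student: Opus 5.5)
The plan is to prove the support identity directly from the spectral decompositions of $M_1$ and $M_2$. Then I will explain in what sense the projector reformulation ("that is") follows, since it needs to be read with some care.

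\emph{Support identity.} Write $M_1=\sum_i\lambda_i\ket{i}\bra{i}$ and $M_2=\sum_j\mu_j\ket{j}\bra{j}$ with orthonormal eigenbases. By the definition in Section \ref{II}, $(M_1)_+=\sum_{\lambda_i>0}\lambda_i\ket{i}\bra{i}$; eigenvalue-zero terms contribute nothing. Similarly $(M_2)_+=\sum_{\mu_j>0}\mu_j\ket{j}\bra{j}$. Hence
\begin{align*}
(M_1)_+\otimes(M_2)_+=\sum_{\lambda_i>0,\ \mu_j>0}\lambda_i\mu_j\ket{i}\bra{i}\otimes\ket{j}\bra{j}.
\end{align*}
This is already a spectral decomposition in the product basis $\{\ket{i}\otimes\ket{j}\}$. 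Every coefficient $\lambda_i\mu_j$ appearing in it is strictly positive. So its support is $\mathrm{span}\{\ket{i}\otimes\ket{j}:\lambda_i>0,\mu_j>0\}$. That span equals $\mathrm{span}\{\ket{i}:\lambda_i>0\}\otimes\mathrm{span}\{\ket{j}:\mu_j>0\}=\supp(M_1)_+\otimes\supp(M_2)_+$. Equivalently, the projector onto $\supp((M_1)_+\otimes(M_2)_+)$ is $P_1\otimes P_2$, where $P_1=\sum_{\lambda_i>0}\ket{i}\bra{i}$ and $P_2=\sum_{\mu_j>0}\ket{j}\bra{j}$.

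\emph{Projector reformulation.} This is the step I expect to be the real obstacle. The statement $\{M_1\otimes M_2\ge0\}=\{M_1\ge0\}\otimes\{M_2\ge0\}$ cannot hold literally for arbitrary Hermitian operators. The eigenvalues of $M_1\otimes M_2$ are the products $\lambda_i\mu_j$, and a pair with $\lambda_i<0$ and $\mu_j<0$ gives a positive product. Such eigenvectors belong to $\{M_1\otimes M_2\ge 0\}$ but not to $\{M_1\ge0\}\otimes\{M_2\ge0\}$. A one-dimensional check shows this: take $M_1=M_2=-1$.

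Zero eigenvalues cause a further discrepancy under the "$\ge$" convention. A pair with $\lambda_i=0$ and $\mu_j<0$ gives product $0$, so its eigenvector lies in $\{M_1\otimes M_2\ge0\}$ but not in the right-hand side.

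I would therefore prove the reformulation in the form that follows from the support identity:
\begin{align*}
\{(M_1)_+\otimes(M_2)_+>0\}=\{M_1>0\}\otimes\{M_2>0\}.
\end{align*}
Separately, I would record the inequality $\{M_1>0\}\otimes\{M_2>0\}\le\{M_1\otimes M_2>0\}$. This inequality holds because each product of two strictly positive eigenvalues is strictly positive. It is also the direction relevant for bounding traces against a positive operator such as $\mathcal{A}_1(\rho_1)\otimes\mathcal{A}_2(\rho_2)$.

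The main thing to check is which direction the application in part (i) of the composition theorem actually uses. That proof bounds $\mathrm{tr}\,\mathcal{A}_1(\rho_1)\otimes\mathcal{A}_2(\rho_2)\{\cdots>0\}$ from above by a sum of single-system terms. An upper bound of that kind needs the projector on the left to be dominated by a product of single-system projectors, which is the opposite direction to the inequality above. So I would flag that part (i) requires an additional argument beyond this lemma, for example restricting to $M_1,M_2$ with no negative eigenvalues.
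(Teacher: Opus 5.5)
Your proof of the support identity is the paper's proof: both write $(M_1)_+$ and $(M_2)_+$ in their eigenbases and read off the support of $(M_1)_+\otimes(M_2)_+$ in the product basis. You are slightly more careful in restricting to strictly positive eigenvalues, which the paper's convention for $K_+$ leaves implicit. You part ways with the paper at the ``that is'' clause, and there you are right and the paper is not. Its proof ends with the unjustified equality $\supp((M_1)_+\otimes(M_2)_+)=\supp(M_1\otimes M_2)_+$. This fails exactly on the vectors $\ket{i}\otimes\ket{j}$ with $\lambda_i<0$ and $\mu_j<0$, as your example $M_1=M_2=-1$ shows. So the paper does not establish the projector identity either. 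What the spectral argument actually gives is your corrected identity together with the one-sided bound $\{M_1>0\}\otimes\{M_2>0\}\le\{M_1\otimes M_2>0\}$.

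Your closing remark can be sharpened: part (i) of the composition theorem needs only one factor, not both, to be free of negative eigenvalues. After the splitting, the two projectors in that proof belong to $X_1\otimes\mathcal{A}_2(\rho_2)$ and $e^{\epsilon_1}\mathcal{A}_1(\sigma_1)\otimes X_2$, where $X_k=\mathcal{A}_k(\rho_k)-e^{\epsilon_k}\mathcal{A}_k(\sigma_k)$. Here $X_k$ genuinely has negative eigenvalues, since its trace is $1-e^{\epsilon_k}\le 0$. For any Hermitian $X$ and any $P\ge 0$, your spectral argument gives $\{X\otimes P>0\}=\{X>0\}\otimes\{P>0\}\le\{X>0\}\otimes I$ exactly. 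The reason is that $x_ip_j>0$ with $p_j\ge 0$ forces $x_i>0$ and $p_j>0$. Tracing against $\mathcal{A}_1(\rho_1)\otimes\mathcal{A}_2(\rho_2)$ then yields the second inequality of part (i) directly. That step therefore survives with this one-factor-positive version of the lemma, as long as strict inequalities are used. Under the `$\ge$' convention, a zero eigenvalue of $P$ breaks it again, for the reason you already identified.
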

	\begin{proof}	
	Assume ${M_1}_{+}=\sum_i \lambda_i\ket{i}\bra{i}$ and ${M_2}_{+}=\sum_k\mu_k\ket{k^{'}}\bra{k^{'}}$, then
		\begin{align*}
			&\supp(M_1)_{+}=span\{\ket{i}\},\\ &\supp(M_2)_{+}=span\{\ket{k^{'}}\},\\
		&	\supp(M_1)_{+}\otimes\supp(M_2)_{+}\\
			=&span\{\ket{i}\}\otimes span\{\ket{k^{'}}\},\\
		=&\supp(({M_1})_{+}\otimes ({M_2})_{+})\\
		=&\supp(M_1\otimes M_2)_{+}.
		\end{align*}
Hence, we finish the proof.
	\end{proof}
	\begin{Lemma}
		Assume $A$ and $B$ are semidefinite positive with $[A,B]=0$, then
		\begin{align*}
			\{A\ge 0\}+\{B\ge 0\}=\{A+B\ge 0\}.
		\end{align*}
		However, when $[A,B]\ne 0$, the above equality maybe not valid.
	\end{Lemma}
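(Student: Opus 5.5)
The plan is to simultaneously diagonalize $A$ and $B$ and reduce everything to scalar indicator functions on a common eigenbasis. Since $A,B\ge 0$ and $[A,B]=0$, there is an orthonormal basis $\{\ket{j}\}$ with $A=\sum_j a_j\ket{j}\bra{j}$ and $B=\sum_j b_j\ket{j}\bra{j}$, where $a_j,b_j\ge 0$. In the same basis, $A+B=\sum_j (a_j+b_j)\ket{j}\bra{j}$.

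\emph{Choice of reading.} Literally, for positive semidefinite operators the paper's definition gives $\{A\ge 0\}=\{B\ge 0\}=\{A+B\ge 0\}=I$. The displayed identity would then read $2I=I$, which is false. So the statement can only be meaningful for the projections that distinguish the positive spectrum, namely the support projections $P_A=\{A>0\}$, $P_B=\{B>0\}$ and $P_{A+B}=\{A+B>0\}$. It also requires ``$+$'' to be the combination of projections: $P_A+P_B-P_AP_B$, the lattice join. I will prove the lemma in this form. In the commuting case this combination is automatically a projection.

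\emph{Commuting case.} In the common eigenbasis, $P_A=\sum_j \mathbf{1}[a_j>0]\ket{j}\bra{j}$ and $P_B=\sum_j \mathbf{1}[b_j>0]\ket{j}\bra{j}$, while $P_AP_B=\sum_j \mathbf{1}[a_j>0]\mathbf{1}[b_j>0]\ket{j}\bra{j}$. Hence
\begin{align*}
P_A+P_B-P_AP_B=\sum_j \big(\mathbf{1}[a_j>0]+\mathbf{1}[b_j>0]-\mathbf{1}[a_j>0]\mathbf{1}[b_j>0]\big)\ket{j}\bra{j}=\sum_j \mathbf{1}[a_j>0\ \text{or}\ b_j>0]\ket{j}\bra{j}.
\end{align*}
Because $a_j,b_j\ge 0$, we have $a_j+b_j>0$ if and only if $a_j>0$ or $b_j>0$. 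This uses positivity in an essential way, since no cancellation can occur. The right-hand side is therefore exactly $\{A+B>0\}$. The step that needs care is this bookkeeping of index sets: simultaneous diagonalizability is what lets each spectral projection be written as a function of the same basis vectors.

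\emph{Noncommuting case.} For the ``however'' clause I will give an explicit counterexample with positive semidefinite operators. Take rank-one projections $A=\ket{0}\bra{0}$ and $B=\ket{\psi}\bra{\psi}$ on $\mathbb{C}^2$, with $\ket{\psi}=(\ket{0}+\ket{1})/\sqrt2$. Then $[A,B]\neq 0$ and $A+B$ has full rank, so $\{A+B>0\}=I$.

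On the other side, $P_AP_B=\ket{0}\braket{0|\psi}\bra{\psi}$ is not Hermitian. Hence $P_A+P_B-P_AP_B$ is not Hermitian and cannot equal $I$. The plain operator sum $P_A+P_B$ also fails, since it has an eigenvalue $1+1/\sqrt2>1$ and so is not a projection.

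I expect the main obstacle to be this interpretive step rather than any computation: fixing which projections and which ``sum'' make the identity nontrivial. Once that is settled, both the proof and the counterexample are short $2\times 2$ or diagonal computations.
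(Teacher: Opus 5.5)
Your commuting-case computation is the same simultaneous-diagonalization argument the paper uses. You are also right that the display is false as written. The paper's own proof switches without comment to the support projections $\{\cdot>0\}$ and proves only the inequality $\{A>0\}+\{B>0\}\ge\{A+B>0\}$. Its difference $\sum_{\mu_i>0}\ket{i}\bra{i}+\sum_{\phi_i>0}\ket{i}\bra{i}-\sum_{\mu_i+\phi_i>0}\ket{i}\bra{i}$ is, in your notation, exactly $P_AP_B\ge 0$. The paper gives no argument for the ``however'' clause.

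The gap is in your noncommuting half. The expression $P_A+P_B-P_AP_B$ equals the lattice join only when $P_A$ and $P_B$ commute. The true join $P_A\vee P_B$, the projection onto $\mathrm{ran}P_A+\mathrm{ran}P_B$, coincides with $\{A+B>0\}$ for \emph{all} positive semidefinite $A,B$. Indeed, $\bra{v}(A+B)\ket{v}=0$ forces $\bra{v}A\ket{v}=\bra{v}B\ket{v}=0$, hence $Av=Bv=0$, so $\ker(A+B)=\ker A\cap\ker B$. In your own example, $P_A\vee P_B=I=\{A+B>0\}$. So under the reading you declared, the hypothesis $[A,B]=0$ does no work and the ``however'' clause is false. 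Your example only shows that a formula stops being a projection once the projections fail to commute, which says nothing about supports. The remark that $P_A+P_B$ has eigenvalue $1+1/\sqrt2$ does not separate the cases either, since that already happens for $A=B$.

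To make both halves substantive, use the reading the paper actually proves: $\{A>0\}+\{B>0\}\ge\{A+B>0\}$. Your diagonal computation gives it immediately when $[A,B]=0$, since the deficit is $P_AP_B\ge0$. Your example does refute it in the noncommuting case, but through the other eigenvalue: $P_A+P_B$ has eigenvalue $1-1/\sqrt2<1$, so $P_A+P_B\not\ge I=\{A+B>0\}$.
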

	
	\begin{proof}
	As $[A,B]=0$, we can always assume $A=\sum_i \mu_i \ket{i}\bra{i}$ and $B=\sum_k\phi_k\ket{k}\bra{k}$,
	\begin{align*}
		\{A>0\}+\{B>0\}-\{A+B>0\}=&\sum_{\{i|\mu_i>0\}}\ket{i}\bra{i}+\sum_{\{i|\phi_i>0\}}\ket{i}\bra{i}-\sum_{\{i|\mu_i+\phi_i>0\}}\ket{i}\bra{i}\\
		\ge& 0.
	\end{align*}
	If $k\in \{i|\mu_i+\phi_i>0\},$ $\mu_k>0$ or $\phi_k>0$. Hence, the inequality is valid.
	\end{proof}
	\begin{Lemma}\label{lds}
		Assume $\gamma>0,$ $\rho$, $\sigma$ and $\phi$ are quantum states, then 
		\begin{itemize}
			\item[(i)]
			\begin{align*}
				D_{\gamma}(\rho||\sigma)+D_{\frac{1}{\gamma}}(\sigma||\rho)\ge \gamma+E_{\gamma}(\rho||\sigma),
			\end{align*}
			\item[(ii)] if $\rho=\sum_x p_x\rho_x$ and $\sigma=\sum_x p_x\sigma_x$, 
				\begin{align*}
				D_{\gamma}(\rho||\sigma)\le \sum_xp_x\mathrm{tr}(\rho_x-\gamma\sigma_x)_{+}+\inf_{s\in [0,1]}\gamma^{1-s}\mathrm{ tr}\rho^s\sigma^{1-s}.
			\end{align*}
			\item[(iii)] \begin{align*}
				D_{\gamma}(\rho\otimes\phi||\sigma\otimes\phi)=D_{\gamma}(\rho||\sigma)
			\end{align*}
		\end{itemize}
	\end{Lemma}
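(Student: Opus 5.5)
For part (iii) the plan is a direct spectral computation. For part (ii) I reduce to convexity of the hockey-stick divergence plus Lemma \ref{na}. For part (i) I use a projector inequality, which on checking only gives the stated bound for $\gamma\le 1$. Throughout I use the definition $D_{\gamma}(\rho||\sigma)=\mathrm{tr}\rho\{\rho-\gamma\sigma\ge 0\}$, as in the proof of Lemma \ref{de}. I write $P=\{\rho-\gamma\sigma\ge 0\}$ and $Q=\{\gamma\sigma-\rho\ge 0\}$. The basic splitting, already used in Lemma \ref{de}, is
\begin{align*}
D_{\gamma}(\rho||\sigma)=\mathrm{tr}(\rho-\gamma\sigma)P+\gamma\mathrm{tr}\sigma P=E_{\gamma}(\rho||\sigma)+\gamma\mathrm{tr}\sigma P.
\end{align*}

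For (i), I note that $D_{\frac{1}{\gamma}}(\sigma||\rho)=\mathrm{tr}\sigma\{\sigma-\frac{1}{\gamma}\rho\ge 0\}=\mathrm{tr}\sigma Q$. The projectors $P$ and $Q$ are spectral projectors of the single Hermitian operator $\rho-\gamma\sigma$. They cover the nonnegative and nonpositive parts of its spectrum, so $P+Q\ge I$, with the zero eigenspace counted twice. Hence
\begin{align*}
D_{\gamma}(\rho||\sigma)+D_{\frac{1}{\gamma}}(\sigma||\rho)=E_{\gamma}(\rho||\sigma)+\gamma\mathrm{tr}\sigma P+\mathrm{tr}\sigma Q.
\end{align*}
When $\gamma\le 1$, we have $\gamma\mathrm{tr}\sigma P+\mathrm{tr}\sigma Q\ge\gamma\mathrm{tr}\sigma(P+Q)\ge\gamma$, which is the claim. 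I expect this step to be the main obstacle, because for $\gamma>1$ the same argument only yields the lower bound $1+E_{\gamma}(\rho||\sigma)$. In fact the inequality as stated fails for $\gamma>1$. Take $\rho=\sigma$ of full rank and $\gamma=2$. Then $P=0$, $Q=I$ and $E_{\gamma}=0$, so the left side equals $1<2$. I would therefore prove (i) under the hypothesis $\gamma\in(0,1]$, and in general with $\gamma$ replaced by $\min(\gamma,1)$.

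For (ii), I start from the splitting above and bound the two terms separately. For the first term I use the variational formula $\mathrm{tr}(X)_{+}=\max_{0\le M\le I}\mathrm{tr}MX$. This gives $\mathrm{tr}(\sum_x p_xX_x)_{+}\le\sum_x p_x\mathrm{tr}(X_x)_{+}$ with $X_x=\rho_x-\gamma\sigma_x$. For the second term I apply Lemma \ref{na} with $A=\rho$ and $B=\gamma\sigma$. This gives $\gamma\mathrm{tr}\sigma P\le\inf_{s\in[0,1]}\mathrm{tr}\rho^s(\gamma\sigma)^{1-s}=\inf_{s\in[0,1]}\gamma^{1-s}\mathrm{tr}\rho^s\sigma^{1-s}$. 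Adding the two bounds gives (ii).

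For (iii), I do not rely on Lemma \ref{l2}, since it ignores the kernel of the second factor. Instead I diagonalize $\rho-\gamma\sigma=\sum_i\lambda_i\ket{i}\bra{i}$ and $\phi=\sum_k\mu_k\ket{k}\bra{k}$. The operator $(\rho-\gamma\sigma)\otimes\phi$ has eigenvalues $\lambda_i\mu_k$, and $\lambda_i\mu_k\ge 0$ exactly when $\lambda_i\ge 0$ or $\mu_k=0$. Writing $\Pi_{\phi}$ for the support projector of $\phi$, this gives
\begin{align*}
\{\rho\otimes\phi-\gamma\sigma\otimes\phi\ge 0\}=P\otimes\Pi_{\phi}+I\otimes(I-\Pi_{\phi}).
\end{align*}
Pairing with $\rho\otimes\phi$, the second term vanishes because $\mathrm{tr}\phi(I-\Pi_{\phi})=0$. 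The first term gives $\mathrm{tr}\rho P\cdot\mathrm{tr}\phi\Pi_{\phi}=D_{\gamma}(\rho||\sigma)$, which proves (iii).
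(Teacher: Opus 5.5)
Your proposal is correct. It follows the paper's route: in all parts you use the splitting $D_{\gamma}(\rho\|\sigma)=E_{\gamma}(\rho\|\sigma)+\gamma\,\mathrm{tr}\sigma\{\rho-\gamma\sigma\ge 0\}$. In (ii), as in the paper, the two terms are bounded by convexity of $\mathrm{tr}(\cdot)_{+}$ and by Lemma \ref{na}. In (iii) you use a direct spectral computation.

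\textbf{Part (i).} Your objection is right. With $\rho=\sigma$ the left side equals $0+1=1$ while the right side is $\gamma$, so the inequality fails for every $\gamma>1$. The paper's proof breaks where you expected. It writes $\gamma\,\mathrm{tr}\sigma(I-\{\sigma-\frac{1}{\gamma}\rho>0\})\ge\gamma-D_{\frac{1}{\gamma}}(\sigma\|\rho)$. The left side equals $\gamma-\gamma\,\mathrm{tr}\sigma\{\sigma-\frac{1}{\gamma}\rho>0\}$, so the correct bound is $\gamma-\gamma D_{\frac{1}{\gamma}}(\sigma\|\rho)$. The factor $\gamma$ in front of $D_{\frac{1}{\gamma}}$ has been silently dropped, which is legitimate only for $\gamma\le 1$.

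What holds for all $\gamma>0$ is $D_{\gamma}(\rho\|\sigma)+\gamma D_{\frac{1}{\gamma}}(\sigma\|\rho)\ge\gamma+E_{\gamma}(\rho\|\sigma)$. This is your $P+Q\ge I$ argument with both terms weighted by $\gamma$. It is equivalent in strength to your $\min(\gamma,1)$ version when $\gamma\le 1$ and is the more natural fix.

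\textbf{Part (iii).} Your treatment is more careful than the paper's. The paper implicitly factorizes $\{(\rho-\gamma\sigma)\otimes\phi\ge 0\}$ as $\{\rho-\gamma\sigma\ge 0\}\otimes I$ (in the spirit of Lemma \ref{l2}), which is only valid when $\phi$ has full rank. Your extra term $I\otimes(I-\Pi_{\phi})$ is exactly what is needed in general. Since it is annihilated by $\phi$, the conclusion is unaffected.
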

	\begin{proof}
		\begin{itemize}
			\item[(i).] 
			\begin{align*}
				&D_{\gamma}(\rho||\sigma)\\
				=&\mathrm{tr}\rho\{\rho-\gamma\sigma\ge 0\}\\
				=&\mathrm{tr}(\rho-\gamma\sigma)\{\rho-\gamma\sigma\ge 0\}+\gamma\mathrm{tr}\sigma(I-\{\sigma-\frac{1}{\gamma}\rho> 0\})\\
				\ge& E_{\gamma}(\rho||\sigma)+\gamma-D_{\frac{1}{\gamma}}(\sigma||\rho),
			\end{align*}
			
			\item[(ii)]\begin{align*}
				D_{\gamma}(\rho||\sigma)=&\mathrm{ tr}\rho\{\rho-\gamma\sigma\ge0\}\\
				=&\mathrm{ tr}[\sum_x p_x(\rho_x-\gamma\sigma_x)\{\sum_x p_x(\rho_x-\gamma\sigma_x)\ge 0\}]\\+&\mathrm{ tr}[\sum_x p_x\gamma\sigma_x\{\sum_x p_x(\rho_x-\gamma\sigma_x)\ge 0\}]\\
			\le&\sum_xp_x\mathrm{tr}(\rho_x-\gamma\sigma_x)_{+}+\inf_{s\in [0,1]}\gamma^{1-s}\mathrm{ tr}\rho^s\sigma^{1-s}.
			\end{align*}
			here the first inequality is due to Lemma \ref{l2}, the last equality is due to the definiton of $D_{\gamma}(\cdot||\cdot)$.
			\item[(iii).]
			\begin{align*}
				D_{\gamma}(\rho\otimes\phi||\sigma\otimes\phi)=&\mathrm{tr}[\rho\otimes\phi\{\rho\otimes\phi-\gamma\sigma\otimes\phi\}]\\
				=&\mathrm{tr}[\rho\otimes\phi\{(\rho-\gamma\sigma)\otimes\phi\ge 0\}]
\\=&\mathrm{tr}\rho\{\rho-\gamma\sigma\}=D_{\gamma}(\rho||\sigma).
			\end{align*}
		\end{itemize}
	\end{proof}
	 
	\begin{Lemma}\label{lmh}
		Assume $\rho$ and $\sigma$ are two states, $\m\ge 1$, then for any $n\in \mathbb{N}$,
		\begin{align*}
			E_m(\rho^{\otimes n}||\sigma^{\otimes n})\ge E^n_{m}(\rho||\sigma).
		\end{align*}
	\end{Lemma}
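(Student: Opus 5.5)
The plan is to lower-bound the left-hand side by evaluating a single well-chosen test operator, namely the $n$-fold tensor power of the optimal single-copy test. This works because the hockey-stick divergence has the variational form
\begin{align*}
E_{\gamma}(\rho||\sigma)=\mathrm{tr}(\rho-\gamma\sigma)_{+}=\max_{0\le M\le I}\mathrm{tr}M(\rho-\gamma\sigma).
\end{align*}
This holds because the maximum is attained at the projector $\{\rho-\gamma\sigma>0\}$ (equivalently $\{\rho-\gamma\sigma\ge 0\}$). Hence any admissible $0\le M\le I$ on $\mathcal{H}^{\otimes n}$ gives a lower bound on $E_m(\rho^{\otimes n}||\sigma^{\otimes n})$.

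\emph{Step 1 (choice of test).} Let $P=\{\rho-m\sigma>0\}$ and set $a=\mathrm{tr}P\rho$ and $b=\mathrm{tr}P\sigma$. Then $x:=a-mb=E_m(\rho||\sigma)\ge 0$. The operator $P^{\otimes n}$ is a projector, so $0\le P^{\otimes n}\le I^{\otimes n}$, and by multiplicativity of the trace over tensor products
\begin{align*}
E_m(\rho^{\otimes n}||\sigma^{\otimes n})\ge \mathrm{tr}P^{\otimes n}(\rho^{\otimes n}-m\sigma^{\otimes n})=a^n-mb^n.
\end{align*}

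\emph{Step 2 (scalar inequality).} It remains to show $a^n-mb^n\ge (a-mb)^n=x^n$ whenever $m\ge 1$, $b\ge 0$ and $x\ge 0$. Write $a=x+mb$ and expand $(x+mb)^n$ binomially. All terms are nonnegative, so dropping the mixed terms gives
\begin{align*}
a^n\ge x^n+(mb)^n=x^n+m^nb^n\ge x^n+mb^n.
\end{align*}
The last inequality uses $m^n\ge m$ for $m\ge 1$ and $n\ge 1$. Rearranging gives $a^n-mb^n\ge x^n=E_m^n(\rho||\sigma)$, which combined with Step 1 proves the lemma.

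\emph{Main obstacle.} The only delicate point is the direction of the scalar inequality. For $m<1$ the inequality $m^n\ge m$ fails, and the product test can fail as well; this is exactly why the hypothesis $\mu\ge 1$ is needed. I would also check the edge cases $E_m(\rho||\sigma)=0$ and $n=1$, both of which are trivial. No commutativity between $\rho$ and $\sigma$ is required, since the argument uses only the single-copy optimal projector and the factorization $\mathrm{tr}(P^{\otimes n}\rho^{\otimes n})=(\mathrm{tr}P\rho)^n$.
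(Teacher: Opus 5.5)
Your proof is correct and is essentially the paper's argument: you test $\rho^{\otimes n}-m\sigma^{\otimes n}$ against the $n$-fold tensor power of the optimal single-copy projector, then drop the mixed binomial terms in $(x+mb)^n$ and use $m^n\ge m$. Your closing remark is off: $m\ge 1$ is needed only for this chain of estimates, not for the lemma, since the test $P\otimes I^{\otimes(n-1)}$ together with $0\le E_m(\rho\|\sigma)\le 1$ already gives $E_m(\rho^{\otimes n}\|\sigma^{\otimes n})\ge E_m(\rho\|\sigma)\ge E_m^n(\rho\|\sigma)$ for every $m\ge 0$ (and for states your product test in fact still works when $m<1$, because then $E_m(\rho\|\sigma)\ge 1-m$ and $\mathrm{tr}P\sigma\le 1$).
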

	\begin{proof}
		\begin{align*}
			E_m(\rho^{\otimes n}||\sigma^{\otimes n})=&\mathrm{tr}(\rho^{\otimes n}-m\sigma^{\otimes n})_{+}\\
			=&\max_{0\le \Lambda\le I}\mathrm{tr}M(\rho^{\otimes n}-m\sigma^{\otimes n})\\
			\ge&\mathrm{tr}N^{\otimes n}({\rho}^{\otimes n}-m\sigma^{\otimes n})\\
			=&\mathrm{ tr}^nN\rho-m\mathrm{ tr}^nN\sigma\\
			=&\mathrm{ tr}^n(N\rho-mN\sigma+mN\sigma)-m\mathrm{ tr}^nN\sigma\\
			\ge&\mathrm{ tr}^n(N\rho-mN\sigma)+m^n\mathrm{ tr}^nN\sigma-m\mathrm{ tr}^nN\sigma\\
			\ge&\mathrm{tr}^nN(\rho-m\sigma)=E_m^n(\rho||\sigma).
		\end{align*}
		In the first inequality, $N$ is the optimal such that $E_m(\rho||\sigma)=\mathrm{ tr}N(\rho-m\sigma)$, the last inequality is due to that $m\ge 1.$
	\end{proof}

	\begin{Lemma}\label{sandwichedgamma}
		Assume $\rho$ and $\sigma$ are two states, and $\alpha,\gamma>1$, then
		\begin{align*}
	E_{\gamma}(\rho^{\otimes n}||\sigma^{\otimes n})\le \frac{(\a-1)^{\a-1}}{\a^{\a}}e^{n(\a-1)\tilde{D}_{\a}(\r||\s)}\gamma^{1-\a}.
		\end{align*}
		When $n=1,$
		\begin{align*}
			E_{\gamma}(\rho||\sigma)\le \frac{(\a-1)^{\a-1}}{\a^{\a}}e^{(\a-1)\tilde{D}_{\alpha}(\rho||\s)}\gamma^{1-\a}.
		\end{align*}
		Besides, when $\a\in (1,2],$
		\begin{align*}
			E_{\gamma}(\rho^{\otimes n}||\sigma^{\otimes n})\le \frac{(\a-1)^{\a-1}}{\a^{\a}}e^{n(\a-1){D}_{\a}(\r||\s)}\gamma^{1-\a}.
		\end{align*}
	\end{Lemma}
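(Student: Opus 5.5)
The plan is to reduce the quantum statement to a scalar inequality via the optimal Neyman--Pearson measurement, and then return to the quantum setting through data processing and additivity of $\tilde{D}_\alpha$. Throughout I use the standard normalization $Q_\alpha(\rho\|\sigma)=e^{(\alpha-1)\tilde{D}_\alpha(\rho\|\sigma)}$, which is the form the exponent in the statement requires. The same argument will cover all three claims.

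\emph{Step 1 (scalar inequality).} For $t\ge 0$, $\gamma>0$ and $\alpha>1$ I will show
\begin{align*}
(t-\gamma)_{+}\le c_\alpha\, t^{\alpha}\gamma^{1-\alpha},\qquad c_\alpha=\frac{(\alpha-1)^{\alpha-1}}{\alpha^{\alpha}}.
\end{align*}
Substituting $t=\gamma x$ reduces this to $\max_{x\ge 1}(x-1)x^{-\alpha}=c_\alpha$. That maximum is a one-variable calculus computation, attained at $x=\alpha/(\alpha-1)$. Applying it with $t=p/q$ and multiplying by $q$ gives, for nonnegative numbers $p,q$,
\begin{align*}
(p-\gamma q)_+\le c_\alpha\gamma^{1-\alpha}p^\alpha q^{1-\alpha}.
\end{align*}
The case $q=0$ is covered by the support convention, under which the right-hand side is infinite.

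\emph{Step 2 (classical reduction).} Let $P=\{\omega-\gamma\tau\ge 0\}$ for states $\omega,\tau$, and let $\mathcal{M}$ be the two-outcome measurement channel built from $\{P,I-P\}$. Then $E_\gamma(\mathcal{M}(\omega)\|\mathcal{M}(\tau))\ge \mathrm{tr}P(\omega-\gamma\tau)=E_\gamma(\omega\|\tau)$, and the reverse inequality is data processing. Hence $E_\gamma(\omega\|\tau)$ equals the classical hockey-stick divergence of the two binary distributions. Summing Step 1 over the two outcomes bounds this by $c_\alpha\gamma^{1-\alpha}Q_\alpha(\mathcal{M}(\omega)\|\mathcal{M}(\tau))$, where all Rényi quantities coincide on commuting arguments. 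Data processing of the sandwiched Rényi divergence for $\alpha>1$ \cite{beigi2013} then gives $Q_\alpha(\mathcal{M}(\omega)\|\mathcal{M}(\tau))\le Q_\alpha(\omega\|\tau)$. Taking $\omega=\rho,\tau=\sigma$ proves the $n=1$ claim.

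\emph{Step 3 (tensor powers and the Petz case).} Apply Step 2 to $\omega=\rho^{\otimes n},\tau=\sigma^{\otimes n}$, then use additivity $\tilde{D}_\alpha(\rho^{\otimes n}\|\sigma^{\otimes n})=n\tilde{D}_\alpha(\rho\|\sigma)$ \cite{Müller2013}. For the Petz version with $\alpha\in(1,2]$, repeat Step 2 using Petz data processing, which holds for $\alpha\in(0,2]$. The Petz quantity is also additive under tensor products, since $\mathrm{tr}(\rho^{\otimes n})^\alpha(\sigma^{\otimes n})^{1-\alpha}=(\mathrm{tr}\rho^\alpha\sigma^{1-\alpha})^n$.

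The main obstacle is Step 2. It requires checking that the projector attaining $E_\gamma$ makes the reduction exact, rather than merely an inequality in the wrong direction, and it requires the data-processing inequality in the correct range of $\alpha$.
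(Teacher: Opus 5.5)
Your proposal is correct and follows essentially the same route as the paper. Both arguments reduce to the binary distributions produced by the Neyman--Pearson projector $\{\rho-\gamma\sigma\ge 0\}$, apply data processing of $\tilde{D}_\alpha$ (or of $D_\alpha$ for $\alpha\in(1,2]$), use a one-variable maximization giving the constant $\frac{(\alpha-1)^{\alpha-1}}{\alpha^{\alpha}}$, and finish with additivity for tensor powers. The only cosmetic difference is that you optimize the ratio $p/(\gamma q)$ in a pointwise scalar inequality, whereas the paper first eliminates $p$ via $p\le e^{\frac{\alpha-1}{\alpha}\tilde{D}_\alpha(\rho\|\sigma)}q^{\frac{\alpha-1}{\alpha}}$ and then maximizes over $q$.
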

	\begin{proof}
		Let $M=\{\rho-\gamma\sigma\ge 0\}$, and $p=\mathrm{tr}M\rho,$ $q=\mathrm{tr}M\sigma$, then
		\begin{align*}
			\tilde{D}_{\alpha}(\rho||\sigma)=&\frac{1}{\alpha-1}\log\mathrm{tr}(\sigma^{\frac{1-\alpha}{2\alpha}}\rho\sigma^{\frac{1-\alpha}{2\alpha}})^{\alpha}\\
			\ge &\frac{1}{\alpha-1}\log[p^{\alpha}q^{1-\alpha}+(1-p)^{\alpha}(1-q)^{1-\alpha}]\ge \frac{1}{\alpha-1}\log p^{\alpha}q^{1-\alpha},\\
			\Longrightarrow&\tilde{D}_{\alpha}(\rho||\sigma)\ge \frac{\a}{\alpha-1}\log p-\log q,\\
			\Longleftrightarrow& p\le e^{\frac{\alpha-1}{\alpha}\tilde{D}_{\alpha}(\rho||\sigma)}q^{\frac{\alpha-1}{\alpha}}.
		\end{align*}
		Then
		\begin{align*}
			E_{\gamma}(\rho||\sigma)=p-\gamma q\le e^{\frac{\alpha-1}{\alpha}\tilde{D}_{\alpha}(\rho||\sigma)}q^{\frac{\alpha-1}{\alpha}}-\gamma q,
		\end{align*}
		let $h(q)=e^{\frac{\alpha-1}{\alpha}\tilde{D}_{\alpha}(\rho||\sigma)}q^{\frac{\alpha-1}{\alpha}}-\gamma q$, 
		\begin{align*}
			h^{'}(q)=& \frac{\alpha-1}{\alpha}e^{\frac{\alpha-1}{\alpha}\tilde{D}_{\alpha}(\rho||\sigma)}q^{-\frac{1}{\alpha}}-\gamma,\\
			h^{''}(q)=&  -\frac{\alpha-1}{\alpha^2}e^{\frac{\alpha-1}{\alpha}\tilde{D}_{\alpha}(\rho||\sigma)}q^{-\frac{1}{\alpha}-1}\le0,
		\end{align*}
		From the above analysis, $h(q)$ is attains the maximum when $h^{'}(q)=0,$ 
		\begin{align*}
			\max_qh(q)=\frac{(\a-1)^{\a-1}}{\a^{\a}}e^{(\a-1)\tilde{D}_{\a}(\r||\s)}\gamma^{1-\a}.
		\end{align*}
		Hence,
		\begin{align}
			E_{\gamma}(\rho||\sigma)\le \frac{(\a-1)^{\a-1}}{\a^{\a}}e^{(\a-1)\tilde{D}_{\a}(\r||\s)}\gamma^{1-\a}. \label{af1}
		\end{align}
		Based on the above analysis and the additivity of $\tilde{D}_{\a}(\cdot||\cdot)$ when $\alpha>1,$
		\begin{align*}
			E_{\gamma}(\rho^{\otimes n}||\sigma^{\otimes n})\le \frac{(\a-1)^{\a-1}}{\a^{\a}}e^{n(\a-1)\tilde{D}_{\a}(\r||\s)}\gamma^{1-\a}.
		\end{align*}
		
		As the Petz-Renyi relative entropy $D_{\alpha}(\rho||\sigma)=\frac{1}{\alpha-1}\log\mathrm{tr}\rho^{\alpha}\sigma^{1-\alpha}$ satisfies data processing inequality when $\alpha\in (1,2],$ by a simple method above, we can finish the proof.
	\end{proof}

	\bibliographystyle{IEEEtran}
	\bibliography{ref}
	\end{document}